\def\Comment#1{\textsl{$\langle\!\langle$#1\/$\rangle\!\rangle$}}
\newcommand{\myparskip}{3pt}
\newtheorem{lemma}{Lemma}[section]
\newtheorem{theorem}[lemma]{Theorem}
\newtheorem{definition}[lemma]{Definition}
\newtheorem{corollary}[lemma]{Corollary}
\newtheorem{proposition}[lemma]{Proposition}
\newtheorem{remark}[lemma]{Remark}
\newcommand{\opt}{\textsc{OPT}}
\newcommand{\etal}{{\em et al.}\ }
\def\eps{\varepsilon}
\def\script#1{\mathcal{#1}}
\def\C{\script{C}}
\def\S{\script{S}}
\def\T{\script{T}}
\def\D{\script{D}}
\def\pot{\textrm{Potential}}
\def\unc{\textrm{Uncharged}}
\def\hist{\textrm{History}}
\def\rem{\textrm{RemainingCharge}}
\def\len{\textsc{Length}}
\def\pen{\textsc{Penalty}}
\def\cost{\textsc{Cost}}
\def\Comment#1{\textsl{$\langle\!\langle$#1\/$\rangle\!\rangle$}}
\renewenvironment{proof}{\vspace{-0.1in}\noindent{\bf Proof:}}%
        {\hspace*{\fill}$\Box$\par}
\newenvironment{proofof}[1]{\smallskip\noindent{\bf Proof of #1:}}%
        {\hspace*{\fill}$\Box$\par}
\newenvironment{proofsketch}{\vspace{-0.05in}\noindent{\bf Proof Sketch:}}%
        {\hspace*{\fill}$\Box$\par}
\begin{document}

\title{Prize-Collecting Steiner Tree and Forest in Planar
  Graphs\footnote{In independent work, Bateni, Hajiaghayi and Marx
    \cite{BateniHM-pcst} have obtained several results on the topics
    considered in this paper.  More details can be found in related
    work in Section~\ref{sec:intro}.}  }
\author{
Chandra Chekuri\thanks{Dept. of Computer Science, University of Illinois, 
Urbana, IL 61801. Supported in part by NSF grants CCF 0728782 and 
CNS-0721899. {\tt chekuri@cs.illinois.edu}}
\and
Alina Ene\thanks{Dept. of Computer Science, University of Illinois, Urbana,
    IL 61801. Supported in part by NSF grant CCF 0728782. 
{\tt  ene1@illinois.edu}}
 \and 
Nitish Korula\thanks{Dept. of Computer Science, University of Illinois, Urbana,
    IL 61801. Supported in part by a dissertation completion fellowship from
 the Univ.\ of Illinois. {\tt nkorula2@illinois.edu}} 
}

\date{\today}

\maketitle

\begin{abstract}
  We obtain polynomial-time approximation-preserving reductions (up to
  a factor of $1+\eps$) from the prize-collecting Steiner tree and
  prize-collecting Steiner forest problems in planar graphs to the
  corresponding problems in graphs of bounded treewidth. We also give
  an exact algorithm for the prize-collecting Steiner tree problem
  that runs in polynomial time for graphs of bounded treewidth. This,
  combined with our reductions, yields a PTAS for the prize-collecting
  Steiner tree problem in planar graphs and generalizes the PTAS of
  Borradaile, Klein and Mathieu \cite{BKM} for the Steiner tree
  problem in planar graphs.  Our results build upon the ideas in
  \cite{BKM} and the work of Bateni, Hajiaghayi and Marx
  \cite{BateniHM} on a PTAS for the Steiner forest problem in planar
  graphs. Our main technical result is on the properties of
  primal-dual algorithms for Steiner tree and forest problems in
  general graphs when they are run with scaled up penalties.
\end{abstract}

\section{Introduction}
\label{sec:intro}
The Steiner tree and Steiner forest problems are fundamental and
well-studied problems in network design. In the Steiner tree problem
we have an undirected graph $G=(V,E)$ with costs on the edges given by
$c \colon E \rightarrow \mathbb{R}^+$, and a set of terminals $S
\subseteq V$.  The goal is to find a minimum-cost tree in $G$ that
connects/contains the terminals. In the more general Steiner forest
problem we are given pairs of vertices $s_1t_1,\ldots, s_kt_k$ and the
goal is to find a minimum-cost forest which connects $s_i$ and $t_i$
for $1 \le i \le k$. Both problems are NP-Hard and APX-hard to
approximate.  These two problems have received considerable attention
in the approximation algorithms literature. For the Steiner tree
problem the very recent algorithm of Byrka \etal \cite{ByrkaGRS} gives a
$1.388$ approximation and this is the best known. For Steiner forest
problem the best known approximation is $2-1/k$ \cite{AKR}; this is
obtained via a natural cut-based LP relaxation. When $G$ is a planar
graph, Borradaile, Klein and Matheiu \cite{BKM} obtained a polynomial
time approximation scheme (PTAS)\footnote{A PTAS for a problem is an
  algorithm (technically speaking a family of algorithms), that for
  each fixed $\eps > 0$, gives a $(1+\eps)$-approximation in
  polynomial time.} for the Steiner tree problem. More recently,
Bateni, Hajiaghayi and Marx \cite{BateniHM} obtained a PTAS for the
Steiner forest problem.

In this paper we consider the prize-collecting versions of the above
problems, which have also received considerable attention.  In the
prize-collecting version of Steiner tree, we are given a root vertex
$r$ and non-negative penalties $\pi:V \rightarrow \mathbb{R}^+$ on the
vertices of $G$. The goal is to find a tree $T$ to minimize the sum of
the edge-cost of $T$ and the penalties of the vertices not included in
$T$; formally, the objective is to minimize $\sum_{e \in E(T)}c(e) +
\sum_{v \not \in V(T)} \pi(v)$. It is easy to see that the Steiner
tree problem is the special case with an infinite penalty on the
terminals and zero penalty on non-terminals. In the prize-collecting
Steiner forest problem we have a penalty $\pi(uv)$ for each pair of
vertices $u,v \in V$.  The goal is to find a forest $F\subseteq E$ to
minimize $\sum_{e \in F} c(e) + \sum_{\text{$uv$ not connected by
    $F$}} \pi(uv)$. For the prize-collecting Steiner tree problem, the
current best approximation ratio is $2-\delta$ for some small but fixed
$\delta > 0$, due to Archer \etal
\cite{ArcherBHK}, and for the prize-collecting Steiner forest problem,
the best known ratio is $2.54$ due to Hajiaghayi and Jain
\cite{HajiaghayiJ}.

In this paper we obtain a PTAS for the prize-collecting Steiner tree
problem in planar graphs; we obtain this result via an approximation
preserving reduction from the prize-collecting Steiner tree problem in
planar graphs to the problem in graphs of bounded treewidth.
Similarly for the prize-collecting Steiner forest problem, we reduce
the problem in planar graphs to the problem in graphs of bounded
treewidth. Below is our main result.

\begin{theorem}\label{thm:reduction}
  For each fixed $\eps > 0$, there is a $\rho(1+\eps)$-approximation
  for the prize-collecting Steiner forest problem in planar graphs if,
  for for each fixed integer $k$, there is a $\rho$-approximation for
  the prize-collecting Steiner forest problem in graphs of treewidth
  at most $k$.  In particular, there is a $\rho(1+\eps)$-approximation
  for the prize-collecting Steiner tree problem in planar graphs if
  there is a $\rho$-approximation for the prize-collecting Steiner
  tree problem in graphs of bounded treewidth.
\end{theorem}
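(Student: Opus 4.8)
The plan is to follow the Klein-style framework for planar PTASes, adapted to the prize-collecting setting, where the central new difficulty is that penalties interact badly with the standard "contract light edges / bound the structure" machinery. First I would reduce the general planar instance to a bounded-treewidth instance through the usual three-stage pipeline: (i) construct a near-optimal solution structure that is "spanner-like," i.e., show there is a subgraph $H$ of $G$ whose total edge cost is $O_\eps(1)$ times the cost of an optimal prize-collecting solution and which contains a $(1+\eps)$-approximate solution; (ii) perform the Baker-style layer decomposition of $H$ (or of $G$ after contracting/deleting according to $H$) into overlapping slabs of bounded width, so that each slab has bounded treewidth; (iii) solve each slab with the assumed $\rho$-approximation for bounded treewidth and paste the solutions together, losing only a $(1+\eps)$ factor from the overlap and the shifting argument. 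The Steiner-tree statement then follows immediately by taking penalties that are $\infty$ on terminals and $0$ elsewhere (and folding the single root into the forest formulation by adding a pair $r t_i$ for each demand, or treating the root as its own component), since an $\infty$ penalty is never paid and the prize-collecting forest objective degenerates to the Steiner forest / Steiner tree objective.

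The heart of the argument — and, per the abstract, the paper's main technical contribution — is step (i): producing the low-cost subgraph $H$. In the non-prize-collecting Steiner tree PTAS of \cite{BKM}, the spanner is built by repeatedly invoking a primal-dual/approximation algorithm on carefully chosen subinstances and arguing that the union of all these partial solutions has cost $O_\eps(\opt)$. In the prize-collecting case the subtlety is that an optimal solution may choose to pay the penalty on some demands, so the "skeleton" we build must simultaneously (a) be cheap in edge cost and (b) not force us to connect demands that $\opt$ abandons. My approach here would be to run the primal-dual algorithm of Goemans–Williamson / Agrawal–Klein–Ravi with \emph{scaled-up penalties} (penalties multiplied by a factor depending on $1/\eps$): scaling up the penalties makes the primal-dual algorithm's moat structure "commit" to the right set of demands — roughly, a demand whose scaled penalty the algorithm still refuses to pay is one that $\opt$ genuinely connects — while the dual feasibility / moat-packing bound still controls the edge cost of the resulting forest by $O_\eps(\opt)$. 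This is exactly the statement flagged at the end of the abstract ("properties of primal-dual algorithms ... run with scaled up penalties"), and I would isolate it as a standalone lemma: running GW with penalties scaled by $\Theta(1/\eps)$ yields a forest $F$ and a partition of demands into "connected" and "penalized" such that $c(F) = O(\opt/\eps)$, every penalized demand can be penalized with only $(1+\eps)$ loss relative to $\opt$, and $F$ together with $O_\eps(1)$ further primal-dual invocations (on residual instances, to densify the graph in each Baker layer) contains a $(1+\eps)$-approximate prize-collecting forest.

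Once $H$ is in hand, steps (ii) and (iii) are essentially the Baker/Klein shifting argument applied verbatim, with two small adaptations. For the Baker decomposition I would contract the edges of $H$ that we "commit" to (or work in the subgraph $G$ restricted to a bounded-radius neighborhood of $H$, using that $c(H)=O_\eps(\opt)$ to bound the number of layers one must shift over) and delete the rest, so that the resulting graph has treewidth $O_\eps(1)$; planarity of $G$ is what makes the contracted graph's layer decomposition have bounded width. For pasting, when a demand $s_it_i$ straddles two slabs I either connect it within one slab or pay its penalty; the shifting argument guarantees a choice of offset under which the extra cost of duplicating $H$-edges along slab boundaries, plus any penalties newly paid at boundaries, is at most $\eps \cdot \opt$. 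I would then invoke the hypothesized $\rho$-approximation on each slab and sum: total edge cost and total penalty are each within $\rho(1+\eps)$ of $\opt$ restricted appropriately, giving the claimed $\rho(1+\eps)$ overall.

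The main obstacle I anticipate is making precise the claim in step (i) that "penalized demands can be abandoned with only $(1+\eps)$ loss": one must rule out the bad case where $\opt$ pays the penalty on a demand but the scaled-up primal-dual run insists on connecting it (inflating edge cost), \emph{and} the reverse case where $\opt$ connects a demand cheaply via edges that the spanner $H$ happens not to contain. Handling both directions simultaneously is what forces the penalty-scaling trick and a careful charging argument between the primal-dual dual variables (moats) and $\opt$'s edges-and-penalties, and getting the constants to line up so that the edge cost stays $O_\eps(\opt)$ while the penalty error stays $\eps\cdot\opt$ is the delicate quantitative core of the proof.
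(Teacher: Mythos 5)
Your step (i) is essentially the paper's main idea: the paper runs a primal-dual algorithm on the instance with penalties scaled by $2/\eps$ and proves (Theorem~\ref{thm:scaledPenalties}) that the resulting forest has length $O(\opt/\eps)$ while the pairs it abandons but $\opt$ connects have total original penalty at most $\eps\,\opt$. However, two of your remaining steps diverge from what is actually needed, and as described they have genuine gaps. First, for the forest case you cannot simply invoke ``Goemans--Williamson / Agrawal--Klein--Ravi with scaled-up penalties'': those algorithms address prize-collecting Steiner \emph{tree} and (non-prize-collecting) Steiner forest, and the known primal-dual for prize-collecting Steiner \emph{forest} (Hajiaghayi--Jain) does not obviously support the charging you need. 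The paper has to design a new $4$-approximation primal-dual algorithm (Section~\ref{sec:primalDual}) whose bookkeeping (the variables $y_{i,S}$ and Lemma~\ref{lem:invariant}, giving $\sum_{S\in\S_i}y_{i,S}=\pi'_i/2$ for every dead terminal) is exactly what lets one build a feasible dual for the Steiner forest LP on the wrongly-abandoned pairs $X$ and contradict weak duality. Your ``careful charging argument between moats and $\opt$'' is precisely the delicate part, and it does not go through as a black-box property of GW/AKR. Moreover, for the forest case the scaled-penalty forest $F$ has many components that cannot be treated independently; the paper needs the additional prize-collecting clustering step (potentials $\phi_{v_i}=\len(T_i)/\eps$, Lemmas~\ref{lem:mergingComponents}--\ref{lem:exhaust}, Theorem~\ref{thm:separateSpanners}) to merge components so that each resulting tree can be spliced open into an outer face and given its own spanner. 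Your proposal omits this entirely, and without it the spanner construction of \cite{BKM} (Theorem~\ref{thm:spanner}), which requires all relevant terminals on a single outer face of bounded length, cannot be applied.

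Second, your steps (ii)--(iii) use a Baker-style decomposition into overlapping slabs with a shifting argument and pasting, where demands straddling slabs are either connected inside one slab or penalized. This is not how the reduction works, and it is not clear it can be made to work: for connectivity problems an optimal tree may span many BFS layers, so every shift cuts it, and there is no argument that the penalty of straddling pairs is small for some offset (a single high-penalty pair whose cheap connection crosses every candidate boundary already breaks the bound). The paper instead applies the contraction decomposition of Demaine--Hajiaghayi--Mohar (Theorem~\ref{thm:tw-decomp}) \emph{to the spanner} $H^j$: its edges are partitioned into $k=f'(\eps)/\eps'$ classes, the cheapest class (length at most $\eps'\len(\T^j)$) is \emph{contracted} -- which only helps connectivity and so preserves the near-optimal solution guaranteed inside the spanner -- the bounded-treewidth instance is solved with the assumed $\rho$-approximation, and the contracted edges are added back at the end at total extra cost $\eps\,\opt$. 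Also, your sketch of ``contract the committed edges of $H$ and delete the rest'' would destroy the near-optimal solution living inside $H$; the correct statement is that one keeps $H$ and contracts a cheap edge class of it. Finally, the Steiner tree claim in the theorem is just the same reduction specialized to prize-collecting Steiner tree (where the spanner comes from a single tree via Theorem~\ref{thm:intro-tree}); no infinite-penalty reduction is involved.
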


We describe an exact algorithm for the prize-collecting Steiner tree
problem in graphs of bounded treewidth:

\begin{theorem}\label{thm:pcstBoundedTreewidth}
  For any fixed integer $k$, there is a polynomial-time algorithm that
  exactly solves the prize-collecting Steiner tree problem in graphs
  of treewidth at most $k$. 
\end{theorem}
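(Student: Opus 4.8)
The plan is to solve the problem by a dynamic program over a tree decomposition, generalizing the textbook connectivity DP for Steiner tree on bounded-treewidth graphs so that it also accounts for vertex penalties. First I would compute a tree decomposition of $G$ of width at most $k$ in polynomial time (e.g.\ by Bodlaender's algorithm; any polynomial algorithm suffices since $k$ is fixed), add the root $r$ to every bag --- this raises the width by at most one and ensures $r\in X_t$ for every node $t$ --- and convert it into a nice tree decomposition with leaf, introduce-vertex, introduce-edge, forget, and join nodes. For a node $t$, let $V_t$ be the union of the bags in the subtree rooted at $t$ and $G_t=G[V_t]$; recall that $X_t$ separates $V_t\setminus X_t$ from $V\setminus V_t$ in $G$.

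The DP table is indexed by triples $(t,S,\mathcal{P})$ where $S\subseteq X_t$ and $\mathcal{P}$ is a partition of $S$ into nonempty parts. The entry $A[t,S,\mathcal{P}]$ is the minimum, over all pairs $(U,F)$ with $U\subseteq V_t$, $U\cap X_t=S$, and $F$ a forest with $V(F)=U$ and $E(F)\subseteq E(G_t[U])$, such that every connected component of $(U,F)$ contains a vertex of $S$ and two vertices of $S$ lie in the same component of $(U,F)$ iff they lie in the same part of $\mathcal{P}$, of the quantity $\sum_{e\in F}c(e)+\sum_{v\in(V_t\setminus X_t)\setminus U}\pi(v)$. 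Intuitively, $U$ is the vertex set of the partial solution inside $G_t$; the requirement that every component meet $X_t$ encodes that a component consisting only of already-forgotten vertices can never be connected to $r$ and must be forbidden; and penalties are charged exactly for the forgotten vertices left out of $U$, while bag vertices outside $S$ are not yet charged since they may still be added higher up.

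Next I would give the recurrences. Leaf nodes are trivial. At an introduce-vertex node adding $v$, the vertex $v$ is isolated (its incident edges come later), so $v$ is either absent from $S$ or present as a singleton part, and in both cases we copy the child entry; an introduce-edge node optionally takes the new edge, merging the two parts containing its endpoints provided they are distinct. At a forget node removing $v$ we take the minimum of two options: $v$ is not in the solution, giving $A[t,S,\mathcal{P}]=A[t',S,\mathcal{P}]+\pi(v)$; or $v$ is kept, giving $\min A[t',S\cup\{v\},\mathcal{P}']$ over partitions $\mathcal{P}'$ of $S\cup\{v\}$ in which the part containing $v$ has size at least two (so that forgetting $v$ does not create a bagless component) and deleting $v$ from that part yields $\mathcal{P}$. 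At a join node with children $t_1,t_2$ we have $V_{t_1}\cap V_{t_2}=X_t$, so the forgotten vertices of the two sides are disjoint and penalties are not double counted; we set $A[t,S,\mathcal{P}]=\min\bigl(A[t_1,S,\mathcal{P}_1]+A[t_2,S,\mathcal{P}_2]\bigr)$ over pairs $(\mathcal{P}_1,\mathcal{P}_2)$ whose join (transitive closure of the union) equals $\mathcal{P}$ and whose combination introduces no cycle among the vertices of $S$ --- exactly the standard partition-merge used for Steiner tree. Since $r$ lies in every bag, the answer is read off at the root node, whose bag is $\{r\}$ after all other vertices are forgotten, as $A[t_{\mathrm{root}},\{r\},\{\{r\}\}]$; this forces a single component containing $r$, with every other vertex either in the tree or paid for, which is precisely the prize-collecting Steiner tree objective (the empty tree $\{r\}$ is the case $F=\emptyset$).

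Correctness is proved by induction over the decomposition in the usual manner; the one point to verify is that an optimal tree $T^*$ is representable, namely that every connected component of $T^*\cap G_t$ meets $X_t$ --- otherwise the path in $T^*$ from such a component to $r$ would have to leave $V_t$ through an edge whose $V_t$-endpoint lies in $X_t$ (by the separator property), contradicting that the component avoids $X_t$. For the running time, the number of entries per node is at most $2^{k+1}$ times the Bell number of $k+1$, each join node examines pairs of such partitions, and a nice tree decomposition has $O(kn)$ nodes, so the algorithm runs in $f(k)\cdot\mathrm{poly}(n)$ time, which is polynomial for fixed $k$. The main obstacle is not depth but care: correctly bookkeeping the penalties and the ``no bagless component'' constraint across the forget and join nodes (and ensuring penalties are charged once and only once), while everything else is the standard bounded-treewidth dynamic program for connectivity.
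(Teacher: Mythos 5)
Your proposal is correct and is essentially the paper's own proof: a dynamic program over a nice tree decomposition whose states record which bag vertices belong to the partial solution together with a partition encoding their connectivity, with leaf/introduce/forget/join recurrences and an $f(k)\cdot\mathrm{poly}(n)$ bound via Bell numbers. The differences are only cosmetic bookkeeping choices (you add $r$ to every bag, use introduce-edge nodes, and charge penalties at forget time, whereas the paper tracks the bag subgraph $H\subseteq G[B_i]$ and charges penalties at leaf/introduce nodes), not a different argument.
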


Combining the reduction in Theorem~\ref{thm:reduction} 
with the above theorem, we obtain the following.

\begin{corollary}
  There is a PTAS for the prize-collecting Steiner tree problem in
  planar graphs.
\end{corollary}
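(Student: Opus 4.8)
The plan is to combine the reduction of Theorem~\ref{thm:reduction} with the exact algorithm of Theorem~\ref{thm:pcstBoundedTreewidth}. Theorem~\ref{thm:pcstBoundedTreewidth} asserts that, for every fixed integer $k$, prize-collecting Steiner tree on graphs of treewidth at most $k$ can be solved optimally in polynomial time; in the language of Theorem~\ref{thm:reduction}, this is a $\rho$-approximation with $\rho = 1$. Plugging $\rho = 1$ into Theorem~\ref{thm:reduction}, we obtain, for each fixed $\eps > 0$, a $1\cdot(1+\eps) = (1+\eps)$-approximation algorithm for prize-collecting Steiner tree in planar graphs that runs in polynomial time. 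Since this holds for every fixed $\eps > 0$, this family of algorithms is, by definition, a PTAS.

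The one point that needs to be checked carefully is the interface between the two theorems: the reduction of Theorem~\ref{thm:reduction}, on input $\eps$, must invoke the bounded-treewidth subroutine only on instances whose treewidth is bounded by some integer $k = k(\eps)$ depending on $\eps$ alone (and not, say, on the size of the input graph), and it must make only polynomially many such calls with polynomial-time bookkeeping of its own. As long as this is the case, Theorem~\ref{thm:pcstBoundedTreewidth} applies with that fixed $k$, each call runs in time polynomial in the instance size, and the overall running time (for fixed $\eps$) is polynomial; this is exactly what the phrase ``polynomial-time approximation-preserving reduction'' in Theorem~\ref{thm:reduction} should be understood to guarantee.

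There is essentially no substantive obstacle in this final step --- all of the difficulty resides in establishing Theorem~\ref{thm:reduction} (the structure/spanner theorem and the primal-dual analysis with scaled-up penalties) and Theorem~\ref{thm:pcstBoundedTreewidth} (the dynamic program over a tree decomposition). Given those, the corollary is immediate: $\rho = 1$ is attainable on bounded-treewidth instances, so the planar problem admits a $(1+\eps)$-approximation running in polynomial time for every fixed $\eps > 0$, which is precisely the definition of a PTAS.
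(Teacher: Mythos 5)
Your proposal is correct and matches the paper's own argument: the corollary is obtained exactly by instantiating Theorem~\ref{thm:reduction} with $\rho = 1$, supplied by the exact bounded-treewidth algorithm of Theorem~\ref{thm:pcstBoundedTreewidth}, noting that the treewidth bound in the reduction depends only on $\eps$. Your additional remark about the interface (polynomially many calls on instances of treewidth $k(\eps)$) is a reasonable sanity check but is already guaranteed by the reduction as proved in the paper.
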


A natural idea would be to show that there is a PTAS for the
prize-collecting Steiner forest problem in graphs of bounded
treewidth.  (We do not expect an exact algorithm since the problem is
NP-Hard even for this case \cite{BateniHM}.) This would yield a PTAS
for the problem in planar graphs. However, recent work in
\cite{BateniHM-pcst} shows that prize-collecting Steiner forest is
APX-Hard even in series-parallel graphs, which are planar and have
treewidth 2.

\begin{remark}
  Results analogous to Theorems~\ref{thm:reduction} and
  \ref{thm:pcstBoundedTreewidth} also apply to the prize-collecting
  Traveling Salesperson Problem; details are deferred to a later version
  of the paper.
\end{remark}

\begin{remark}
  The results for planar graphs extend to graphs of bounded
  genus. This follows previous ideas; details are deferred to a later
  version of the paper.
\end{remark}

We are motivated to consider the prize-collecting problems for several
reasons. First, they generalize the Steiner tree and Steiner forest
problems. Second, the prize-collecting Steiner tree problem has played
a crucial role in algorithms for $k$-MST \cite{Garg,AroraK},
$k$-Stroll \cite{ChaudhuriGRT} and the Orienteering
\cite{Orienteering,TimeWindows,ChekuriKP} problems. These problems are
more difficult than the Steiner tree problem because they also involve
choosing the subset of terminals to connect. In particular, obtaining
a PTAS for $k$-MST, $k$-Stroll or Orienteering in planar graphs
appears to be challenging with current techniques. The
prize-collecting version of Steiner tree can be viewed as an
intermediate problem that still captures some of these difficulties in
that the set of terminals to connect is not determined \textit{a
  priori}.  Techniques and ideas developed for the prize-collecting
problems are likely to play a role in making progress on problems such
as $k$-MST and Orienteering. We note that several of these problems
have PTASes in low-dimensional Euclidean spaces but those rely on
space partitioning schemes such as those of Arora \cite{Arora} and
Mitchell \cite{Mitchell}; planar graph PTASes for network design have
been more difficult to obtain since there is no comparable generic
scheme. Some of the technical challenges will be outlined below where
we describe our techniques at a high-level.

\subsection{Overview of Techniques}
Through this paper, we use $\opt$ to denote the cost of an optimal
solution to the given problem instance. Also, for any solution $F$ for
the prize-collecting Steiner forest (or prize-collecting Steiner
tree), we use $\len(F)$ to denote $\sum_{e \in F} c(e)$, $\pen(F)$ to
denote $\sum_{\text{$uv$ not connected by $F$}} \pi(uv)$, and
$\cost(F)$ to denote $\len(F) + \pen(F)$. For the prize-collecting
Steiner tree problem, we refer to all vertices with non-zero penalty
as \emph{terminals}; similarly, for prize-collecting Steiner forest,
we refer to all pairs of vertices with non-zero penalty as
\emph{terminal pairs}.
Finally, we assume that $\eps \le 1$ and (w.l.o.g.) that all terminals
in the input graph $G$ have degree 1; if this is not the case for some
terminal $v$, simply connect it to a new vertex $v'$ using an edge of
cost $0$, and use $v'$ as a terminal in the place of $v$.

Planar graph approximation schemes for Steiner tree \cite{BKM} and
forest \cite{BateniHM} build on many ideas starting with the
framework of Baker \cite{Baker} for PTASes for planar graphs and
utilizing several subsequent technical tools. Before describing the
new technical contributions needed to handle prize-collecting
problems, we first give a very high-level overview of the approach
common to our algorithms, and the previous approximation schemes of
\cite{BKM, BateniHM}.\footnote{All the algorithms modify this approach
  in distinct ways; see the subsequent discussion.}

\begin{enumerate}
\item Given a planar graph $G$, construct a \emph{spanner}, a subgraph
  $H \subseteq G$ such that there exists a solution in $H$ of cost
  $(1+\eps) \opt$ and the total length of edges in $H$ is $f(\eps)
  \opt$, for some function $f$ depending purely on $\eps$.

\item Partition the edges of $H$ into $f(\eps)/\eps$ sets $E_1, E_2,
  \ldots E_{f(\eps)/\eps)}$, such that contracting any set results in a
  graph of treewidth $O(f(\eps)/\eps)$ \cite{DemaineHM}. 

\item Pick the set $E_i$ of minimum total edge length; this cost must be
  no more than $\eps \opt$. Contract the edges of this set $E_i$,
  yielding a bounded treewidth graph that contains a solution of cost
  $(1+\eps)\opt$. Solve the problem on this graph of bounded
  treewidth. This may not correspond to a solution in the original
  graph $G$; add (uncontract) edges of $E_i$ as necessary to obtain a
  feasible solution in $G$. The total cost of $E_i$ is at most $\eps
  \opt$, and hence the cost of the solution is at most $(1+2 \eps)
  \opt$. 
\end{enumerate}

We note that the most difficult step is the first, finding a spanner
$H$. Once this is accomplished, the rest is somewhat standard;
applying a theorem of \cite{DemaineHM} allows one to decompose the
edge set of $H$ into pieces, and a simple averaging argument implies
that one of these pieces has low cost; contracting this piece yields a
graph of bounded treewidth. Depending on the problem being considered,
it may be possible to solve the bounded treewidth instance exactly; if
not, an approximation algorithm is used.  (In \cite{BKM}, steps 2 and
3 are modified slightly to obtain a more efficient algorithm; we omit
details here.) For prize-collecting Steiner tree, it is fairly
straightforward to solve the problem exactly in graphs of fixed
treewidth. As mentioned before, prize-collecting Steiner forest is
NP-Hard even in graphs of fixed treewidth; obtaining a small
constant-factor approximation would be of interest as, from
Theorem~\ref{thm:reduction}, this would immediately yield an algorithm
for planar graphs achieving a similar approximation ratio.
Thus, we focus below on Step 1, the construction of the spanner $H$;
we describe parts of the spanner constructions of \cite{BKM,BateniHM},
and the modifications necessary for the prize-collecting variants.

\smallskip
\noindent
{\bf Prize-collecting Steiner Tree:}
We first focus on the easier case of Steiner trees and the scheme in
\cite{BKM}. The basic idea to find a spanner for the Steiner tree
instance is as follows. (We assume we are given an embedding of the
input planar graph $G$.) The algorithm starts by computing a
$2$-approximate (any constant factor would do) Steiner tree $T$ in
$G$. One can use an Euler tour of $T$ and splice open along the tour
to obtain another plane graph $G'$ in which the Euler tour of $T$ is
its outer face; note that the total cost of edges on this face is at
most $4\opt$. Now, all the terminals are on the \emph{outer face} of
$G'$; this fact is crucial. The algorithm in \cite{BKM} then builds on
the work of Klein \cite{Klein} to obtain the desired spanner $H$: It
begins with the outer face of $G'$ (containing all the terminals), and
adds edges of $G$ of total cost proportional to the length of this
outer face, while guaranteeing the existence of a near-optimal
solution only using these added edges. We provide details of the
construction in Appendix~\ref{app:spanner}.

The main difficulty in extending the above PTAS to the
prize-collecting Steiner tree problem is the following. We again wish
to find a spanner subgraph $H$ of $G$ by starting with an
$O(1)$-approximate tree $T$ and making it the outer face. Unlike the
Steiner tree case we run into a difficulty. The approximate tree $T$
is not guaranteed to contain all the vertices that are connected to
the root in an optimal solution! In fact, if we knew the vertices that
need to be connected to the root in a near-optimal solution then we
can simply reduce the problem to the Steiner tree problem. We overcome
this difficulty by proving the following.

\begin{theorem}
  \label{thm:intro-tree}
  There is a polynomial time algorithm that, given a prize-collecting
  Steiner tree instance in a graph $G$, outputs a tree $T$ of cost
  $O(1/\eps) \opt$ such that there is a $(1+\eps)$-approximate
  solution $T'$ that connects to the root a subset of the vertices in
  $T$.
\end{theorem}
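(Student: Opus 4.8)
The plan is to produce $T$ by running the Goemans--Williamson primal-dual algorithm for prize-collecting Steiner tree, but not on the given instance: instead, first scale the penalties up to $\pi' := \alpha\pi$ for a parameter $\alpha = \Theta(1/\eps)$ (concretely $\alpha = 1 + \lceil 2/\eps \rceil$ should do), and run the primal-dual on $(G,c,\pi',r)$. Let $T$ be the tree it outputs (the pruned root component), let $y$ be the feasible dual solution it produces, and keep track of the laminar family of moats it grows. Scaling the penalties up makes the algorithm reluctant to abandon terminals, so the vertices it does abandon are those that are genuinely not worth connecting even at the inflated penalty level; this is exactly what will let us argue that an optimal solution for the \emph{original} instance needs essentially only the vertices that $T$ retains.

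Bounding $\len(T)$ is the routine part. The primal-dual algorithm guarantees $\len(T) \le 2\sum_S y_S$, and by weak LP duality $\sum_S y_S \le \opt_{\pi'}$, the optimum of the scaled instance. Since any optimal solution $T^*$ for the original instance is feasible for the scaled instance with $\cost_{\pi'}(T^*) = \len(T^*) + \alpha\,\pi(V \setminus V(T^*)) \le \alpha\,\cost_\pi(T^*) = \alpha\,\opt$, we obtain $\len(T) \le 2\opt_{\pi'} \le 2\alpha\,\opt = O(1/\eps)\,\opt$.

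The core of the argument is to exhibit a $(1+\eps)$-approximate solution $T'$ for the original instance with $V(T') \subseteq V(T)$. Fix an optimal solution $T^*$; every vertex of $A := V(T^*) \setminus V(T)$ lies in a moat that the primal-dual execution deactivated (otherwise it would lie in the root component and hence, up to pruning, in $T$). I would build $T'$ from $T^*$ by, for each deactivated moat $C$ that $T^*$ intersects, deleting the portion of $T^*$ inside $C$ and then repairing the pieces of $T^*$ that this disconnects --- either deleting small hanging subtrees outright or rerouting them using edges of $T$ (which spans $V(T)$ and contains $r$) or along moat boundaries. The resulting $T'$ is feasible, contains $r$, and has $V(T') \subseteq V(T)$ by construction; what remains is the estimate $\cost_\pi(T') \le (1+\eps)\,\opt$, i.e., that the penalties newly paid for deleted vertices, minus the edge length saved, plus the length of any rerouting edges, total at most $\eps\,\opt$. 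Here the choice of $\alpha$ is used crucially: the dual mass accumulated inside a deactivated moat $C$ equals its scaled penalty $\alpha\,\pi(C)$, so the original penalty of the terminals abandoned inside deactivated moats is a $1/\alpha = O(\eps)$ fraction of a quantity that is itself $O(\opt)$ --- being bounded by $\sum_S y_S$, and, via the edge dual constraints applied to the edges of $T^*$, by $\len(T^*)$. Summing over all deactivated moats touched by $T^*$ gives the bound.

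The main obstacle is precisely this last charging step; it is the statement about primal-dual algorithms run with scaled-up penalties that the paper advertises as its principal technical contribution. The difficulties are that a subtree of $T^*$ hanging off a deactivated moat may contain high-penalty vertices lying \emph{outside} that moat, so it cannot simply be deleted; that any rerouting must be carried out so that its total added length stays within $O(\eps)\,\opt$ rather than the much larger $O(1/\eps)\,\opt$ one would get by naively routing through $T$; and that the pruning phase of the primal-dual algorithm abandons additional vertices that must be folded into the laminar moat bookkeeping. Making all of this precise, and in particular showing that the laminar structure of the deactivated moats lets one localize and sum the losses, is where the real work lies.
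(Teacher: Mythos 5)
Your opening moves (scale the penalties by $\Theta(1/\eps)$, run the primal-dual algorithm on the scaled instance, and bound $\len(T)$ by comparing against the scaled optimum through the approximation guarantee) are exactly the paper's. But the heart of the theorem --- that the terminals connected by an optimal solution $T^*$ yet abandoned by $T$ have total \emph{original} penalty at most $\eps\,\opt$ --- is precisely the step you leave open, and the sketch you give does not yet yield it. Charging the abandoned penalty to ``the dual mass accumulated inside deactivated moats'' and bounding that mass by $\sum_S y_S$ only gives $O(\opt)$, because $\sum_S y_S$ can be as large as the scaled optimum, i.e.\ $\Theta(\opt/\eps)$. You gesture at the right fix (bounding the relevant dual mass by $\len(T^*)$ via the edge constraints), but making this work is the entire content of the theorem, and the paper does it with a short weak-duality argument rather than any moat surgery (Theorem~\ref{thm:scaledPenalties}): the algorithm attributes its dual growth to individual terminals through the variables $y_{i,S}$, and Lemma~\ref{lem:invariant} guarantees every dead terminal accumulates dual exactly $\pi'_i/2 = \pi_i/\eps$. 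Letting $X$ be the terminals connected by $T^*$ but abandoned by the algorithm, the restriction of these variables to $i \in X$ is, by the edge constraints of \textbf{Dual-PCSF}, a feasible dual for the Steiner tree/forest LP on $X$; if $\sum_{i\in X}\pi_i > \eps\,\opt$ its value would exceed $\opt$, contradicting that $T^*$ itself is a Steiner tree/forest for $X$ of length at most $\opt$. No laminar-moat case analysis or pruning bookkeeping beyond this invariant is needed.

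Relatedly, your plan to modify $T^*$ --- deleting its portions inside deactivated moats and rerouting hanging pieces through $T$ or along moat boundaries --- is both unnecessary and the source of the obstacles you list. The theorem only needs $T'$ to connect to the root a subset of the \emph{terminals} of $T$, not $V(T')\subseteq V(T)$; since terminals are assumed w.l.o.g.\ to have degree~1, one simply takes $T'$ to be $T^*$ with the pendant edges to terminals of $X$ removed, paying their penalties, so $\cost(T') \le \opt + \sum_{i\in X}\pi_i \le (1+\eps)\opt$. The rerouting route would indeed face the problems you name (paths through $T$ may cost $\Theta(1/\eps)\opt$, and subtrees hanging off a deactivated moat may contain high-penalty vertices), which is a sign that the decomposition is the wrong one; once the penalty bound above is proved, all of that machinery is superfluous.
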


The algorithm to achieve the above is in fact the Goemans-Williamson
primal-dual algorithm for the prize-collecting Steiner tree problem
but with modified potentials $\pi'(v) = \frac{2}{\eps}\pi(v)$ for each
$v \in V$. By exploiting properties of the primal-dual algorithm we
can prove the above theorem. We then proceed as in the Steiner tree
case, making this tree $T$ the outer face, and adding edges to form a
spanner. We note that we cannot use an (approximation) algorithm for
the prize-collecting Steiner tree problem as a black box in proving
the above --- it is important to rely on the properties of the
primal-dual algorithm as we change the potentials.

\smallskip
\noindent {\bf Prize-collecting Steiner Forest:} The PTAS for Steiner
forest, which is quite recent \cite{BateniHM}, requires two new
ideas. First, as with the Steiner tree PTAS one starts with an
$O(1)$-approximate Steiner forest. However, this forest has
potentially many components and one cannot apply the spanner
construction idea of \cite{BKM} in making a single tree the outer
face.  At the same time one cannot directly argue that one can treat
each tree in the forest separately; optimal (or near-optimal)
solutions may connect vertices in different components of the forest.
In \cite{BateniHM} there is an additional key step in which the trees
are grown via a primal-dual type argument and some of them are merged
--- after this step, the remaining trees are ``far apart'' and hence
can be treated independently via the spanner approach of
\cite{BKM}. (More precisely, one obtains a collection of subgraphs
$H_1, H_2, \ldots$ such that $\sum_i \len(H_i) \le f(\eps)
\opt$. Further, if $\opt_i$ denotes the cost of an optimal solution
for the terminal pairs in $H_i$, then $\sum_i \opt_i \le (1+\eps)
\opt$.)

A second difficulty in obtaining a PTAS is in step 3, solving the
problem on bounded treewidth graphs. As mentioned previously, a PTAS
is developed in \cite{BateniHM} for Steiner forest in bounded
treewidth graphs.

As for the prize-collecting Steiner tree problem, the difficulty in
the prize-collecting Steiner forest problem is that we do not know
which terminal pairs to connect. We deal with this using an approach
similar to that for the prize-collecting Steiner tree problem.
However, the $3$-approximate primal-dual algorithm for the Steiner
forest problem, due to Hajiaghayi and Jain \cite{HajiaghayiJ}, is
quite complex; each step requires $O(n)$ max-flow computations.  It is
unclear whether one can prove a theorem similar to
Theorem~\ref{thm:intro-tree} via the algorithm in \cite{HajiaghayiJ}.
We develop a simpler primal-dual algorithm and analysis for the
Steiner forest problem that gives a $4$-approximation. We use the
structure and analysis of our algorithm to prove the following
theorem.

\begin{theorem}
  \label{thm:intro-forest}
  There is a polynomial time algorithm that, given a prize-collecting
  Steiner forest instance in a graph $G$, outputs a forest $F$ of cost
  $O(1/\eps) \opt$ such that there is a $(1+\eps)$-approximate
  solution $F'$ that connects a subset of the pairs connected by $F$.
\end{theorem}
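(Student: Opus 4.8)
\smallskip
\noindent\textbf{Proof approach (plan).}
The plan is to mimic the proof of Theorem~\ref{thm:intro-tree}, replacing the Goemans--Williamson primal-dual algorithm for prize-collecting Steiner tree with a primal-dual algorithm for prize-collecting Steiner forest, run with scaled-up penalties. The first ingredient is a \emph{simple} primal-dual algorithm $\script{A}$ for prize-collecting Steiner forest, built on the moat-growing LP with one dual variable $y_{S,i}$ for each terminal pair $i$ and vertex set $S$ separating $s_i$ from $t_i$, with edge constraints $\sum_{(S,i):\, e\in\delta(S)} y_{S,i}\le c(e)$ and penalty constraints $\sum_S y_{S,i}\le \pi_i$. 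The algorithm grows moats around the components of the current forest, declares a pair $i$ \emph{abandoned} as soon as its penalty constraint becomes tight, and finishes with a reverse-delete step. I would analyze it to show that the forest $F$ it returns has $\len(F)=O(1)\cdot\sum_{S,i}y_{S,i}$ and pays penalty $\sum_{i\text{ abandoned}}\pi_i=\sum_{i\text{ abandoned}}\sum_S y_{S,i}\le\sum_{S,i}y_{S,i}$; since $\sum_{S,i}y_{S,i}$ lower-bounds the integral optimum, this makes $\script{A}$ an $O(1)$-approximation (aiming for factor $4$).

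Next I would run $\script{A}$ on the instance obtained from $G$ by replacing each penalty $\pi_i$ with $\pi'_i:=\tfrac{2}{\eps}\pi_i$. Let $(F,y)$ be the output forest and dual, and let $A$ be the set of abandoned pairs, so $\sum_S y_{S,i}=\pi'_i$ for $i\in A$ and $F$ connects exactly the pairs not in $A$. Because scaling all penalties up by $2/\eps\ge 1$ increases the optimum by a factor at most $2/\eps$, the optimum $\opt'$ of the scaled instance obeys $\opt'\le\tfrac{2}{\eps}\opt$. Hence $\len(F)=O(1)\cdot\sum_{S,i}y_{S,i}=O(1)\cdot\opt'=O(1/\eps)\,\opt$; and the penalty of $F$ in the original instance is $\pen(F)=\tfrac{\eps}{2}\sum_{i\in A}\pi'_i=\tfrac{\eps}{2}\sum_{i\in A}\sum_S y_{S,i}\le\tfrac{\eps}{2}\opt'\le\opt$. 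Therefore $\cost(F)=\len(F)+\pen(F)=O(1/\eps)\,\opt$, as required.

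For the near-optimal $F'$: let $F^\ast$ be an optimal solution of the original instance, write $\mathrm{pairs}(H)$ for the set of terminal pairs connected by a forest $H$, and let $Q=\mathrm{pairs}(F^\ast)\setminus\mathrm{pairs}(F)$, so $Q\subseteq A$. Take $F'$ to be a minimal subforest of $F^\ast$ that still connects every pair connected by $F^\ast$ except those in $Q$. Then $\cost(F')\le\len(F^\ast)+\pen(F^\ast)+\sum_{i\in Q}\pi_i\le\opt+\sum_{i\in Q}\pi_i$, and (apart from the minor point below) the terminal pairs connected by $F'$ lie in $\mathrm{pairs}(F^\ast)\setminus Q=\mathrm{pairs}(F^\ast)\cap\mathrm{pairs}(F)\subseteq\mathrm{pairs}(F)$. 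It remains to bound $\sum_{i\in Q}\pi_i$. For $i\in Q\subseteq A$ we have $\pi_i=\tfrac{\eps}{2}\pi'_i=\tfrac{\eps}{2}\sum_S y_{S,i}$, so $\sum_{i\in Q}\pi_i=\tfrac{\eps}{2}\sum_{i\in Q}\sum_S y_{S,i}$. The key observation is that $\{y_{S,i}\}_{i\in Q}$ is a feasible dual for the ordinary (penalty-free) Steiner forest instance whose terminal pairs are exactly $Q$ --- dropping the remaining pairs only weakens the edge constraints. By weak LP duality, $\sum_{i\in Q}\sum_S y_{S,i}$ is therefore at most the cost of an optimal Steiner forest connecting all pairs of $Q$; and since $F^\ast$ already connects every pair of $Q$, that cost is at most $\len(F^\ast)\le\opt$. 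Hence $\sum_{i\in Q}\pi_i\le\tfrac{\eps}{2}\opt$ and $\cost(F')\le(1+\tfrac{\eps}{2})\opt\le(1+\eps)\opt$, as wanted.

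The step I expect to be the main obstacle is the first: designing a primal-dual algorithm for prize-collecting Steiner forest that is simultaneously a constant-factor approximation, simple enough to be analyzed directly (unlike the Hajiaghayi--Jain algorithm, whose every step uses many max-flow computations), and structured so that its dual splits into per-pair variables $y_{S,i}$ with the clean equivalence ``$i$ abandoned $\Leftrightarrow$ its penalty constraint is tight'' together with aggregate edge feasibility $\sum_{(S,i):\,e\in\delta(S)}y_{S,i}\le c(e)$. This last structural feature is precisely what makes the restriction of the dual to $Q$ a feasible Steiner forest dual on $Q$, and hence what yields the bound $\sum_{i\in Q}\pi_i\le\tfrac{\eps}{2}\opt$ that drives the whole argument. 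A minor technicality to dispatch is that the minimal subforest $F'$ could connect a pair of $Q$ ``incidentally'' through shared internal vertices; this can only decrease $\cost(F')$ and is handled by a routine argument, so it does not affect the conclusion. The normalizations (all terminals of degree $1$, handling of zero-penalty vertices) are as in the rest of the paper.
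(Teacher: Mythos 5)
Your outline of the scaled-penalty argument matches the paper's proof of this theorem almost step for step: the paper also scales penalties to $\pi'_i = 2\pi_i/\eps$, runs its primal-dual PCSF algorithm on the scaled instance, bounds the length by $8\opt/\eps$ via the $4$-approximation guarantee, bounds the total original penalty of the pairs separated by the output but connected by $F^*$ by restricting the per-pair dual variables to those pairs, observing that this is a feasible dual for the plain Steiner forest LP on them, and invoking weak duality against $F^*$ (Theorem~\ref{thm:scaledPenalties}); it then takes $F'$ to be $F^*$ with penalties paid for the missed pairs. Your constants differ only because you assume the penalty constraint of an abandoned pair is fully tight; in the paper's algorithm each terminal of a pair carries potential $\pi_i/2$, and the invariant (Lemma~\ref{lem:invariant}) only yields $\sum_{S}y_{i,S}\ge \pi'_i/2$ for a pair with a dead endpoint, which is all the argument needs.

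The genuine gap is the first ingredient, which you explicitly defer: the existence of a constant-factor primal-dual algorithm whose dual decomposes into per-pair variables $y_{i,S}$ with aggregate edge feasibility and (half-)tightness of the penalty constraint for every pair the output forest separates. This is precisely the paper's main technical contribution (Section~\ref{sec:primalDual}), and the difficulty is the one your sketch glosses over: when a single active moat $S$ separates many pairs, one must decide how its growth is apportioned among the variables $y_{i,S}$ so that no pair is charged beyond $\pi_i/2$ per endpoint, yet every terminal that dies has been charged in full. The paper resolves this by growing an aggregate variable $y(S)$, tracking its uncharged growth, and splitting it retrospectively via the \textsc{ProcessHistory} bookkeeping when pairs are united or components exhaust their potential; without some such charging scheme, the clean statement ``abandoned $\Leftrightarrow$ penalty constraint tight'' that drives your weak-duality step is simply not available. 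Two further small points: the output forest need not connect exactly the non-abandoned pairs (the argument only needs that every separated pair has a dead endpoint, which is what the paper proves and also suffices for your version), and the paper's own choice of $F'$ shares the same benign informality you flag about incidental connections, so that issue does not distinguish the two arguments.
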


\medskip
\noindent {\bf Other related work:} There is a substantial amount of
literature on various aspects related to the work discussed in this
paper.  We refer the reader to \cite{DemaineH08} for an overview of
the progress on obtaining approximation schemes for optimization
problems on planar graphs; recent papers \cite{Klein,BKM,BateniHM}
have pushed the techniques further for network design problems.
Prize-collecting versions of network design have received considerable
attention following the work of Goemans and Williamson \cite{GW} on a
primal-dual algorithm which had several applications to other problems
such as $k$-MST. Following Hajiaghayi and Jain's work on
prize-collecting Steiner forest problem \cite{HajiaghayiJ}, there have
been several other papers on prize-collecting network design
problems. Sharma, Swamy and Williamson \cite{SharmaSW07} generalized
the approach in \cite{HajiaghayiJ} to obtain primal-dual constant
factor approximation algorithms for prize-collecting constrained
forest problems (in the framework of Goemans and Williamson \cite{GW})
with submodular penalty functions. Gutner \cite{Gutner08} gave a very
simple and efficient local-ratio based $3$-approximation for prize
collecting Steiner forest problem which also applies to the
generalized version with more than two terminals in a group; although
Gutner's algorithm is quite simple, it does not seem possible to prove
Theorems~\ref{thm:intro-tree} and \ref{thm:intro-forest} directly via
his algorithm. Constant factor approximation ratios have also been
obtained for more general problems with higher connectivity
requirements \cite{NagarajanSW08,HajiaghayiKKN10}.

\medskip
\noindent {\bf Relation to independent work in \cite{BateniHM-pcst}:}
Bateni, Hajiaghayi and Marx \cite{BateniHM-pcst} have obtained several
results on prize-collecting network design problems in planar
graphs. Our work was done independently before we were aware of their
paper; we have since learnt that their results were obtained slightly
earlier than ours. We briefly compare the results and techniques
between their work and ours.  They have an analogous theorem to
Theorem~\ref{thm:reduction} but they also consider a more general
problem, namely prize-collecting Steiner forest with submodular
penalty functions. They obtain PTASes for prize-collecting Steiner
tree and related problems such as prize-collecting TSP and stroll
problems, much as we can obtain via the reduction to bounded treewidth
instances. They complement their algorithms with an interesting
hardness of approximation result which shows that prize-collecting
Steiner forest is APX-hard in series-parallel graphs which are planar
graphs and have treewidth $2$. Their APX hardness also extends to
instances in the Euclidean plane. The main part of proof of the
reduction to bounded treewidth instances in \cite{BateniHM-pcst}
differs from ours in some ways. While we rely on properties of a
primal-dual algorithm for the underlying problem with scaled up
penalties, they use a separate primal-dual clustering step on top of
the trees returned by an approximation algorithm for the underlying
problem; this is inspired by earlier work of Archer \etal
\cite{ArcherBHK} and further extended in \cite{BateniHM}.

\medskip
\noindent
{\bf Outline:}
In Section~\ref{sec:primalDual}, we give a $4$-approximate primal-dual
algorithm for the prize-collecting Steiner forest problem. In
Section~\ref{sec:reduction}, we prove Theorem~\ref{thm:reduction},
giving the complete reduction from the prize-collecting problems in
planar graphs to their bounded-treewidth versions. This reduction has
three parts: First, in Section~\ref{subsec:scaled} we prove
Theorems~\ref{thm:intro-tree} and \ref{thm:intro-forest}, showing that
we can get $O(1/\eps)$-approximate solutions connecting almost all the
terminals connected by an optimal solution. Second, we use these
theorems to complete the spanner constructions; details are provided
in Appendix~\ref{app:spanner}. Third, we give the remaining details of
the reduction in Section~\ref{subsec:contraction}.  Finally, in
Section~\ref{sec:treewidth}, we give an exact algorithm for the
prize-collecting Steiner tree problem in graphs of fixed treewidth.

\section{A Primal-Dual Algorithm for Prize-Collecting Steiner Forest}
\label{sec:primalDual}

Our algorithm is similar to the Goemans-Williamson \cite{GW}
primal-dual algorithm for prize-collecting Steiner tree. For the
prize-collecting Steiner tree, this gives a $2$-approximation (with some
additional guarantees), but this does not appear possible in the
Steiner forest variant. In this section, we give a simple
$4$-approximation for the prize-collecting Steiner forest problem, and
later exploit properties of this algorithm to prove
Theorem~\ref{thm:intro-forest}.

We give primal and dual linear programming formulations for the
prize-collecting Steiner forest problem below. For each pair $(s_i,
t_i)$ the variable $z_i$ is $1$ if we pay the penalty for not
connecting the pair, and $0$ otherwise; the variable $x_e$ denotes
whether the edge $e$ is selected for the forest. We abuse notation and
say that a set $S$ separates (the terminal pair) $i$ if it separates
$s_i$ from $t_i$. Let $\S_i$ denote the collection of sets $S$ that
separate $i$.

\setlength{\columnsep}{0.1in} \setlength{\columnseprule}{0.3pt}
\begin{multicols}{2}
  \begin{center} \textbf{Primal-PCSF}
  \end{center}
  \begin{align*} 
      \min \sum_{e} c_e x_e & + \sum_i \pi_i z_i & \\
     \sum_{e \in \delta(S)} x_e \ \ \ge & \quad (1 - z_i) & (\forall
     i, S \in \S_i) \\
     x_e, z_i \ \ \ge & \quad 0 & (\forall e, i)\\
  \end{align*}

  \begin{center} \textbf{Dual-PCSF}
  \end{center}
  \vspace{-0.15in}
  \begin{align*} 
    \max \sum_S \sum_{i \in \S_i} & y_{i,S} &  \\
    \sum_{S \colon e \in \delta(S)} \sum_{i \colon S \in \S_i}
    y_{i,S} \ \le & \quad c_e & (\forall e) \\
    \sum_{S \in \S_i} y_{i,S} \ \le & \quad \pi_i &
    (\forall i)\\
    y_{i,S} \ \ge & \quad 0 & (\forall i, S \in \S_i)
  \end{align*}
\end{multicols}

A reader unfamiliar with the primal-dual algorithms of \cite{AKR,GW}
(for the Steiner forest and prize-collecting Steiner tree problems
respectively) may wish to skip this paragraph and the next, proceeding
directly to the description of our algorithm. Here, we briefly
describe one way in which our algorithm differs from those of
\cite{AKR,GW}.
A natural approach is to initially assign each terminal $s_i$ or $t_i$
a potential of approximately $\pi_i$, and make each one an active
component. In both the Steiner forest and prize-collecting Steiner
tree problems, a natural LP formulation has a single dual variable
$y_S$ for each active component $S$. One increases the $y_S$ variable
for each active component $S$ uniformly, until either a component
``runs out'' of potential, or an edge becomes ``tight''. In the former
case, the component is deactivated; in the latter, one merges the two
components adjacent to the edge, and combines their potentials.

In the prize-collecting Steiner forest problem, however, we have a
\emph{collection} of dual variables $y_{i,S}$ for a single active
component $S$. We still wish each active component $S$ to grow at a
uniform rate, but it is now not clear which pair $(s_i, t_i)$
separated by $S$ should pay for the growth of $S$. Below, we show that
there is a natural way to share the cost of this growth among the
pairs separated by $S$; once this is done, the analysis proceeds as in
\cite{AKR,GW}. 

Before describing the algorithm, we present some useful notation.  The
primal-dual algorithm begins with a growth phase, followed by a
deletion phase. At all times, our algorithm maintains a forest $F$ of
edges. The connected components of $F$ are labeled either
\emph{active} or inactive; we use $\C$ to denote the set of active
components at any given time. Every component is active at the time it
is formed; we use $\hat{\C}$ to denote the set of components that were
ever formed during the algorithm's execution. (Note that as we only
add edges to $F$ during the growth phase, $\hat{\C}$ is a laminar
family of components.) When a component $S \in \hat{\C}$ is formed, we
assign it a potential $\pot(S)$ that corresponds (roughly) to the
penalty of terminal pairs separated by $S$; thus, this measures how
much we are willing to pay in order to connect terminals in $S$ to
their partners outside. If we grow $S$ by more than its potential
without meeting the desired partners, then it is more effective to pay
the penalty than connect the pairs separated by $S$; at this point, we
will mark $S$ inactive. (Of course, some terminals in $S$ may meet
their partners while others do not; this is the key difference between
Steiner tree and Steiner forest, and we describe how to make this
intuition more precise below.)

We form new components by adding edges to $F$, merging existing
components; when two components merge, they combine their potentials.
We say that $S \in \hat{\C}$ \emph{unites} $i$ if $S$ is the smallest
active component in $\hat{\C}$ containing both $s_i$ and
$t_i$. Terminals are labeled \emph{satisfied}, \emph{alive}, or
\emph{dead}; initially, all terminals are alive. If a component $S$
uniting $i$ is formed, the terminals $s_i$ and $t_i$ are marked
satisfied if they were both alive immediately before the formation of
$S$. (As satisfied terminals are no longer willing to pay for the
growth of a component $S$ containing them, we adjust $\pot(S)$ if
necessary; the procedure {\sc ProcessHistory} handles the necessary
bookkeeping.)  Once we form a component $S \in \hat{\C}$, we ``grow''
it by increasing an auxiliary dual variable $y(S)$; simultaneously, we
decrease $\pot(S)$ to pay for this growth. Once $\pot(S)$ becomes $0$,
a component is labeled inactive. (Recall that all components are
active when they are formed.) When a component becomes inactive, all
of its unsatisfied terminals are marked dead. (Intuitively, it is now
more effective to pay the associated penalties than to try to connect
them to their partners.)  For any terminal $s_i$ (respectively $t_i$),
we use $\hist(s_i)$ (respectively $\hist(t_i)$) to denote the set of
components $S \in \hat{\C}$ such that $S$ contains $s_i$ ($t_i$) and
$s_i$ ($t_i$) was alive after $S$ was formed.

Finally, we note that the auxiliary variables $y(S)$ do not exist in
our dual LP formulation; instead, we have variables $y_{i,S}$. We
ensure that at the end of the algorithm, $y(S) = \sum_{i \colon S \in
  \S_i} y_{i,S}$. In order to determine how $y(S)$ is split among the
variables $y_{i,S}$ we maintain an associated variable $\unc(S)$, the
uncharged growth of $S$. The following proposition is entirely
straightforward, as we increase $y(S)$ and $\unc(S)$ together:

\begin{proposition}
  For each $S \in \hat{\C}$, as soon as $S$ is marked inactive or $S$
  becomes part of a larger component, we have $\unc(S) = y(S)$.
\end{proposition}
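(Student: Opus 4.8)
The plan is to prove the proposition by tracking how the two auxiliary quantities $y(S)$ and $\unc(S)$ evolve over the course of the growth phase. The key observation, as the paper hints, is that these two variables are incremented in lockstep: every time we grow an active component $S$ (by increasing $y(S)$ at the uniform rate), we simultaneously increase $\unc(S)$ by the same amount. So I would first establish the base case: when a component $S$ is formed (either at initialization as a singleton, or by merging smaller components via a tight edge), we set $\unc(S) = 0 = y(S)$, or more carefully, when merging we may inherit uncharged growth from the constituent components — but the natural setup is that $y(S)$ and $\unc(S)$ both start at $0$ for a newly formed component, and the "charging" done during {\sc ProcessHistory} at the moment of merging handles the accounting for the old components' growth. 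I would check the precise initialization convention against the algorithm description once it appears, but the intended invariant is clear.

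Next I would argue the inductive step: during any maximal time interval in which $S$ is an active component of $F$ (i.e., after $S$ is formed and before it either becomes inactive or gets absorbed into a strictly larger component), the only events that change $y(S)$ are uniform-rate growth steps, and each such step increases $\unc(S)$ by exactly the same amount. Hence the difference $y(S) - \unc(S)$ is invariant over this interval, and since it is $0$ at the start (when $S$ is formed), it remains $0$ throughout, and in particular $\unc(S) = y(S)$ at the instant $S$ is marked inactive or is merged into a larger component. The one subtlety to address is what happens to $\unc(S)$ when a \emph{terminal} inside $S$ becomes satisfied (or when $\pot(S)$ is adjusted by {\sc ProcessHistory}): I would need to confirm that these events do not touch $y(S)$ or $\unc(S)$ — they only touch $\pot(S)$ — so the invariant is unaffected. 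This is exactly the "entirely straightforward" character the paper advertises.

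The main (and only real) obstacle is a bookkeeping one: I must make sure I have the exact definitions of when $\unc(S)$ is initialized and whether any subroutine (notably {\sc ProcessHistory}, which "handles the necessary bookkeeping" and "charges" growth to the dual variables $y_{i,S}$) ever decrements $\unc(S)$ before $S$ is deactivated or merged. If {\sc ProcessHistory} only runs at the moments of component formation/merging and charges the \emph{old} components' uncharged growth to their $y_{i,S}$ variables (resetting their $\unc$ to reflect that it has now been distributed), then the proposition holds as stated: the statement is precisely about the moment just before that charging happens, i.e., "as soon as $S$ is marked inactive or $S$ becomes part of a larger component." So the proof reduces to: (i) $y(S) = \unc(S) = 0$ at formation; (ii) both increase by equal amounts at each growth step; (iii) no other operation changes either quantity while $S$ is a maximal active component; therefore $\unc(S) = y(S)$ at the terminal moment. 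I expect the write-up to be three or four sentences invoking exactly this lockstep increment, with a pointer to the algorithm's pseudocode for the initialization and growth conventions.
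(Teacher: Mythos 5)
Your proposal is correct and follows exactly the argument the paper intends (and states in one line): a newly formed component has $y(S)=\unc(S)=0$ by the algorithm's initialization, the main loop increments both by the same $\varDelta$ in each growth step, and {\sc ProcessHistory} only decrements $\unc$ of components that have already been merged or just deactivated (or whose $\unc$ is still $0$), i.e., only after the moment the proposition refers to. The subtlety you flag about {\sc ProcessHistory} is resolved precisely as you anticipate, so your write-up matches the paper's justification, just spelled out in more detail.
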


The procedure {\sc ProcessHistory} ensures that $\unc(S)$ (and hence
$y(S)$) is split appropriately among the dual variables $y_{i,S}$. 
We omit a proof of the proposition below:

\begin{proposition}
  When the main While loop of {\sc Primal-Dual Forest} terminates, for
  each pair of terminals $(s_i, t_i)$, we have called {\sc
    ProcessHistory}($s_i$) and {\sc ProcessHistory}($t_i$).
\end{proposition}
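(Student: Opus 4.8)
The plan is to show that \textsc{ProcessHistory}$(v)$ is invoked at the unique moment a terminal $v$ changes status, and that every terminal must change status before the main While loop terminates. Recall that every terminal is \emph{alive} when the algorithm begins, that its status can change only from alive to \emph{satisfied} (when a component uniting its pair is formed while both partners are alive) or from alive to \emph{dead} (when a component containing it becomes inactive while it is still unsatisfied), and that once a terminal is satisfied or dead it keeps that status until the end. By inspection of the pseudocode, \textsc{ProcessHistory}$(s_i)$ is called exactly at the step at which $s_i$ stops being alive: either the step that marks $s_i$ satisfied (which adjusts the potential of the newly formed uniting component) or the step that marks $s_i$ dead (which deactivates a component containing $s_i$). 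Hence it suffices to prove that, when the While loop terminates, no terminal is alive; the claim then follows for $s_i$ and, symmetrically, for $t_i$.

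So suppose the While loop has terminated. By construction the loop runs as long as $\C \ne \emptyset$, so at that point every current connected component of $F$ is inactive. Fix a terminal $s_i$ and let $S$ be the component of $F$ containing $s_i$ at termination. Since every component is active at the instant it is formed, and a component becomes inactive only when its potential is driven to $0$ --- note that a merge always involves at least one active component (an edge goes tight because of the growth of an active endpoint), and therefore produces a component of strictly positive potential, hence active --- the component $S$ must have been explicitly marked inactive at some step of the loop. At that step each unsatisfied terminal of $S$ is marked dead. Therefore either $s_i$ was already satisfied at an earlier step, or $s_i$ is marked dead at this step; in either case \textsc{ProcessHistory}$(s_i)$ has been invoked by the time the loop terminates.

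The argument is routine once the bookkeeping is made explicit, which is presumably why the proof is omitted; the points needing care are (i) the exact termination condition of the While loop (it ends precisely when there are no active components, equivalently when no terminal is alive), (ii) the observation that merging components never creates an inactive component, so the maximal component containing a given terminal genuinely passes through a deactivation step, and (iii) confirming from the pseudocode that \textsc{ProcessHistory} is the routine invoked at each of the two status transitions of a terminal. The only thing one must be careful not to overlook is the case in which a terminal ends up satisfied rather than dead, in which case \textsc{ProcessHistory} was already called at the earlier, satisfying step.
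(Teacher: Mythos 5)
Your proof is correct and is precisely the routine bookkeeping argument the paper omits: \textsc{ProcessHistory} is invoked at the moment a terminal leaves the alive state, and at termination the component of $F$ containing each terminal must at some point have been marked inactive (it is in $\C$ when formed and can only leave $\C$ by merging or deactivation), so no terminal is still alive. One small inaccuracy in your case analysis: a terminal can also be marked dead at a \emph{merge} step (when the component uniting its pair is formed but the partner is already dead), not only when a component is deactivated --- but since the pseudocode's ``If ($s_i$ is alive)'' call to \textsc{ProcessHistory}($s_i$) precedes that marking, this extra transition is also covered and your conclusion stands.
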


We initialize the set of active components to be the set of terminals,
and set $\pot(s_i) = \pot(t_i) = \pi_i/2$ for each $1 \le i \le h$.

\begin{algo}
\underline{\sc Primal-Dual Forest} \\
$F \leftarrow \emptyset$ \\
$\C \leftarrow \{ \{s_i\} \colon 1 \le i \le h \} \cup 
  \{ \{t_i\} \colon 1 \le i \le h \} $\\
For all $1 \le i \le h$: \+ \\
  $\pot(\{s_i\}) = \pot(\{t_i\}) = \pi_i / 2$ \- \\
While there is an active component: 
           \hspace{2.5in}\Comment{Begin Main Loop} \+ \\
  Increase $y_S$ by $\varDelta$ for each $S \in \C$ until
  an edge goes tight, or a component uses all its potential. \\
  For each $S \in \C$: \+ \\
    $\pot(S) \leftarrow \pot(S) - \varDelta$; $\unc(S) \leftarrow \unc(S)
    + \varDelta$ \- \\
  If (edge $e$ connecting $S_1,S_2$ goes tight) \+ \\
    $F \leftarrow F \cup \{e\}$ \\
    $\C \leftarrow \left(\C \setminus \{S_1, S_2\} \right) \cup \{S_1
      \cup S_2\}$ \\
    $y(S_1 \cup S_2) \leftarrow 0$; $\unc(S_1 \cup S_2) \leftarrow 0$ \\
    $\pot(S_1 \cup S_2) \leftarrow \pot(S_1) + \pot(S_2)$ \\
    For each $i$ such that $S_1 \cup S_2$ unites $i$: \+ \\
      If ($s_i$ is alive), $\pot(S_1 \cup S_2) \leftarrow \pot(S_1
      \cup S_2) - $ {\sc ProcessHistory}($s_i$)\\ 
      If ($t_i$ is alive), $\pot(S_1 \cup S_2) \leftarrow \pot(S_1 \cup S_2) - $ {\sc
        ProcessHistory}($t_i$) \\
      If ($s_i$ AND $t_i$ are alive), Mark $s_i, t_i$ as satisfied \\
      Else, Mark $s_i, t_i$ as dead \- \- \\
  If (component $S$ uses all its potential) \+ \\
    $\C \leftarrow \C \setminus \{S\}$ \hspace{3.33in} \Comment{Mark S as inactive} \\ 
    For each alive $s_i \in S$ : \hspace{2.02in} \Comment{Similarly for
      each alive $t_j \in S$}\+ \\
      Mark $s_i$ as dead \\
      {\sc ProcessHistory}($s_i$) \- \- \- \\
End While \hspace{4.15in}\Comment{End Main Loop}\\
For each edge $e \in F$: \hspace{3.53in}\Comment{Deletion Phase}\+ \\
  Delete $e$ if $F - e$ does not separate any pair of satisfied terminals \-
\end{algo}

\begin{algo}
\underline{{\sc ProcessHistory($s_i$)}:}\\
$\rem(s_i) \leftarrow \pi_i / 2$ \\
For each $S \in \hist(s_i)$, in increasing order of size: \+ \\
  If $\unc(S) \le \rem(s_i)$: \hspace{0.1in} \Comment{All uncharged growth of S can
    be charged to $s_i$}\+ \\
    $\rem(s_i) \leftarrow \rem(s_i) - \unc(S)$ \\
    $y_{i,S} \leftarrow \unc(S)$ \\
    $\unc(S) \leftarrow 0$ \- \\
  Else: \hspace{3.25in} \Comment{Charge as much as possible to $s_i$} \+ \\
    $\unc(S) \leftarrow \unc(S) - \rem(s_i)$ \\
    $y_{i,S} \leftarrow \rem(s_i)$ \\
    $\rem(s_i) \leftarrow 0$ \\
    Return 0 \- \- \\
Return $\rem(s_i)$
\end{algo}

\begin{lemma}\label{lem:invariant}
  For each component $S \in \hat{\C}$, $y(S) = \sum_{i \colon S \in
    \S_i} y_{i,S}$. For each dead terminal $s_i$, we have $\sum_{S \in
    \hist(s_i)} y_{i,S} = \pi_i / 2$; similarly, for each dead $t_i$,
  $\sum_{S \in \hist(t_i)} y_{i,S} = \pi_i / 2$.
\end{lemma}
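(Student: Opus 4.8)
The plan is to prove both statements by maintaining a handful of invariants through the execution of {\sc Primal-Dual Forest} and {\sc ProcessHistory}. First some bookkeeping. Every terminal is handed to {\sc ProcessHistory} exactly once: the proposition that every pair is eventually processed gives ``at least once'', and ``at most once'' holds because a terminal is processed precisely when it ceases to be alive, after which it is never alive again (a pair is united at most once). For a terminal $s_i$, the collection $\hist(s_i)$ is a chain $\{s_i\}=C_0\subsetneq C_1\subsetneq\cdots\subsetneq C_m$ in the laminar family $\hat{\C}$ --- components only grow, and $s_i$ leaves the alive state only once --- and each $C_j$ contains $s_i$ but not $t_i$, since the first component to contain both $s_i$ and $t_i$ is the one that unites $i$, at which point both are marked satisfied or dead. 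Hence $S\in\hist(s_i)\Rightarrow S\in\S_i$, $\hist(s_i)\cap\hist(t_i)=\emptyset$, and $y_{i,S}$ is assigned (at most once) by {\sc ProcessHistory}($s_i$) if $S\in\hist(s_i)$, by {\sc ProcessHistory}($t_i$) if $S\in\hist(t_i)$, and is $0$ otherwise; in particular $y_{i,S}\ne 0\Rightarrow S\in\S_i$, so for each $S$ the quantity $\sum_{i:S\in\S_i}y_{i,S}=\sum_i y_{i,S}$ is exactly the total amount ever charged against $\unc(S)$. Finally, a single call {\sc ProcessHistory}($s_i$) satisfies the identity $\sum_{S\in\hist(s_i)}y_{i,S}+\rho(s_i)=\pi_i/2$, where $\rho(s_i)$ is the value it returns, because $\rem$ starts at $\pi_i/2$ and each executed line either moves an amount of $\rem$ into some $y_{i,S}$ or returns the residual $\rem$.

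For the first statement, fix $S\in\hat{\C}$. By the proposition that $\unc(S)=y(S)$ once $S$ is frozen, this equality holds the instant $S$ is marked inactive or absorbed into a larger component; thereafter every {\sc ProcessHistory} call that lowers $\unc(S)$ by $\delta$ sets a fresh $y_{i,S}$ equal to $\delta$, so $\unc(S)+\sum_i y_{i,S}=y(S)$ from then on, and it suffices to prove $\unc(S)=0$ at termination. This follows from the global invariant
\[
\sum_{S\in\C}\pot(S)+\sum_{S\in\hat{\C}}\unc(S)+\sum_{v\ \text{processed}}\rho(v)+\sum_{i,S}y_{i,S}=\sum_i\pi_i,
\]
which holds initially and is preserved by every step: a growth step moves $\varDelta$ from $\pot$ to $\unc$ in each active component; forming $S_1\cup S_2$ leaves $\sum\pot$ and $\sum\unc$ unchanged (the new component starts with $\pot=\pot(S_1)+\pot(S_2)$ and $\unc=0$); and a call {\sc ProcessHistory}($v$) transfers equal amounts from $\sum\unc$ into $\sum_{i,S}y_{i,S}$ and from $\sum\pot$ into $\sum\rho$, by the identity above. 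At termination $\C=\emptyset$ and all terminals have been processed, so substituting $\sum_S y_{i,S}=\pi_i-\rho(s_i)-\rho(t_i)$ makes all $\rho$ and $y$ terms cancel $\sum_i\pi_i$, leaving $\sum_{S\in\hat{\C}}\unc(S)=0$; as each $\unc(S)\ge0$, every $\unc(S)=0$, giving $y(S)=\sum_{i:S\in\S_i}y_{i,S}$.

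For the second statement it is equivalent (by the per-terminal identity) to show $\rho(s_i)=0$ for every dead terminal $s_i$, i.e.\ that $\sum_{S\in\hist(s_i)}\unc(S)\ge\pi_i/2$ at the moment {\sc ProcessHistory}($s_i$) is invoked. Two more invariants are needed. Restricting the global invariant to $\{S'\in\hat{\C}:S'\subseteq S\}$ --- no terminal outside $S$ ever charges a component inside $S$, and the history of a terminal inside $S$ stays inside $S$ --- and again cancelling the $\rho$ and $y$ contributions of the processed terminals of $S$ gives, for every active $S$,
\[
\pot(S)+\sum_{S'\subseteq S}\unc(S')=\sum_{u\in S:\ u\ \text{alive}}\pi_{i_u}/2,
\]
which I will call $(\star)$. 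Moreover, for every $P\in\hat{\C}$ and at every time, $\sum_{S'\subseteq P}\unc(S')\le\sum_{u\in P:\ u\ \text{alive}}\pi_{i_u}/2$: while $P$ is active this is $(\star)$, and after $P$ is absorbed both sides only decrease, and whenever a terminal $u\in P$ is processed the left side falls by what it charges, namely $\pi_{i_u}/2-\rho(u)\le\pi_{i_u}/2$, which is the drop in the right side. Now suppose $s_i$ dies because the active component $S\ni s_i$ uses up its potential. Write $\hist(s_i)=C_0\subsetneq\cdots\subsetneq C_m=S$ and $P_j=C_j\setminus C_{j-1}$, so that $\{S'\subseteq S\}$ is the disjoint union of $\hist(s_i)$ and the families $\{S'\subseteq P_j\}$, and the alive terminals of $S$ are $s_i$ together with the alive terminals of the $P_j$. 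Putting $\pot(S)=0$ in $(\star)$ and using the second invariant on each $P_j$,
\[
\sum_{S'\in\hist(s_i)}\unc(S')=\sum_{u\in S:\ u\ \text{alive}}\frac{\pi_{i_u}}{2}-\sum_j\sum_{S'\subseteq P_j}\unc(S')=\frac{\pi_i}{2}+\sum_j\Bigl(\sum_{u\in P_j:\ u\ \text{alive}}\frac{\pi_{i_u}}{2}-\sum_{S'\subseteq P_j}\unc(S')\Bigr)\ \ge\ \frac{\pi_i}{2},
\]
so $\rho(s_i)=0$, and symmetrically for $t_i$.

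The remaining case --- $s_i$ becomes dead because a component uniting $i$ is formed while $t_i$ is already dead --- is the main obstacle, and here one argues by induction on the order in which terminals are processed. Since $t_i$ was processed earlier it is dead, and (by the previous paragraph, which covers the only other way a terminal can die) it died through an inactivation, so by the inductive hypothesis $\rho(t_i)=0$ and {\sc ProcessHistory}($t_i$) has already placed $\pi_i/2$ worth of $y_{i,\cdot}$ on components of $\hist(t_i)$. One must then show that this, together with $(\star)$ and the second invariant applied inside the newly formed uniting component and with the fact that the edge triggering the merge is tight, forces $\sum_{S'\in\hist(s_i)}\unc(S')\ge\pi_i/2$ at the moment {\sc ProcessHistory}($s_i$) runs; tracking where the uncharged growth along $\hist(s_i)$ sits relative to that merge, and using that {\sc ProcessHistory} consumes history-chains in increasing order of size, is the technical heart of the argument. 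The global and local conservation steps and the inactivation case above are routine once the invariants are set up.
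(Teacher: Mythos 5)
Your last paragraph is an honest admission of a gap, and unfortunately the gap is not closable: the case in which $s_i$ dies because a component uniting $i$ forms while $t_i$ is already dead is exactly where the equality $\sum_{S\in\hist(s_i)}y_{i,S}=\pi_i/2$ can fail, so no amount of ``tracking where the uncharged growth sits'' will finish the proof as stated. Concretely, take a four-vertex instance: pair $(s_1,t_1)$ with $\pi_1=2$, pair $(w,w')$ with huge penalty, and edges $c(s_1w)=2\eps$, $c(ww')=2D$ with $D>1$, $c(s_1t_1)=1+D+\delta$ with small $\delta<1-\eps$. Then $\{t_1\}$ grows alone, exhausts its potential and $t_1$ dies with $y_{1,\{t_1\}}=1$; meanwhile $\{s_1\}$ and $\{w\}$ merge at time $\eps$, the merged component meets $\{w'\}$ at time $D$, pair $(w,w')$ is satisfied, and {\sc ProcessHistory}($w$) wipes out all uncharged growth on every component of $\hist(s_1)$ except $\unc(\{s_1\})=\eps$. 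The surviving component retains potential $1-\eps$, grows by $\delta$, and the edge $s_1t_1$ then goes tight, uniting pair $1$ with $t_1$ dead; {\sc ProcessHistory}($s_1$) collects only $\eps+\delta\ll 1=\pi_1/2$, yet $s_1$ is marked dead. So the second statement is true only for terminals that die when their own component exhausts its potential; that weaker statement is what your inactivation argument proves, and it suffices for everything downstream (a terminal can die through a merge only if its partner already died through an inactivation, so every pair with a dead endpoint still satisfies $\sum_{S\in\S_i}y_{i,S}\ge\pi_i/2$, which is all that Theorems~\ref{thm:pcsf-4} and~\ref{thm:scaledPenalties} use). Your first statement is unaffected: the leftover uncharged growth at termination equals the sum of the values returned by {\sc ProcessHistory} at deactivations, so $\rho=0$ in the deactivation case alone already gives $y(S)=\sum_i y_{i,S}$.

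There is also a genuine flaw in your justification of the ``second invariant''. When a terminal $u\in P$ is processed after $P$ has been absorbed, the charge is spent first along $u$'s chain inside $P$ and then on components strictly containing $P$, so the left-hand side $\sum_{S'\subseteq P}\unc(S')$ falls only by the portion charged inside $P$, not by $\pi_{i_u}/2-\rho(u)$; and even granting your accounting, a drop of at most $\pi_{i_u}/2$ on the left against a drop of exactly $\pi_{i_u}/2$ on the right does not preserve a ``$\le$'' inequality. The invariant itself is true, and the repair is the same decomposition you deploy later: if the uncharged growth available on $u$'s chain inside $P$ is at least $\pi_{i_u}/2$, both sides drop by exactly $\pi_{i_u}/2$; otherwise the new left-hand side is $\sum_j\sum_{S'\subseteq D_j}\unc(S')$ over the sibling components $D_j=C_j\setminus C_{j-1}$ along $u$'s chain, which by induction (these are earlier-formed components, untouched by this call) is at most the total half-penalty of alive terminals in the $D_j$'s, i.e.\ the new right-hand side. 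With that repair, and with the observation that your global conservation identity survives deactivation events only because $\rho=0$ there (nothing is subtracted from any potential at a deactivation, so the proof must establish $\rho=0$ and the conservation law by one simultaneous induction rather than deriving $(\star)$ from the global identity), your proof of the first statement and of the deactivation case of the second goes through; this is essentially the invariant $\pot(S)+\sum_{S'\subseteq S}\unc(S')=\sum_{\text{alive }u\in S}\pi_{i_u}/2$ that the paper's one-line sketch cites, which itself should be read as holding for currently active components and also glosses over the per-chain availability issue you correctly identified as the crux.
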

\begin{proofsketch}
  It is easy to verify these statements by checking that the algorithm
  maintains the invariant that for each component $S \in \hat{\C}$,
  $\pot(S) + \sum_{S' \in \hat{\C} : S' \subseteq S}\unc(S') =
  \sum_{\textrm{Alive } i \in S} \pi_i / 2$.
\end{proofsketch}

\begin{lemma}\label{lem:feasibleDual}
  The variables $y_{i,S}$ from the algorithm {\sc Primal-Dual Forest}
  correspond to a feasible dual solution for the LP \textbf{Dual-PCSF}.
\end{lemma}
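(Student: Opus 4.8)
The plan is to check the three groups of constraints of \textbf{Dual-PCSF} separately. Nonnegativity is immediate: a variable $y_{i,S}$ receives a value only inside \textsc{ProcessHistory}$(s_i)$ or \textsc{ProcessHistory}$(t_i)$, and there it is set to $\unc(S)$ or to $\rem(s_i)$, both of which the algorithm keeps nonnegative throughout ($\unc(S)$ is only incremented by $\varDelta\ge 0$ or decremented by an amount that the ``Else'' branch of \textsc{ProcessHistory} keeps below its current value, and $\rem(s_i)$ starts at $\pi_i/2$ and only decreases while staying $\ge 0$).

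For the edge constraints, fix an edge $e$. By Lemma~\ref{lem:invariant} we have $\sum_{S\colon e\in\delta(S)}\sum_{i\colon S\in\S_i} y_{i,S}=\sum_{S\colon e\in\delta(S)} y(S)$, so it suffices to bound $\sum_{S\colon e\in\delta(S)} y(S)$ by $c_e$. This is the usual moat-growing invariant: in each iteration of the main loop this sum grows by exactly $\varDelta$ times the number of active components $S$ with $e\in\delta(S)$, and the loop chooses $\varDelta$ maximal subject to no edge being driven strictly past tightness, so $\sum_{S\colon e\in\delta(S)} y(S)\le c_e$ holds at every step. If $e$ never goes tight the bound is strict and we are done; otherwise $e$ is added to $F$ at the moment it goes tight, and since $\hat{\C}$ is laminar every component formed afterwards contains both endpoints of $e$ or neither, so $e\notin\delta(S)$ for all of them and nothing further is charged to $e$. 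Since $y(S)$ is frozen once $S$ is marked inactive or absorbed into a larger component (\textsc{ProcessHistory} modifies only $\unc(S)$ and the $y_{i,S}$, never $y(S)$), the final values still satisfy the constraint.

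For the penalty constraints, fix a pair $i$. The key observation is that no component $S\in\hat{\C}$ containing both $s_i$ and $t_i$ lies in $\hist(s_i)$ or in $\hist(t_i)$: the first active component to contain both $s_i$ and $t_i$ unites $i$, and at that instant $s_i$ and $t_i$ are marked satisfied or dead, so neither is alive after any component containing both is formed. Hence $\hist(s_i)$ and $\hist(t_i)$ are disjoint, and each component in $\hist(s_i)\cup\hist(t_i)$ contains exactly one of $s_i,t_i$ and therefore belongs to $\S_i$; since $y_{i,S}$ is nonzero only for $S$ in this union, $\sum_{S\in\S_i} y_{i,S}=\sum_{S\in\hist(s_i)} y_{i,S}+\sum_{S\in\hist(t_i)} y_{i,S}$. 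Finally, \textsc{ProcessHistory}$(s_i)$ is called exactly once (it is invoked only while $s_i$ is alive, and $s_i$ is marked satisfied or dead immediately afterwards), and inside it every assignment to some $y_{i,S}$ decreases $\rem(s_i)$ by that same amount while keeping $\rem(s_i)\ge 0$, so $\sum_{S\in\hist(s_i)} y_{i,S}\le\pi_i/2$; symmetrically $\sum_{S\in\hist(t_i)} y_{i,S}\le\pi_i/2$, and adding the two bounds gives $\sum_{S\in\S_i} y_{i,S}\le\pi_i$, as required.

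The edge constraints follow from the textbook argument once Lemma~\ref{lem:invariant} is available, so the part that needs real care is the penalty constraint --- in particular the claims that \textsc{ProcessHistory}$(s_i)$ fires only once and that \textsc{ProcessHistory}$(s_i)$ and \textsc{ProcessHistory}$(t_i)$ never charge a common component $S$. Both hinge on understanding precisely when a terminal is relabeled from alive to satisfied or dead, which is exactly where the bookkeeping in the main loop and in \textsc{ProcessHistory} matters; once that is pinned down, the remaining verifications are routine.
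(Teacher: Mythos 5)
Your proof is correct and follows essentially the same route as the paper's: verify the edge constraints via $y(S)=\sum_{i\colon S\in\S_i}y_{i,S}$ (Lemma~\ref{lem:invariant}) together with the fact that once $e$ goes tight no later component has $e$ on its boundary, and verify the penalty constraints by bounding the charges made in \textsc{ProcessHistory}$(s_i)$ and \textsc{ProcessHistory}$(t_i)$ by $\pi_i/2$ each. The only difference is that you spell out details the paper leaves implicit (the single invocation of \textsc{ProcessHistory} per terminal and the disjointness of $\hist(s_i)$ and $\hist(t_i)$), which is a faithful elaboration rather than a different argument.
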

\begin{proof}
  It is easy to verify that all constraints are satisfied: For any
  edge $e$, it is added to $F$ once $\sum_{S: e \in \delta(S)} y(S) =
  c_e$, and subsequently there is no active component $S$ such that $e
  \in \delta(S)$. As $y(S) = \sum_{i \colon S \in \S_i}
  y_{i,S}$, we satisfy the associated constraint. Further, for any
  pair $s_i, t_i$, the procedure {\sc ProcessHistory} guarantees that
  $\sum_{S: s_i \in S, t_i \not \in S} y_{i,S} \le \pi_i/2$, and similarly
  $\sum_{S: s_i \not \in S, t_i  \in S} y_{i,S} \le \pi_i/2$.
\end{proof}

\begin{theorem}\label{thm:pcsf-4}
  If $F$ denotes the forest returned by the algorithm {\sc Primal-Dual
    Forest}, then $\sum_{e \in F} c_e + \sum_{i \textrm{ separated by
    } F} \pi_i \le 4 \opt$, where $\opt$ denotes the cost of an
  optimal prize-collecting Steiner forest.
\end{theorem}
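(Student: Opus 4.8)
The plan is to follow the template of the primal-dual analyses of \cite{AKR,GW}, charging the cost of $F$ against the dual solution $\{y_{i,S}\}$ that the algorithm produces. Write $D := \sum_{S \in \hat{\C}} y(S)$. By Lemma~\ref{lem:invariant} we have $y(S) = \sum_{i : S \in \S_i} y_{i,S}$ for every $S$, so $D$ is exactly the objective value of this dual solution; by Lemma~\ref{lem:feasibleDual} the solution is feasible, so $D$ is at most the optimum of the LP relaxation, and hence $D \le \opt$ by weak duality. It then suffices to bound $\len(F)$ and $\pen(F)$ against $D$.

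For the penalty term, I would first argue that every terminal pair $i$ separated by the final forest $F$ has both $s_i$ and $t_i$ marked \emph{dead}: if either endpoint were marked satisfied, then (by inspection of the algorithm) a uniting component containing both $s_i$ and $t_i$ was formed at some moment, so $s_i$ and $t_i$ lie in the same component of $F$ at that moment, and the deletion phase, which removes only edges not separating any pair of satisfied terminals, keeps them connected --- a contradiction. Since $s_i$ and $t_i$ are both dead, Lemma~\ref{lem:invariant} gives $\sum_{S \in \hist(s_i)} y_{i,S} = \sum_{S \in \hist(t_i)} y_{i,S} = \pi_i/2$. The one point needing care is that every $S$ contributing to these sums actually separates $i$: if some $S \in \hist(s_i)$ contained both $s_i$ and $t_i$, then, being a newly merged (hence active) component when it is formed, $S$ would unite $i$ at its formation, either marking $s_i$ satisfied (impossible, as $s_i$ is dead) or marking $s_i$ dead at that instant, contradicting that $s_i$ is alive just after $S$ is formed. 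Hence $\hist(s_i)$ and $\hist(t_i)$ are disjoint subsets of $\S_i$, so $\sum_{S \in \S_i} y_{i,S} \ge \pi_i$; summing over separated pairs, $\pen(F) \le \sum_i \sum_{S \in \S_i} y_{i,S} = D$.

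For the edge term, I would use the standard amortized degree-counting argument. Since each $e \in F$ is tight, $c_e = \sum_{S : e \in \delta(S)} y(S)$, so $\len(F) = \sum_S y(S)\,|\delta(S) \cap F|$, and it is enough to show that at every instant $t$ of the growth phase, the set $\C(t)$ of currently active components satisfies $\sum_{S \in \C(t)} |\delta(S) \cap F| \le 2|\C(t)|$; integrating this over the execution and using $\sum_S y(S) = \int |\C(t)|\,dt$ then gives $\len(F) \le 2D$. To prove the instantaneous bound, contract each current component of the growing forest to a single node; the edges of the final $F$ joining distinct nodes form a forest $H_t$, and $\sum_{S \in \C(t)} |\delta(S) \cap F|$ is the sum of the $H_t$-degrees of the active nodes. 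In each tree of $H_t$ it suffices that every positive-degree leaf is active: an inactive leaf $\ell$ has a unique incident edge $e$ of $F$, which survived the deletion phase only because it lies on the $F$-path of some satisfied pair, forcing $\ell$ to contain an endpoint $a$ of that pair whose partner lies outside $\ell$; but $a$ being satisfied means a uniting (hence active) component containing both $a$ and its partner was formed before $\ell$ became inactive (else $a$ would have been killed upon inactivation), and in particular before time $t$, so from then on $a$'s component contains its partner --- contradicting $a \in \ell$ with the partner outside $\ell$ at time $t$. A tree of $H_t$ in which every positive-degree leaf is active has active-degree sum at most twice its number of active nodes, since nodes that are inactive but not leaves have degree at least $2$ and only reduce the sum.

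Combining the two bounds, $\cost(F) = \len(F) + \pen(F) \le 2D + D = 3D \le 3\,\opt$, which in particular establishes the claimed $4\,\opt$ bound (and any coarser accounting of either term still suffices for the stated constant). I expect the edge-counting step --- specifically, verifying that the positive-degree leaves of the contracted forest $H_t$ are active, where the structure of the deletion phase and the timing of inactivation of components interact --- to be the main obstacle; the penalty bound is essentially bookkeeping once Lemma~\ref{lem:invariant} and the observation that $\hist(s_i), \hist(t_i) \subseteq \S_i$ for dead terminals are in hand.
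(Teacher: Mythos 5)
Your proposal is correct and follows essentially the same route as the paper's proof: dual feasibility plus weak duality, the penalty bounded through Lemma~\ref{lem:invariant} via dead terminals (with the same care that sets in $\hist(s_i),\hist(t_i)$ actually separate the pair), and the edge cost bounded by the standard per-instant degree count in the contracted forest, using the deletion phase to rule out inactive leaves. The only difference is a refinement within that framework: by observing that a pair separated by the final $F$ must have \emph{both} endpoints dead (satisfied pairs stay connected through the deletion phase), you charge $\pi_i$ rather than $\pi_i/2$ per pair and obtain $3\,\opt$, slightly sharper than the paper's $4\,\opt$.
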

\begin{proof}
  As $\opt$ is upper bounded by the value of any feasible dual
  solution, it suffices to show 

  \[\sum_{e \in F} c_e + \sum_{i \textrm{ separated by } F} \pi_i \le
  4 \sum_{S} \sum_{i \colon S \in \S_i} y_{i,S}.\]

  For any $i$ to be separated by the forest $F$, either $s_i$ or $t_i$
  (or both) must have been marked dead, and thus from
  Lemma~\ref{lem:invariant}, we have $\pi_i \le 2 \sum_{S \in \S_i}
  y_{i,S}$. Hence it suffices to prove
  \[\sum_{e \in F} c_e + \sum_{i \textrm{ separated by } F} 2 \sum_{S
    \in \S_i} y_{i,S}\le \left(2 \sum_{S} \sum_{i \colon S \in \S_i} y_{i,S}
  \right) + \left(2 \sum_i \sum_{S \in \S_i} y_{i,S}\right).\]

  We prove $\sum_{e \in F} c_e \le 2 \sum_{S} y(S)$, which, using the
  fact that $y(S) = \sum_{i \colon S \in \S_i} y_{i,S}$, implies the
  inequality above.
  One can now use the standard primal-dual proof technique of
  \cite{AKR, GW}. Since an edge $e$ is added to $F$ only when it
  becomes tight, we have $c_e = \sum_{S: e \in \delta(S)} y(S)$;
  hence, the desired inequality is equivalent to:
  \[\sum_{S} |\delta(S) \cap F| y(S) \le 2 \sum_S y(S).\]

  To verify this inequality, we check that in every iteration of the
  while loop, the \emph{increase} in both the left- and right-hand
  sides satisfies the inequality. Since $y(S)$ increases only for
  components that are active in a given iteration, and we raise $y(S)$
  uniformly by $\varDelta$ for each $S \in \C$, this is equivalent to
  checking that in any iteration, $\sum_{\textrm{Active }S} |\delta(S)
  \cap F| \varDelta \le 2 \varDelta \cdot n_{a}$, where $n_a$ denotes
  the number of components active in this iteration. 

  Construct an auxiliary graph $H$ by beginning with $G(V,F)$, and
  shrinking each currently active and inactive component to a single
  vertex. Discard any isolated vertex of $H$ corresponding to an
  inactive component. We now argue that the average degree in $H$ of
  the vertices corresponding to active components is at most $2$; this
  completes the proof. To bound the average degree of the vertices
  corresponding to active components, we note that the average degree
  of all vertices is less than $2$ (as $H$ is a forest), and that each
  vertex corresponding to an inactive component has degree at least
  $2$. If the latter were not true, there is an inactive component
  which is a leaf of $H$, incident to a single edge $e$. But no
  inactive component separates a satisfied terminal from its partner,
  and so the edge $e$ would have been deleted from $F$ in the Deletion
  Phase of {\sc Primal-Dual Forest}. 
\end{proof}

\paragraph{Running time:} {\sc Prima-Dual Forest} can be implemented
in $O(n h + n^2 \log n)$ time where $h$ is the number of pairs with
strictly positive penalties and $n$ is the number of nodes in the
graph. The only additional difficulty in the algorithm as compared to
the primal-dual algorithmic framework of Goemans and Williamson
\cite{GW} for prize-collecting Steiner tree and related problems is
the {\sc ProcessHistory} step. When two active components merge, for
each pair $s_it_i$ united by this merge, we have to go over the sets
in $\hist(s_i)$ and $\hist(t_i)$. The sets in $\hat{\C}$ form a
laminar family on $V$ and hence $|\hat{\C}| = O(n)$ and thus {\sc
  ProcessHistory}($s_i$) can be implemented in $O(n)$ time by
considering each set in $|\hat{\C}|$. Similarly, {\sc
  ProcessHistory}($t_i$). Since a pair is united at most once, and
each terminal is marked dead at most once, the total work involved in
processing the histories is $O(n h)$.  Priority queues can be used to
keep track of the events corresponding to edges becoming tight and
components becoming inactive; this is very similar to the
implementation in \cite{GW} and the total work involved is $O(n^2 \log
n)$ time.

\section{The Reduction to Bounded Treewidth: Building Spanners}
\label{sec:reduction}

Recall that the first step in building a spanner for the Steiner tree
and forest problems was to construct a new plane graph $G'$ in which
(i) all terminals are on the outer face and (ii) the length of the
outer face is $O(1) \cdot \opt$. This was done by splicing open an
Euler tour of an $O(1)$ approximate solution in the original graph
$G$, and converting the tour into the outer face.\footnote{In the case
  of Steiner forest, this is done separately for distinct trees in the
  $O(1)$-approximate forest.} In the prize-collecting versions,
however, we do not know which terminals to connect, and it is not
possible to find an $O(1)$-approximate solution in which \emph{all}
the terminals are connected to the root (for Steiner tree) or to their
partners (for Steiner forest). 

In Section~\ref{subsec:scaled}, we prove Theorems~\ref{thm:intro-tree}
and \ref{thm:intro-forest}, showing that we can find a solution of
cost $O(1/\eps) \opt$ which connects ``almost'' all the terminals
connected by any optimal solution. More precisely, the total penalty
of terminals connected by an optimal solution but not by our
$O(1/\eps)$-approximate solution is at most $\eps \opt$. In
Appendix~\ref{app:spanner}, we complete the construction of the
spanner, using ideas from \cite{BKM, BateniHM}.

\subsection{Scaling Penalties to Capture Important Terminals }
\label{subsec:scaled}

We prove Theorem~\ref{thm:intro-forest}, which implies
Theorem~\ref{thm:intro-tree}, as prize-collecting Steiner tree is a
special case of prize-collecting Steiner forest. 
Given an instance $I$ of the prize-collecting Steiner forest on a
graph $G(V,E)$, with $c_e$ denoting the cost of edge $e \in E$ and
$\pi_i$ the penalty for not connecting $s_i$ to $t_i$, we define a new
instance $I'$ as follows: The graph and edge cost functions are
unchanged, but we scale the penalties so that the penalty for not
connecting $s_i$ to $t_i$ is $\pi'_i = 2 \pi_i / \eps$.

\begin{theorem}\label{thm:scaledPenalties}
  Let $F^*$ be \emph{any} optimal solution to an instance $I$ of
  prize-collecting Steiner forest, and let $\opt = \sum_{e \in F^*}
  c_e + \sum_{i \textrm{ separated by } F^*} \pi_i$. Let $F'$ be the
  forest output by algorithm {\sc Primal-Dual Forest} on the instance
  $I'$ with penalties scaled as above. Let $X$ denote the index set of
  the terminal pairs separated by $F'$ but not by $F^*$.  Then,
  $\sum_{e \in F'} c_e \le 8 \opt / \eps$, and $\sum_{i \in X} \pi_i
  \le \eps \opt$.
\end{theorem}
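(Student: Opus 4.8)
The plan is to run two separate arguments, one for each inequality, both leveraging the feasible dual $\{y_{i,S}\}$ produced by {\sc Primal-Dual Forest} on the scaled instance $I'$ and the structural facts from Lemmas~\ref{lem:invariant}, \ref{lem:feasibleDual}, and Theorem~\ref{thm:pcsf-4}. For the length bound $\sum_{e \in F'} c_e \le 8\opt/\eps$: apply Theorem~\ref{thm:pcsf-4} to the instance $I'$ to get $\len(F') + \pen'(F') \le 4\opt'$, where $\opt'$ is the optimum of $I'$ and $\pen'$ is measured with the scaled penalties $\pi'_i = 2\pi_i/\eps$. Since $\len(F') \le \len(F') + \pen'(F')$, it suffices to bound $\opt' \le 2\opt/\eps$. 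For this, plug the fixed optimal solution $F^*$ of $I$ into $I'$: its cost in $I'$ is $\len(F^*) + \sum_{i \text{ sep.\ by }F^*}\pi'_i = \len(F^*) + \frac{2}{\eps}\sum_{i \text{ sep.\ by }F^*}\pi_i \le \frac{2}{\eps}\big(\len(F^*) + \sum_{i \text{ sep.\ by }F^*}\pi_i\big) = \frac{2}{\eps}\opt$ (using $\eps \le 1$, so $1 \le 2/\eps$). Hence $\opt' \le 2\opt/\eps$ and $\len(F') \le 4\opt' \le 8\opt/\eps$.

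For the second inequality, the key is to charge the penalties of pairs in $X$ against the portion of the dual that is ``wasted'' relative to $F^*$. Since each $i \in X$ is separated by $F'$, at least one of $s_i, t_i$ was marked dead during the run on $I'$, so by Lemma~\ref{lem:invariant} (applied to $I'$) we get $\pi'_i = 2\pi_i/\eps \le 2\sum_{S \in \S_i} y_{i,S}$, i.e. $\pi_i \le \frac{\eps}{2}\cdot 2\sum_{S\in\S_i}y_{i,S} = \eps \sum_{S \in \S_i} y_{i,S}$. Summing over $i \in X$ gives $\sum_{i\in X}\pi_i \le \eps \sum_{i \in X}\sum_{S \in \S_i} y_{i,S}$. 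Now I want to show $\sum_{i \in X}\sum_{S\in\S_i} y_{i,S} \le \opt$. Here I would use that for $i \in X$, the pair is \emph{not} separated by $F^*$, so $s_i, t_i$ lie in the same component of $F^*$; intuitively the dual growth charged to such pairs must be ``absorbed'' by the edges of $F^*$ or by penalties $F^*$ itself pays. Concretely: consider the cut contributions. Every set $S$ with $y_{i,S} > 0$ for some $i \in X$ separates $s_i$ from $t_i$; since $F^*$ connects $s_i$ to $t_i$, the set $S$ must contain at least one edge of $F^*$ in $\delta(S)$, or else... — more carefully, I would run the standard moat/cut-packing argument: $\sum_{i \in X}\sum_{S\in\S_i} y_{i,S} = \sum_S \sum_{i \in X : S \in \S_i} y(\text{portion})$, and bound this by $\sum_{e \in F^*} c_e$ using the dual feasibility constraints $\sum_{S: e\in\delta(S)}\sum_{i: S\in\S_i}y_{i,S}\le c_e$ restricted to edges of $F^*$, and by the penalty constraints $\sum_{S\in\S_i}y_{i,S}\le\pi_i$... but wait, that gives the wrong scaling. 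The cleaner route: since $F^*$ is feasible for $I$, every $S$ carrying dual mass $y_{i,S}$ with $i\in X$ has $\delta(S)\cap F^* \neq \emptyset$ (because $F^*$ connects $s_i,t_i$), so assigning each such $S$ to an edge $e(S)\in\delta(S)\cap F^*$ and using the edge constraint on $I'$, $\sum_{S: e\in\delta(S)}\sum_{i:S\in\S_i}y_{i,S}\le c_e$, we get $\sum_{i\in X}\sum_{S\in\S_i}y_{i,S}\le \sum_{e\in F^*}c_e \le \opt$. Combining, $\sum_{i\in X}\pi_i \le \eps\opt$.

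The main obstacle I anticipate is the charging argument in the second part: making precise that the dual mass of $y_{i,S}$ for pairs $i \in X$ is fully absorbed by $\len(F^*)$. The subtlety is that a single set $S$ may carry dual mass from several pairs, some in $X$ and some not, and one must be careful not to overcount the capacity of an edge $e \in F^*$ across different sets $S$ with $e \in \delta(S)$; however, the edge-packing constraint $\sum_{S: e\in\delta(S)}\sum_{i:S\in\S_i}y_{i,S}\le c_e$ already accounts for \emph{all} dual mass through $e$, so summing these constraints over $e \in F^*$ and noting each relevant $S$ (those carrying $X$-mass) has at least one representative edge in $F^*$ gives the bound — one just needs that distinct sets $S$ can share an edge $e$ but the constraint at $e$ bounds their total, so assigning a single representative edge per $S$ and summing is valid and loses nothing. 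A second minor point is handling the case where only one of $s_i, t_i$ is marked dead for $i \in X$: Lemma~\ref{lem:invariant} gives $\sum_{S\in\hist(s_i)}y_{i,S} = \pi'_i/2$ for that dead endpoint, and since $\S_i \supseteq \hist(s_i)$ (every set in $\hist(s_i)$ containing $s_i$ but, at the relevant time, not $t_i$, separates the pair), we still get $\sum_{S\in\S_i}y_{i,S}\ge\pi'_i/2$, which is what the argument needs. I would also double-check the edge case $\eps \le 1$ is actually used, as flagged in the excerpt's preliminaries, precisely at the step $1 \le 2/\eps$ above.
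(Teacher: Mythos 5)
Your proof is correct and follows essentially the same route as the paper: the length bound is identical (plug $F^*$ into $I'$ to get $\opt' \le 2\opt/\eps$ and apply the $4$-approximation guarantee of Theorem~\ref{thm:pcsf-4}), and your cut-charging of the dual mass of pairs in $X$ against $\len(F^*)$ is exactly the paper's argument, which merely packages the same bound by defining a feasible solution $z_S=\sum_{i\in X\colon S\in\S_i}y_{i,S}$ to a Steiner-forest dual on the pairs in $X$ and invoking weak duality to derive a contradiction with $\sum_{i\in X}\pi'_i>2\opt$. The points you flag (summing the edge constraints over $e\in F^*$ rather than choosing representative edges, the containment $\hist(s_i)\subseteq\S_i$ behind Lemma~\ref{lem:invariant}, and the use of $\eps\le 1$ in bounding $\opt'$) are all sound and are handled the same way, implicitly or explicitly, in the paper.
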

\begin{proof}
  We first note that the cost of an optimal solution to $I'$ is at
  most $2 \opt/\eps$; simply use the forest $F^*$, which pays $2/\eps$
  times as much penalty for every separated pair as it did in
  $I$. Thus, as {\sc Primal-Dual Forest} is a $4$-approximation, we
  have $\sum_{e \in F'} c_e \le 8 \opt/\eps$.

  To prove that the total penalty of pairs in $X$ is small, consider a
  Steiner forest instance defined on these pairs: As $F^*$ connects
  all the terminals in $X$ to their partners, the cost of an optimal
  Steiner forest for $X$ is at most $\opt$. Suppose, by way of
  contradiction, that $\sum_{i \in X} \pi_i > \eps \opt$, and hence
  that $\sum_{i \in X} \pi'_i > 2 \opt$.  Now consider the following
  dual of a natural LP for the Steiner Forest instance induced by $X$:

  \begin{center} \textbf{Dual-Steiner Forest($X$)}\\
    \vspace{-0.2in}
    \[\hspace{-1in} \max  \sum_{S \textrm{ separating some }i \in X}  z_S \]
    \vspace{-0.2in}
    \begin{align*} 
      \sum_{S \colon e \in \delta(S)} z_S \ \le & \quad c_e & (\forall e) \\
      z_S \ \ge & \quad 0 & (\forall S)
    \end{align*}
  \end{center}

  Let $y_{i,S}$ be the feasible solution to \textbf{Dual-PCSF}
  returned by {\sc Primal-Dual Forest} on instance $I'$.  Now,
  construct a dual solution to the LP \textbf{Dual-Steiner
    Forest($X$)} as follows: For each set $S$ separating some pair
  $s_it_i$ with $i \in X$, set $z_S = \sum_{i \in X} y_{i,S}$ .  As
  $\sum_{S : e \in \delta(S)} \sum_{i \colon S \in \S_i} y_{i,S} \le
  c_e$ from the feasibility of the solution to \textbf{Dual-PCSF}, we
  conclude that the dual variables $z_S$ correspond to a feasible
  solution of \textbf{Dual-Steiner Forest($X$)}.
  
  Thus, we have a feasible solution to \textbf{Dual-Steiner
    Forest($X$)} of total value $\sum_{S} \sum_{i \in X \colon S \in
    \S_i} y_{i,S}$. But each $i \in X$ was not connected by $F'$, and
  so we must have marked either $s_i$ or $t_i$ as dead. Hence, from
  Lemma~\ref{lem:invariant} $\sum_{S \in \S_i} y_{i,S} \ge
  \pi'_i/2$. That is, the value of our feasible dual solution is at
  least $\sum_{i \in X} \pi'_i / 2 > \opt$. By weak duality, the
  length of any Steiner forest for $X$ must be greater than
  $\opt$. But $F^*$ is a Steiner forest for $X$ of total length at
  most $\opt$, which is a contradiction.
\end{proof}

\bigskip
\noindent We can now prove Theorem~\ref{thm:intro-forest}:

\begin{proofof}{Theorem~\ref{thm:intro-forest}}
  Let $F^*$ be an optimal solution to a given instance $I$ of Steiner
  forest, and let $\opt = \sum_{e \in F^*} c_e + \sum_{i \textrm{
      separated by } F^*} \pi_i$. Construct a forest $F$ by running
  algorithm {\sc Primal-Dual Forest} on the scaled instance $I'$; from
  Theorem~\ref{thm:scaledPenalties} above, the total length of edges
  in $F$ is at most $8\opt/\eps$. If $X$ denotes the terminal pairs
  separated by $F$ but not by $F^*$, the penalty paid by $F$ is at
  most $\sum_{i \text{ separated by } F^*} \pi_i + \sum_{i \in X} \pi_i \le
  \opt + \eps \opt$. Thus, $F$ is a forest of total cost $O(8/\eps + (1
  + \eps))  \opt$.

  It remains only to argue that there is a $(1+\eps)$-approximate
  solution $F'$ that connects a subset of the pairs connected by
  $F$. Let $F^*_{-X}$ denote a solution formed from $F^*$ by paying
  the penalty for any terminal pair in $X$; clearly, the cost of
  $F^*_{-X}$ is the cost of $X$ added to $\sum_{i \in X} \pi_i$, which
  is at most $\opt + \eps \opt$.
\end{proofof}

\bigskip
The first part of the spanner construction for the prize-collecting
Steiner tree problem is now complete: As guaranteed by
Theorem~\ref{thm:intro-tree}, find a tree $T$ of cost $O(1/\eps) \opt$
such that there exists a $(1+\eps)$-approximate solution only
connecting terminals in $T$. Now form an Euler tour of $T$ by
duplicating edges, splice along this tour, and make the tour the outer
face of a new graph $G'$. Now, we have a graph in which all (relevant)
terminals are on the outer face, and the total length of the outer
face is $O(1/\eps) \opt$. The rest of the spanner construction
proceeds along the lines of \cite{BKM}; see
Appendix~\ref{app:spanner}.

For the prize-collecting Steiner forest problem, however, more work is
required. The forest $F$ guaranteed by Theorem~\ref{thm:intro-forest}
may have many components, which cannot be treated in isolation. As in
\cite{BateniHM}, we use a {\em prize-collecting} clustering scheme to
merge some components of the forest. The intuition is that after this
clustering, the remaining components are ``far apart'', and hence can
be treated separately. The prize-collecting clustering algorithm is as
follows: Contract each tree $T_i$ of $F$ to a single vertex $v_i$, to
obtain a new graph $\hat{G}$. We let $v_1, v_2, \ldots$ denote the
vertices of $\hat{G}$ corresponding to the contracted trees $T_1, T_2,
\ldots$ of $F$; note that $\hat{G}$ additionally has the vertices of
$G$ not contained in any tree $T_i$.  With each $v_i$, we associate a
potential $\phi_{v_i} = \frac{1}{\eps} \len(T_i)$.  Now run the
standard prize-collecting primal-dual algorithm as in \cite{GW}, using
potentials $\phi_{v_i}$. Initially, each vertex is an active component,
with potential $\phi_{v_i}$; in each step, the algorithm decreases
potentials of all active components uniformly until either a component
runs out of potential, or an edge becomes ``tight''. In the former
case, the component is marked as inactive. In the latter case, the two
components adjacent to the edge are merged, and their potentials
combined.  This is similar to the algorithm {\sc Primal-Dual Forest}
in Section~\ref{sec:primalDual}, but simpler, as we do not have the
additional accounting necessary to handle terminal pairs that only
wish to connect to each other. As in {\sc Primal-Dual Forest}, the
algorithm maintains a collection of dual variables $y_{v_i,S}$ for each
$v_i$ and set $S \subseteq V(\hat{G})$. A complete
description of this algorithm is given in \cite{BateniHM}; we omit
details from this paper.

The first stage of the clustering algorithm terminates when all
components are marked inactive. Let $F_1$ denote the forest of tight
edges selected by the algorithm after the first stage.  In the second
stage, we delete any edge $e$ from $F_1$ if it is the unique edge
incident to an inactive component. Let $F_2$ denote the set of edges
remaining. 

\begin{lemma}[\cite{BateniHM}]\label{lem:pcCost}
  The total length of all edges in $F_2$ is at most $2 \sum_{i} \phi_{v_i} = \frac{2}{\eps} \cdot \len(F)$, which is
  $O(\frac{1}{\eps^2}) \opt$.
\end{lemma}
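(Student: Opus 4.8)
The statement to prove is Lemma~\ref{lem:pcCost}: the total length of edges in $F_2$ is at most $2\sum_i \phi_{v_i} = \frac{2}{\eps}\len(F)$, which is $O(1/\eps^2)\opt$.

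\paragraph{Proof proposal.}

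The plan is to run the standard Goemans--Williamson primal-dual analysis for prize-collecting Steiner tree/forest on the contracted graph $\hat G$, exactly as in \cite{GW,BateniHM}. First I would recall that the clustering algorithm produces dual variables $y_{v_i,S}\ge 0$ which, by construction (each active component's potential decreases at the same uniform rate at which its dual grows, and a component is deactivated exactly when its potential hits zero), satisfy the packing constraints $\sum_{S:\,e\in\delta(S)}\sum_{i:v_i\in S} y_{v_i,S}\le c_e$ for every edge $e$ and $\sum_{S\ni v_i}y_{v_i,S}\le \phi_{v_i}$ for every $i$. Writing $y(S)=\sum_{i:v_i\in S}y_{v_i,S}$ for the aggregate dual on a set $S$, the two facts I need are: (a) $\sum_S y(S)\le \sum_i \phi_{v_i}$, because the total dual raised is bounded by the total potential available; and (b) every edge $e\in F_1\supseteq F_2$ is tight, i.e.\ $c_e=\sum_{S:e\in\delta(S)}y(S)$, since edges are added only when they become tight and no active component crosses a tight edge thereafter.

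Given (b), summing over $e\in F_2$ gives $\len(F_2)=\sum_{e\in F_2}c_e=\sum_S |\delta(S)\cap F_2|\,y(S)$, so it suffices to show $\sum_S |\delta(S)\cap F_2|\,y(S)\le 2\sum_S y(S)$. As in the proof of Theorem~\ref{thm:pcsf-4}, I would verify this iteration by iteration: since $y(S)$ is raised by $\varDelta$ only for currently active components $S$, it is enough to check that in each iteration $\sum_{\text{active }S}|\delta(S)\cap F_2|\le 2 n_a$, where $n_a$ is the number of active components. For this, contract each current component (active or inactive) of $(V(\hat G),F_2)$ to a single vertex to get an auxiliary forest $H$, discard isolated vertices coming from inactive components, and note: $H$ is a forest so its average degree is $<2$; and every remaining vertex coming from an inactive component has degree $\ge 2$ in $H$, because if such a vertex were a leaf, its unique incident edge of $F_2$ would be the unique edge incident to an inactive component and hence would have been removed in the second (deletion) stage — contradiction. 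Therefore the vertices corresponding to active components have average degree at most $2$, which is exactly the bound needed.

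Combining, $\len(F_2)\le 2\sum_S y(S)\le 2\sum_i\phi_{v_i}=\frac{2}{\eps}\sum_i\len(T_i)=\frac{2}{\eps}\len(F)$. Finally, by Theorem~\ref{thm:intro-forest} we have $\len(F)=O(1/\eps)\opt$, so $\len(F_2)=O(1/\eps^2)\opt$, as claimed. The one step requiring care — and the main obstacle — is the leaf argument: one must make sure the deletion stage as defined (remove $e$ from $F_1$ if it is the \emph{unique} edge incident to an inactive component) is precisely what forces every inactive-component vertex in $H$ to have degree at least two, and that iterating the deletion does not disturb this; since $F_1$ is a forest and deletions only remove leaf-edges at inactive components, no new low-degree inactive leaf is created, so a single pass suffices and the degree bound is maintained. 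Everything else is the verbatim $2$-approximation bookkeeping of \cite{GW}, and since the statement is attributed to \cite{BateniHM} one may alternatively just cite it.
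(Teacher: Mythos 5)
Your overall route is the right one: the paper itself never proves Lemma~\ref{lem:pcCost} (it is imported from \cite{BateniHM}), and your argument is exactly the dual-charging scheme the paper uses for Theorem~\ref{thm:pcsf-4} — every edge of $F_1$ is tight, $\sum_S y(S)\le\sum_i\phi_{v_i}$ because all dual growth is paid for by potential, and the per-iteration average-degree count on the forest obtained by contracting the current components. All of those steps are fine.

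The gap is precisely at the point you flagged and then waved away. The degree argument needs that at every iteration, every currently inactive component $C$ has $|\delta(C)\cap F_2|\neq 1$; equivalently, no deactivated cluster may be crossed by exactly one edge of $F_2$. A single deletion pass over $F_1$ does not ensure this, and your claim that ``no new low-degree inactive leaf is created'' is false: deleting $e$ because it is the unique $F_1$-edge crossing some deactivated cluster $S$ can drop the crossing number of a different deactivated cluster $S'$ from two to one, and the surviving edge of $\delta(S')$ is never removed by a one-pass rule. This is not just a defect of the argument; it can kill the bound itself. Take a hub $A$ of large potential with $m$ pendant chains $A\!-\!x_i\!-\!y_i\!-\!z_i$, where $\phi_{x_i}=\phi_{z_i}=\delta$ (tiny) and $\phi_{y_i}=1$, with edge costs $c(x_iy_i)=0.5+\delta$, $c(Ax_i)=2.5+\delta$, $c(y_iz_i)=2+\delta$ and $\phi_A\approx 3$. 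Then the deactivated clusters on branch $i$ are $\{x_i\}$, $\{z_i\}$ and $\{x_i,y_i\}$; the one-pass rule deletes only $y_iz_i$ (the unique $F_1$-edge crossing $\{z_i\}$), while $Ax_i$ and $x_iy_i$ both survive even though $\{x_i,y_i\}$ is now crossed by a single remaining edge. The surviving forest has length about $3m$, whereas $2\sum_i\phi_{v_i}\approx 2m+6$, so the claimed inequality fails for large $m$; correspondingly, in the iterations where the $\{x_i,y_i\}$ are inactive singleton-leaves hanging off the active hub, the per-iteration inequality $\sum_{\textrm{active }S}|\delta(S)\cap F_2|\le 2n_a$ is violated ($m$ versus $2$). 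The repair is to apply the deletion rule \emph{iteratively}, as in \cite{BateniHM}: repeatedly remove an edge that is the unique remaining edge crossing some deactivated cluster, until no such cluster exists. This process terminates, it guarantees that every non-isolated inactive vertex of the auxiliary forest has degree at least two at every iteration, and with that in place the rest of your argument (and the final accounting $\len(F_2)\le\frac{2}{\eps}\len(F)=O(\frac{1}{\eps^2})\opt$ via Theorem~\ref{thm:intro-forest}) goes through verbatim.
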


We also use the following two technical lemmas. A graph $\hat{H}
\subseteq \hat{G}$ is said to \emph{exhaust} a vertex $v_i \in
V(\hat{G})$ if, for all $S$ such that $y_{v_i,S} > 0$, $\hat{H}$
contains at least one edge of $\delta(S)$. (Recall that $v_i$
corresponds to the contracted tree $T_i$ of $F$.)

\begin{lemma}[Lemma 10 of \cite{BateniHM}]\label{lem:mergingComponents}
  Let $V'$ be the set of vertices of $\hat{G}$ exhausted by a graph
  $H$. $\len(H) \ge \sum_{v_i \in V'
} \phi_{v_i}$.
\end{lemma}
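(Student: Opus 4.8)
The plan is to derive the bound from weak LP duality together with the potential-accounting properties of the clustering run, exactly in the style of the standard primal-dual charging argument. First I would recall the relevant structure: the clustering algorithm is the Goemans--Williamson prize-collecting primal-dual procedure run on $\hat G$ with initial potentials $\phi_{v_i}$, and (as in {\sc Primal-Dual Forest}) it maintains nonnegative dual variables $y_{v_i,S}$ with $y(S) = \sum_{i} y_{v_i,S}$ for every $S \in \hat{\C}$. Two facts about the final duals will be used. (i) They are feasible for the natural LP, so $\sum_{S \colon e \in \delta(S)} y(S) \le c_e$ for every edge $e$ of $\hat G$. (ii) Since the main loop runs until every component is inactive, every vertex $v_i$ is eventually absorbed into an inactive component, and the bookkeeping charges its entire potential, i.e.\ $\sum_{S} y_{v_i,S} = \phi_{v_i}$; moreover, because $\hat{\C}$ is a laminar family, the sets $S$ with $y_{v_i,S} > 0$ all contain $v_i$ and form a chain under inclusion. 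Fact (ii) is the exact analogue of Lemma~\ref{lem:invariant} (and the invariant in its proof sketch) transported to the simpler clustering procedure.

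Granting these, the computation is short. For an arbitrary $H \subseteq \hat G$,
\[
\len(H) \;=\; \sum_{e \in H} c_e \;\ge\; \sum_{e \in H} \sum_{S \colon e \in \delta(S)} y(S) \;=\; \sum_{S} |\delta(S) \cap H|\, y(S) \;=\; \sum_{i} \sum_{S} |\delta(S) \cap H|\, y_{v_i,S} \;\ge\; \sum_{v_i \in V'} \sum_{S} |\delta(S) \cap H|\, y_{v_i,S},
\]
where the first inequality is dual feasibility (i), the next two equalities are interchanging summation order and using $y(S) = \sum_i y_{v_i,S}$, and the last inequality drops the nonnegative terms indexed by $v_i \notin V'$. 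Now fix $v_i \in V'$: by definition $H$ exhausts $v_i$, so $|\delta(S) \cap H| \ge 1$ for every $S$ with $y_{v_i,S} > 0$, and hence $\sum_{S} |\delta(S) \cap H|\, y_{v_i,S} \ge \sum_{S} y_{v_i,S} = \phi_{v_i}$ by (ii). Summing over $v_i \in V'$ gives $\len(H) \ge \sum_{v_i \in V'} \phi_{v_i}$, as claimed.

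The argument above is essentially bookkeeping once fact (ii) is in hand, so the main obstacle is establishing (ii): that the clustering's accounting really does charge each $\phi_{v_i}$ in full and distributes it among sets $S$ that contain $v_i$ and are nested. Here I would either cite the analysis of the clustering algorithm in \cite{BateniHM} directly, or re-derive (ii) by induction over the execution, maintaining the invariant $\pot(S) + \sum_{S' \subseteq S} \unc(S') = \sum_{v_j \in S} \phi_{v_j}$ for every $S \in \hat{\C}$, precisely as in the proof sketch of Lemma~\ref{lem:invariant}. One subtlety worth flagging is why ``exhausts'' is defined to require an edge of $\delta(S)$ for \emph{every} $S$ with $y_{v_i,S}>0$ rather than just the outermost such $S$: since these sets form a chain, $H$ cutting the largest one does not imply it cuts the smaller ones, and the lower bound $\sum_S |\delta(S)\cap H|\,y_{v_i,S} \ge \phi_{v_i}$ genuinely needs a crossing edge for each.
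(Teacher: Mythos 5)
Your argument is correct, but note that the paper itself does not prove this lemma at all: it is imported verbatim as Lemma~10 of \cite{BateniHM}, and the surrounding text explicitly omits the description and analysis of the clustering algorithm. So there is no in-paper proof to compare against; what you have written is a reconstruction of the standard primal-dual charging argument that underlies the cited result, and it is the right one. The chain $\len(H) \ge \sum_S |\delta(S)\cap H|\,y(S) \ge \sum_{v_i \in V'} \sum_S |\delta(S)\cap H|\, y_{v_i,S} \ge \sum_{v_i \in V'} \phi_{v_i}$ is sound, and you correctly identify the crux as your fact (ii): that the first stage of the clustering runs until every component is inactive, so each vertex's potential is spent in full and distributed among nested sets containing it, giving $\sum_S y_{v_i,S} = \phi_{v_i}$ with $y_{v_i,S} > 0$ only for $S \ni v_i$. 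That invariant is exactly the simplified analogue of Lemma~\ref{lem:invariant} (there are no satisfied terminals here, so nothing is ever left uncharged), and either citing \cite{BateniHM} for it or re-deriving it by the $\pot + \unc$ bookkeeping you describe is acceptable. Your closing remark about why exhaustion must cut every set in the chain, not just the outermost, is also on point -- dropping it would break the inequality $\sum_S |\delta(S)\cap H|\, y_{v_i,S} \ge \phi_{v_i}$.
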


\begin{lemma}[\cite{BateniHM}]\label{lem:exhaust}
  Let $\hat{H} \subseteq \hat{G}$ connect two vertices $v_1, v_2$ in
  distinct components of $F_2$. Then, $\hat{H}$ exhausts at least one
  of $v_1, v_2$.
\end{lemma}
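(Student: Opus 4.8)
The plan is to argue by contradiction: suppose $\hat{H}$ exhausts neither $v_1$ nor $v_2$, and derive a violation of the fact (from the primal-dual clustering algorithm and its deletion phase) that every inactive component which is a leaf of $F_1$ had its unique incident edge deleted, together with the structural properties of the laminar family of moats grown by the algorithm. First I would recall what ``not exhausted'' means: if $\hat{H}$ does not exhaust $v_i$, then there is a moat $S \ni v_i$ with $y_{v_i,S}>0$ such that $\hat{H} \cap \delta(S) = \emptyset$. Among all such moats, take a maximal one, call it $S_1$ for $v_1$ and $S_2$ for $v_2$; these are the ``largest'' active-at-some-point components surrounding $v_1$ and $v_2$ respectively that $\hat{H}$ fails to cross. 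Since the family $\hat{\C}$ of moats is laminar, either $S_1$ and $S_2$ are disjoint, or one contains the other.

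The key step is to handle the disjoint case (the nested case is easier and reduces to it, or is ruled out directly). If $S_1$ and $S_2$ are disjoint, then $\hat H$ connects $v_1 \in S_1$ to $v_2 \in S_2$, so $\hat H$ must contain a path from inside $S_1$ to inside $S_2$; this path leaves $S_1$ and so uses an edge of $\delta(S_1)$ --- but by maximality of $S_1$ and the way moats behave, $\hat H \cap \delta(S_1) = \emptyset$, a contradiction. The only subtlety is why maximality of the chosen moat $S_1$ implies $\hat H \cap \delta(S_1) = \emptyset$ rather than merely $y_{v_1,S_1}$ being positive for some smaller moat: here I would use that $S_1$ was chosen maximal among moats with $y_{v_1,S_1} > 0$ and $\hat H \cap \delta(S_1) = \emptyset$; if $\hat H$ did cross $\delta(S_1)$ we would simply have picked a different witness, so the real content is that a witness exists at all and that maximal witnesses can be taken. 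The existence of a witness moat that is not crossed is exactly the definition of ``$\hat H$ does not exhaust $v_1$''. Finally, to rule out the nested case where, say, $S_1 \subsetneq S_2$: then $v_1 \in S_1 \subseteq S_2$, and also we need $v_2 \notin S_1$ (else $S_1$ would contain both $v_i$'s, which by the laminar structure of moats would force $S_1$ to be a moat formed after $v_1$ and $v_2$ merged --- but $v_1, v_2$ are in distinct components of $F_2$, and I would argue that a moat containing both would have its separating edge retained, contradiction); so $\hat H$, connecting $v_1$ to $v_2$, crosses $\delta(S_1)$, again contradicting that $S_1$ is a non-crossed witness.

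I expect the main obstacle to be making precise the interaction between the moat family $\hat{\C}$ and the set $F_2$ of retained edges --- specifically, showing that the ``largest uncrossed moat'' around $v_1$ is disjoint from $v_2$, i.e.\ that no single moat with positive dual load straddles the $F_2$-components of $v_1$ and $v_2$ in a way that makes the nested case vacuous. This is where one must invoke that distinct components of $F_2$ are genuinely separated (no edge of $F_2$ joins them and no moat that united them survived the deletion phase), which is precisely the structural content carried over from \cite{BateniHM}; since the lemma is attributed there, I would cite that structure rather than reprove it, and the contradiction argument above then goes through cleanly.
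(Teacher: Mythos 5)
You should first note that the paper itself does not prove this statement: Lemma~\ref{lem:exhaust} is imported verbatim from \cite{BateniHM}, so there is no in-paper argument to compare against; your proposal has to stand on its own. Its skeleton is the right one (assume neither $v_1$ nor $v_2$ is exhausted, take uncrossed witness moats $S_1\ni v_1$, $S_2\ni v_2$ with $y_{v_1,S_1}>0$, $y_{v_2,S_2}>0$, and use laminarity), and the easy cases are handled correctly: since $\hat H$ contains a $v_1$--$v_2$ path and misses $\delta(S_1)$, the path cannot leave $S_1$, so in fact $v_2\in S_1$ and symmetrically $v_1\in S_2$; the ``disjoint'' case and the ``$v_2\notin S_1$'' case are vacuous, and the maximality of the witnesses plays no role. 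The only substantive case is therefore the one you dispose of in a parenthesis, and there your justification is wrong as stated: it is \emph{not} true that ``a moat containing both $v_1$ and $v_2$ would have its separating edge retained.'' The deletion phase can disconnect two vertices lying in a common moat: e.g.\ if $v_2$'s potential ran out early, its dead moat may later be absorbed into a growing component through a single tight edge, and that edge (being the unique edge incident to an inactive component) is deleted, separating $v_2$ from $v_1$ in $F_2$ even though some moat in $\hat{\C}$ contains both. Deferring this step to \cite{BateniHM} is circular here, because this structural fact \emph{is} the lemma.

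The missing idea is precisely where the hypotheses $y_{v_1,S_1}>0$ and $y_{v_2,S_2}>0$ must be used. Taking WLOG $S_1\subseteq S_2$, both vertices still have unspent potential at or after the time the moat $S_1$ exists ($v_1$ pays for $S_1$ itself, $v_2$ pays for the superset $S_2$, and a vertex inside a component that goes inactive has had its potential fully charged). Hence neither $v_1$ nor $v_2$ lies in any moat that became inactive before $S_1$ formed. Now consider the unique $v_1$--$v_2$ path $P$ in $F_1$; it stays inside $S_1$, and any inactive component $C$ whose boundary $P$ crosses satisfies $C\subsetneq S_1$ and contains neither endpoint, so $P$ crosses $\delta(C)$ at least twice. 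Consequently no edge of $P$ can ever be \emph{the unique} edge incident to an inactive component, so the deletion phase keeps all of $P$ and $v_1,v_2$ end up in the same tree of $F_2$ --- contradicting the hypothesis. Without this aliveness/positive-dual argument your case analysis does not close, so as written the proposal has a genuine gap at its central step.
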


Let $\T^1, \T^2, \ldots $ be the trees comprising the forest $F_2$.
We will now argue that these trees are sufficiently ``far apart'', and
so we can treat them separately; we formalize this intuition in the
rest of this sub-section. Recall that from
Theorem~\ref{thm:intro-forest}, there is a $(1+\eps)$-approximate
solution (in the original graph $G$) that does not connect (that is,
pays the penalty for) terminal pairs in distinct components of $F$,
the forest returned by {\sc Primal-Dual Forest}. Let $F^*$ denote
this solution; $\cost(F^*) \le (1+\eps)\opt$. We construct a set of
prize-collecting Steiner forest instances, one for each tree $\T^j$ of
$F_2$. In instance $I^j$, we have $\pi^j_i = \pi_i$ for each pair
$(s_i, t_i)$ connected by $\T^j$, and $\pi^j_i = 0$ for all other
pairs. Let $\opt^j$ denote the cost of an optimal prize-collecting
Steiner forest to instance $I^j$; we prove the following theorem:

\begin{theorem}[Following \cite{BateniHM}]\label{thm:separateSpanners}
  $\sum_j \opt^j \le (1+\eps) \cost(F^*)$.
\end{theorem}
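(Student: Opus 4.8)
The plan is to build, for each tree $\T^j$ of $F_2$, a feasible solution $Z^j$ to the instance $I^j$ and to show $\sum_j \cost(Z^j) \le (1+\eps)\cost(F^*)$; since $\opt^j \le \cost(Z^j)$, this gives the theorem. Each $Z^j$ will be assembled from three kinds of material: entire connected components of $F^*$ that get ``assigned'' to $\T^j$, a small number of the original trees $T_a$ of $F$ used for local re-connection, and penalties paid for everything else. I may assume every component of $F^*$ connects at least one terminal pair (otherwise delete it, which only decreases $\cost(F^*)$), so every component meets some $T_a$. For a connected component $K$ of $F^*$, write $\hat K \subseteq \hat G$ for its image after contracting the trees of $F$; then $\hat K$ is connected and $\len(\hat K) \le \len(K)$.

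The first step, and the crux, is a structural observation: for every connected component $K$ of $F^*$, all vertices $v_a \in V(\hat K)$ that are \emph{not} exhausted by $\hat K$ lie in a single tree of $F_2$. Indeed, if two such vertices $v_a, v_b$ lay in distinct trees of $F_2$, then $\hat K$, being a connected subgraph of $\hat G$ containing both, would connect them, and Lemma~\ref{lem:exhaust} would force $\hat K$ to exhaust one of $v_a, v_b$ --- a contradiction. So I can fix for each $K$ a \emph{home} tree $\T^{j^*(K)}$ of $F_2$ containing every non-exhausted $v_a \in V(\hat K)$ (if all such vertices are exhausted, take any tree of $F_2$ met by $\hat K$). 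Let $V'_K$ be the set of $v_a \in V(\hat K)$ that \emph{are} exhausted by $\hat K$. Applying Lemma~\ref{lem:mergingComponents} to $\hat K$ and using the scaling $\phi_{v_a} = \frac{1}{\eps}\len(T_a)$, we get $\sum_{v_a \in V'_K}\len(T_a) = \eps \sum_{v_a \in V'_K}\phi_{v_a} \le \eps\,\len(\hat K) \le \eps\,\len(K)$; this calibration is exactly why the potentials were scaled by $1/\eps$.

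Next I would define $Z^j$ to contain (a) every component $K$ of $F^*$ with $j^*(K) = j$ in its entirety, and (b) for every component $K$ with $j^*(K) \ne j$, the trees $T_a$ for all $v_a \in V'_K$ that lie in $\T^j$; pairs of $I^j$ not connected this way simply have their penalty paid. For feasibility, take any pair $i$ with $\pi^j_i > 0$ that $F^*$ connects. By the defining property of $F^*$ its endpoints lie in a common tree $T_a$ of $F$, and since $\T^j$ connects $i$ we get $v_a \in \T^j$; also the component $K$ of $F^*$ through $s_i, t_i$ meets $T_a$, so $v_a \in V(\hat K)$. If $j^*(K) = j$ then $K \subseteq Z^j$ already joins $s_i$ to $t_i$; otherwise $v_a \in \T^j$ differs from the home tree $\T^{j^*(K)}$, so by the structural observation $v_a$ is exhausted, i.e.\ $v_a \in V'_K$, and $T_a \subseteq Z^j$ joins $s_i$ to $t_i$. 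Hence $Z^j$ pays a penalty only for pairs of positive penalty that $F^*$ leaves disconnected.

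The accounting then closes the proof. Material (a) totals $\sum_j \sum_{K:\, j^*(K)=j}\len(K) = \sum_K \len(K) = \len(F^*)$, since each component of $F^*$ has one home. Material (b) totals $\sum_j \sum_{K:\, j^*(K)\ne j}\ \sum_{v_a \in V'_K \cap \T^j}\len(T_a) \le \sum_K \sum_{v_a \in V'_K}\len(T_a) \le \eps\,\len(F^*)$ by the bound from the structural step. The penalties total at most $\pen(F^*)$, because a pair has positive penalty in at most one instance $I^j$ (its two endpoints, when both lie in $\bigcup_{v_a \in \T^j}T_a$, determine $j$ uniquely) and is charged only when $F^*$ does not connect it. Summing, $\sum_j \cost(Z^j) \le (1+\eps)\len(F^*) + \pen(F^*) \le (1+\eps)\cost(F^*)$. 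The main obstacle is really the structural observation together with the bookkeeping it supports --- verifying that every pair $F^*$ connects whose home happens to be a \emph{different} tree of $F_2$ is nonetheless re-connected inside $Z^j$ by one of the cheap trees $T_a$, which is exactly what ``the non-exhausted vertices of $\hat K$ lie in one tree of $F_2$'' buys us.
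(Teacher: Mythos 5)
Your proposal is correct and follows essentially the same route as the paper's proof: both hinge on Lemma~\ref{lem:exhaust} (a component of $F^*$ spanning distinct components of $F_2$ must exhaust a contracted vertex) and on Lemma~\ref{lem:mergingComponents} with the $\phi_{v_i}=\frac{1}{\eps}\len(T_i)$ scaling to bound the patched-in trees $T_a$ by $\eps\,\len(F^*)$, yielding the same $(1+\eps)\len(F^*)+\pen(F^*)$ accounting. The only difference is organizational --- you assign each component of $F^*$ a ``home'' tree of $F_2$ and build per-instance solutions $Z^j$, whereas the paper prunes terminals from $F^*$ and collects the exhausted trees $T_i$ into one family $\D$ --- and your write-up makes the feasibility and penalty bookkeeping a bit more explicit than the paper does.
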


Given this theorem, we can separately solve each instance $I^j$; it is
easy to see that if we obtain a $\rho$-approximation to each instance,
combining them yields a solution of cost at most $\rho(1+\eps)
\cost(F^*) = \rho(1 + O(\eps))\opt$. But for each instance $I^j$, the
tree $\T^j$ contains \emph{all} terminal pairs with non-zero penalty,
and hence we can splice open the tree and convert the corresponding
Euler tour into the outer face to obtain a graph in which the length
of the outer face is bounded, and all relevant terminals are on this
outer face. This allows us to proceed with the spanner construction as
described in Appendix~\ref{app:spanner}.

Thus, it remains only to prove Theorem~\ref{thm:separateSpanners};
this closely follows the work of \cite{BateniHM}, with some additional
care needed because the cost of a forest includes both its length and
penalty.

\begin{proofof}{Theorem~\ref{thm:separateSpanners}}
  Recall that $F$ denotes the forest returned by {\sc Primal-Dual
    Forest}, and $F^*$ is an optimal prize-collecting Steiner forest
  that pays the penalty for terminal pairs in distinct components of
  $F$. Trees in $F$ were contracted to form vertices of $\hat{G}$; we
  then ran the prize-collecting clustering algorithm to form trees
  $\T^1, \T^2, \ldots$, making up the forest $F_2$.  We wish to show
  that if $\opt^j$ denotes the cost of an optimal solution for the
  instance $I^j$ (induced by terminal pairs in tree $\T^j$ of $F_2$),
  then $\sum_j \opt^j \le (1+\eps) \cost(F^*)$.

  To prove this theorem, we construct a set $\D$ of trees $\{T'_p\}$
  such that each $T'_p \in \D$ only connects terminal pairs in a
  single component of $F_2$. Further, $\sum_p \len(T'_p) \le (1+\eps)
  \len(F^*)$, and every pair connected by $F^*$ is connected by some
  $T'_p$. Such a set $\D$ of trees clearly yields solutions to the
  instances $I^j$, proving the theorem.

  We now construct the desired set $\D$. We begin by setting $\D$ to
  be the collection of trees in $F^*$, and then modify it as
  follows. Let $v_i$ be any vertex of $\hat{G}$ exhausted by $F^*$;
  prune from $F^*$ all terminals in the tree $T_i$ of $F$ contracted
  to form $v_i$, and add the tree $T_i$ to $\D$. When this process
  terminates, each tree in $\D$ only connects pairs in a single
  component of $F_2$.  Suppose this were not true; all trees
  \emph{added} to $\D$ clearly satisfy this condition, so it only
  remains to consider trees originally in $F^*$ (from which some
  terminals may have been pruned). But any tree $T^*$ of $F^*$
  connecting two vertices of $\hat{G}$ in distinct components of $F_2$
  must exhaust one of these vertices (from Lemma~\ref{lem:exhaust}),
  and hence the corresponding terminals should have been pruned from
  $T^*$, which yields a contradiction. It is also easy to see that any
  terminal pairs connected by $F^*$ are also connected by some tree in
  $\D$.

  To bound the cost of the trees in $\D$, we simply show that the
  length of the trees \emph{added} to $\D$ is at most $\eps \cdot
  \len(F^*)$. Let $v_i$ denote the vertex of $\hat{G}$ corresponding
  to the tree $T_i$ of $F$, and $V'$ the set of vertices exhausted by
  $F^*$. The length of the added trees is simply $\sum_{v_i \in V'}
  \len(T_i)$. From Lemma~\ref{lem:mergingComponents}, $\len(F^*) \ge
  \sum_{v_i \in V'} \phi_{v_i} = \sum_{v_i \in V'} \len(T_i)/\eps$.
  Rearranging, we get $\sum_{v_i \in V'} \len(T_i) \le \eps \cdot
  \len(F^*)$. 
\end{proofof}

\paragraph{Finishing the Spanner:} We can now complete the spanner
construction using the following theorem, implicit in the work of
\cite{BKM}:

\begin{theorem}\label{thm:spanner}
  Let $I$ be an instance of prize-collecting Steiner forest on a
  planar graph $G$. Let $F^*$ be an optimal solution to $I$. Given a
  tree $T$ spanning all terminals of $I$, for any fixed $\eps > 0$,
  there is a polynomial time algorithm to find a planar graph $H
  \subseteq G$ such that: (i) $\len(H) \le f(\eps) \cdot \len(T)$ for
  some function $f$ that depends (exponentially) on $\eps$ and (ii)
  there is a solution to instance $I$ in the graph $H$ of cost no more
  than $(1+\eps) \cost(F^*) + \eps \cdot \len(T)$.
\end{theorem}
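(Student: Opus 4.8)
The plan is to reduce to the planar Steiner-tree setting of \cite{BKM} by turning $T$ into the outer face of an auxiliary plane graph, and then to invoke the spanner construction and the structure theorem of \cite{BKM} (extended to the forest/penalty setting following \cite{BateniHM}) essentially as a black box. First I would form an Euler tour of $T$ by duplicating each edge, and splice the planar embedding open along this tour so that it becomes the outer face of a new plane graph $G'$. Since each edge of $T$ is traversed twice, the length of this outer face is at most $2\len(T)$, and every terminal of $I$ --- being a vertex of $T$ --- lies on it. The graph $G'$ is planar, as is any subgraph of it; and because the duplicated tour edges carry the same costs as their originals, any subgraph of $G'$ corresponds to a subgraph of $G$ of no larger length that preserves (indeed can only increase) connectivity. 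So it suffices to build $H$ inside $G'$.

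Next I would run the mortar-graph-and-brick spanner construction of \cite{BKM} (which builds on \cite{Klein}) starting from this outer face. This produces a planar subgraph $H$ whose total length is bounded by $f_0(\eps)$ times the length of the outer face, and hence at most $2 f_0(\eps)\cdot\len(T)$; absorbing the factor $2$ into $f$ gives part (i), with $f$ inheriting the exponential dependence on $\eps$ from the number of portal-subsets considered per brick.

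For part (ii) I would apply the structure theorem of \cite{BKM}, extended to the Steiner forest objective following \cite{BateniHM}: applied to the reference solution $F^*$, it produces an edge set $\hat F \subseteq H$ connecting every pair of vertices that $F^*$ connects, with $\len(\hat F) \le (1+c\eps)\len(F^*) + c\eps\cdot(\text{outer-face length}) \le (1+c\eps)\len(F^*) + 2c\eps\cdot\len(T)$ for an absolute constant $c$ (the additive term being the cost of rerouting $F^*$ onto the mortar graph, whose length is $O(\len(T))$). Since $\hat F$ connects at least the terminal pairs connected by $F^*$, we have $\pen(\hat F) \le \pen(F^*)$, and therefore
\[
  \cost(\hat F) = \len(\hat F) + \pen(\hat F)
  \le (1+c\eps)\len(F^*) + 2c\eps\cdot\len(T) + \pen(F^*)
  \le (1+c\eps)\cost(F^*) + 2c\eps\cdot\len(T).
\]
Running the entire construction with $\eps$ replaced by $\eps/(2c)$ then yields exactly the bound claimed in (ii), and only changes $f$ in (i).

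I expect the main obstacle to be the adaptation of the \cite{BKM} structure theorem --- stated for Steiner \emph{tree} --- to the present setting where $F^*$ is a forest and the objective also carries penalties. Both points should turn out to be mild: because $T$ spans \emph{all} terminals of $I$, after the Euler-tour step every relevant terminal lies on the outer face, which is exactly the configuration in which the forest extension of \cite{BateniHM} collapses to the \cite{BKM} argument; and the rerouting performed by the structure theorem only reroutes inside bricks and adds portions of brick boundaries, so it never disconnects a component of $F^*$ and the penalty cannot increase. The remaining work is bookkeeping: tracking the constant $c$ through the \cite{BKM} analysis and checking that the duplicated Euler-tour edges introduce no complications.
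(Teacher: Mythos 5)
Your proposal follows essentially the same route as the paper's own proof: splice open an Euler tour of $T$ to make it the outer face (of length $2\len(T)$), run the mortar-graph/brick/portal spanner construction of \cite{BKM} from that face, apply the structure theorem adapted to the forest-with-penalties objective to reroute $F^*$ into the spanner while preserving the connectivity of all terminals (so the penalty does not increase), and finally rescale $\eps$ to absorb the constants. The only difference is one of detail, not of substance: the paper writes out the adaptation of the \cite{BKM} structure theorem and the penalty bookkeeping explicitly in its appendix, whereas you invoke it as a routine extension --- which is exactly the argument the paper carries out.
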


The similar theorem stated in \cite{BKM} for instances of the Steiner
tree problem is slightly less general, though their proof technique
can be used to show the theorem we state here. For completeness, we
provide a proof in Appendix~\ref{app:spanner}.

\subsection{Completing the Reduction}\label{subsec:contraction}

In Section~\ref{subsec:scaled}, we constructed a forest $F_2$ by first
running the algorithm {\sc Primal-Dual Forest} on a modified instance
with scaled penalties, and then running the prize-collecting
clustering algorithm of \cite{BateniHM}. We proved two useful
properties of $F_2$: In Lemma~\ref{lem:pcCost}, we showed that
$\cost(F_2) \le (20/\eps^2) \opt$, and in
Theorem~\ref{thm:separateSpanners}, we argued that we could separately
solve a prize-collecting Steiner forest instance induced by each tree
$\T^j$ of $F_2$ without increasing the cost significantly.  (Formally,
we showed that $\sum_j \opt^j \le (1+\eps) \opt$.)

Now, for each tree $\T^j$ of $F_2$, construct a spanner for the
instance $I^j$. (Recall that $\T^j$ spans all terminals in $I^j$.)
Using a parameter $\eps' = \eps^3/20$, Theorem~\ref{thm:spanner}
guarantees a spanner $H^j$ for $I^j$ such that (i) $\len(H^j) =
f'(\eps) \cdot \len(\T^j)$ for some function $f'$ depending only on $\eps$,
and (ii) $\opt(H_j) \le (1+\eps') \opt^j + \eps'\cdot \len(\T^j)$, where
$\opt(H_j)$ denotes the cost of an optimal prize-collecting Steiner
forest in $H_j$. Hence,
\begin{equation}
  \sum_j \opt(H_j) \le (1+\eps') \sum_j \opt^j + \eps' \sum_j \len(\T^j)
   \le (1+2\eps) \opt + \frac{\eps^3}{20} \cdot \frac{20}{\eps^2} \opt
   = (1+3\eps) \opt
\end{equation}
where the second inequality follows from Lemma~\ref{lem:pcCost} and
Theorem~\ref{thm:separateSpanners}. 

Thus, if we can obtain a $\rho$-approximation to each instance $I^j$
in the graph $H^j$, we obtain a $\rho(1+3\eps)$-approximation to the
original prize-collecting Steiner forest instance. We use the
following theorem of \cite{DemaineHM}.

\begin{theorem}[Demaine, Hajiaghayi, Mohar \cite{DemaineHM}]
  \label{thm:tw-decomp}
  Let $G$ be any planar graph, and let $k$ be any integer such that $k
  \geq 2$. The edges of $G$ can be partitioned into $k$ sets such that
  contracting any one of the sets results in a graph of treewidth
  $O(k)$.  Furthermore, this partition can be found in polynomial
  time.
\end{theorem}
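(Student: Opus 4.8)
\medskip
\noindent\textbf{Proof plan for Theorem~\ref{thm:tw-decomp}.}
The plan is to reduce the statement to the classical \emph{deletion} decomposition for planar graphs (in the style of Baker and Eppstein) through planar duality. First recall the deletion version. Given a connected planar graph $H$, run a breadth-first search from an arbitrary root to obtain layers $L_0, L_1, \ldots, L_d$; every edge of $H$ either lies inside one layer or joins two consecutive layers, so each edge has a well-defined \emph{lower layer}. For $i \in \{0, 1, \ldots, k-1\}$ let $E_i$ consist of all edges whose lower layer $L_j$ satisfies $j \equiv i \pmod k$; this is a partition of $E(H)$. Deleting $E_i$ leaves a disjoint union of subgraphs, each supported on at most $k$ consecutive layers. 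Each such ``band'' is a subgraph of the graph obtained from $H$ by keeping only the layers up to the top of the band and contracting all layers below it to a single vertex; that graph is planar and has radius $O(k)$, hence treewidth $O(k)$ by the classical fact that a planar graph of radius $r$ has treewidth at most $3r + O(1)$. Therefore $\textrm{tw}(H \setminus E_i) = O(k)$ for every $i$, and the partition is computed in polynomial time.

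Next, set up duality. We may assume $G$ is connected: otherwise we decompose each connected component separately and take the union of the corresponding classes, using that the treewidth of a graph equals the maximum treewidth of its components. Fix a planar embedding of $G$ and let $G^*$ be the corresponding dual; the edge sets of $G$ and $G^*$ are in natural bijection, so a partition of $E(G^*)$ is automatically a partition of $E(G)$. We use two standard facts: (i) for any subset $S \subseteq E(G)$ we have $(G/S)^* = G^* \setminus S$, i.e., contracting a set of edges in the primal corresponds to deleting the corresponding set in the dual (any loops created during contraction, and the dual loops corresponding to bridges of $G$, may be discarded since they do not affect treewidth); and (ii) the classical inequality $\textrm{tw}(H) \le \textrm{tw}(H^*) + 1$ for a connected planar graph $H$.

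Now apply the deletion decomposition above to the planar graph $G^*$. We obtain a partition of $E(G^*) = E(G)$ into $k$ classes $E_1, \ldots, E_k$ with $\textrm{tw}(G^* \setminus E_i) = O(k)$ for every $i$. Viewing this partition in $G$ and contracting a single class $E_i$, we get
\[
  \textrm{tw}(G / E_i) \;\le\; \textrm{tw}\bigl((G/E_i)^*\bigr) + 1 \;=\; \textrm{tw}(G^* \setminus E_i) + 1 \;=\; O(k),
\]
using (i) and then (ii). Breadth-first search, computing a planar embedding and its dual, and assembling the classes are all polynomial-time, so the theorem follows.

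The steps that genuinely need care are the duality bookkeeping rather than any new idea: the reduction to connected graphs (indeed, for the cleanest form of fact (ii) one may wish to further restrict to $2$-connected blocks), the observation that the parallel edges and loops created by contraction leave treewidth unchanged, and invoking fact (ii) with its correct hypotheses. One could bypass duality by running a ``radial'' breadth-first search directly in $G$ and contracting the radial edges of every $k$-th concentric ring, but the dual reduction is the shortest route and isolates the one nontrivial ingredient, namely the treewidth stability $|\textrm{tw}(H) - \textrm{tw}(H^*)| \le 1$ for planar graphs.
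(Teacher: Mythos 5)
The paper does not prove this statement at all: it is imported verbatim as a black-box citation to Demaine, Hajiaghayi and Mohar \cite{DemaineHM}, whose own argument is designed for the more general bounded-genus setting. So your proposal is not competing with an in-paper proof; it is an attempt to re-derive the cited result, and for the planar case your route is essentially the standard one (it is in effect Klein's contraction decomposition): Baker-style BFS layering gives the deletion version, and planar duality converts deletion into contraction. The outline is sound, and you correctly isolate the two places where care is needed. First, the identity $(G/S)^* = G^* \setminus S$ is only literally true edge-by-edge when no loops/bridges intervene; when $S$ contains a cycle, the primal loops created correspond to dual bridges, for which the "contract versus delete'' discrepancy changes the graph but not its treewidth (contracting a bridge produces a one-vertex clique-sum of the two sides, so both operations give treewidth $\max$ of the sides), so your parenthetical disclaimer is legitimate but worth spelling out. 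Second, the "classical inequality'' $\mathrm{tw}(H) \le \mathrm{tw}(H^*) + 1$ for connected planar $H$ is a genuine theorem (Lapoire; see also Bouchitt\'e--Mazoit--Todinca), not a triviality, and should be cited as such; alternatively you can invoke Seymour--Thomas self-duality of branchwidth, which only loses a constant factor and still yields the $O(k)$ bound. The remaining ingredients (each component of $H \setminus E_i$ sits inside $k$ consecutive BFS layers, hence inside a planar minor of radius $O(k)$, hence has treewidth $O(k)$; connectivity handled componentwise; everything polynomial-time) are standard and correctly deployed. In short: a correct and self-contained alternative to the paper's citation, provided you make the duality bookkeeping and the tree-width/duality theorem explicit with proper references; what the citation to \cite{DemaineHM} buys instead is the extension beyond planarity to bounded genus, which your dual-Baker argument does not give.
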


\begin{proofof}{Theorem~\ref{thm:reduction}}
  Let $H^j$ be the spanner for instance $I^j$ as constructed
  above. Set $k = (1 / \eps') \cdot f'(\eps)$, where $f'(\eps)$ is the
  function such that $\len(H^j) \le f'(\eps) \len(\T^j)$.

  Let $E_1, \cdots, E_k$ be the decomposition of the edges of $H^j$
  that is guaranteed by Theorem~\ref{thm:tw-decomp}. Let $E_{i^*}$ be
  the set of edges that has minimum length among the sets $E_1,
  \cdots, E_k$. We have
  $$\len(E_{i^*}) \leq {\len(H^j) \over k} 
    \leq {f'(\eps) \cdot \len(T^j) \over k} = \eps' \cdot \len(T^j).
  $$ 

  Let $\widehat{H^j} = H^j / E_{i^*}$; that is, $\widehat{H^j}$ is the
  graph obtained from $H^j$ by contracting the edges in $E_{i^*}$.  We
  assign new penalties to terminal pairs of $\widehat{H^j}$ in the
  natural way: When we contract an edge $uv$ into a single vertex $w$,
  we replace each terminal pair $(u,x)$ with a new pair $(w,x)$ with
  the same penalty as $(u,x)$, and we similarly replace each terminal
  pair $(v,y)$ with a new pair $(w,y)$ with the same penalty as
  $(v,y)$.  Let $\opt(\widehat{H^j})$ denote the cost of an optimal
  prize-collecting Steiner forest in $\widehat{H^j}$; it is obvious
  that $\opt(\widehat{H^j}) \leq \opt(H^k)$.

  Since $\widehat{H^j}$ has treewidth at most $k$, if there is a
  $\rho$-approximation for prize-collecting Steiner forest in graphs
  of fixed treewidth, we can find a $\rho$-approximate forest
  $\widehat{F^j}$ in $\widehat{H^j}$. We can then map $\widehat{F^j}$
  to a forest $F^j$ in $H^j$ using the edges in $E_{i^*}$. By
  construction,
  $$\cost(F^j) \leq \cost(\widehat{F^j}) + \len(E_{i^*}) \leq
  \rho \opt(H^j) + \eps' \cdot \len(\T^j).$$

  Combining such solution $F^j$ for each $H^j$, we find a forest of
  total cost $\sum_j \rho \opt(H^j) + \eps' \sum_j \len(\T^j)$. Using
  equation (1), the first term is at most $\rho (1+3\eps) \opt$, and
  from the choice of $\eps'$, the second term is at most $\eps \opt$. 
\end{proofof}

\section{Prize-Collecting Steiner Tree in Graphs of Fixed Treewidth}
\label{sec:treewidth}
In this section, for any fixed integer $k \ge 2$, we give a
polynomial-time algorithm to optimally solve the prize-collecting
Steiner tree problem in graphs of treewidth at most $k-1$.

A tree decomposition of a graph $G$ is a pair $(\script{T},
\script{B})$, where $\script{T} = (I, F)$ is a tree, and $\script{B} =
\{B_i \;|\; i \in I\}$ is a family of subsets of $V(G)$ such that
\begin{enumerate}[(1)]
\item $\bigcup_{i \in I} B_i = V(G)$
\item for every edge $uv \in E(G)$, there exists an
  $i$ such that $\{u, v\} \subseteq B_i$
\item for every vertex $v \in V(G)$, the set of
  nodes $\{i \in I \;|\; v \in B_i\}$ forms a
  connected subtree of $\script{T}$
\end{enumerate}
We refer to vertices of $\script{T}$ as nodes, and to each set of
$\script{B}$ as a bag.  The width of a tree decomposition
$(\script{T}, \script{B})$ is the size of the largest bag $B_i$ minus
one. As shown in \cite{Bodlaender}, for any fixed $k$, there is a
polynomial time algorithm (in fact a linear time algorithm) that
constructs a tree decomposition of $G$ of width at most $k$, or
reports that $G$ has treewidth greater than $k$. In the following, we
assume that we have a tree decomposition for $G$ of width at most $k -
1$, for some fixed $k$.

\medskip A tree decomposition $(\script{T}, \script{B})$ is {\em nice} if
the tree $\script{T}$ is rooted and, for every node $i \in I$, either
\begin{enumerate}[(1)]
\item $i$ has no children ($i$ is a leaf node)
\item $i$ has exactly two children $i_1, i_2$ and $B_{i_1} =
  B_{i_2} = B_i$ ($i$ is a join node)
\item $i$ has a single child $j$ and $B_i = B_j \cup \{v\}$
  for some vertex $v \in V(G)$ ($i$ is an introduce node)
\item $i$ has a single child $j$ and $B_i = B_j - \{v\}$ for
  some vertex $v \in V(G)$ ($i$ is a forget node)
\end{enumerate}

The following lemma is well-known and it is straightforward to prove.
\begin{lemma}
  There is a linear time algorithm that, given a tree decomposition
  for $G$, constructs a nice tree decomposition $(\script{T},
  \script{B})$ of the same width. Moreover, the tree $\script{T}$ has
  $O(|V|)$ nodes.
\end{lemma}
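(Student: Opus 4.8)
The plan is to obtain the nice tree decomposition by applying a short sequence of local surgeries to the given decomposition $(\script{T},\script{B})$, each preserving the three defining properties and the width, and each easy to perform in linear time.

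\textbf{Preprocessing (bounding the size of $\script{T}$).} First I would repeatedly contract any edge $ij$ of $\script{T}$ with $B_i \subseteq B_j$, keeping the bag $B_j$; this preserves properties (1)--(3) and the width and strictly decreases the number of nodes, so it terminates. Afterwards every tree-edge $ij$ has $B_i \setminus B_j \ne \emptyset$ and $B_j \setminus B_i \ne \emptyset$. Rooting $\script{T}$ arbitrarily and mapping each non-root node $i$ to some vertex of $B_i$ not in its parent's bag gives, by property (3), an injection from the non-root nodes into $V(G)$, so $\script{T}$ now has at most $|V(G)| + 1$ nodes.

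\textbf{Reshaping.} Next, with $\script{T}$ rooted, I would fix its shape in three steps. (i) Replace every node $i$ having $d \ge 3$ children by a binary tree of $d-1$ copies of $i$, all carrying the bag $B_i$ (these become join nodes), with $i$'s original children attached at the leaves. (ii) Along every tree-edge $ij$ (with $j$ the child) for which $B_i \ne B_j$, and for every edge from a join node to a child whose bag is not yet equal to the join node's bag, insert a path of new nodes that transforms $B_j$ into the desired bag $B$ by first forgetting the vertices of $B_j \setminus B$ one at a time (forget nodes) and then introducing the vertices of $B \setminus B_j$ one at a time (introduce nodes); note every intermediate bag is a subset of $B_j$ or of $B$, so the width is unchanged. (iii) If one insists that leaf bags be singletons, hang below each leaf a path of forget nodes down to a single vertex. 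It is routine to check that properties (1) and (2) survive, and that property (3) survives because each vertex is forgotten (and separately introduced) along a contiguous segment of each inserted path, so its set of occurrences remains a subtree.

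\textbf{Node count and running time.} The gadgets in (i) add $\sum_i (d_i - 1) = O(\sum_i d_i) = O(|V(G)|)$ new nodes, since $\sum_i d_i$ is the number of tree-edges. Each path inserted in (ii) has length at most $|B_j \setminus B| + |B \setminus B_j| \le 2k = O(1)$, and there are $O(|V(G)|)$ edges to process, for another $O(|V(G)|)$ nodes; step (iii) adds $O(k)$ nodes per leaf. Hence the final tree has $O(|V(G)|)$ nodes, and every operation above is clearly implementable in linear time. The only real care needed is in the reshaping: once the invariant ``each inserted path first forgets, then introduces, so every vertex occurs in one contiguous block'' is in place, checking property (3) after the surgeries is mechanical; this is the step I would expect to be the most delicate to write out carefully.
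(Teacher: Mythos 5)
The paper itself offers no proof of this lemma---it is stated as well-known (it goes back to Kloks' standard construction of nice tree decompositions)---so your argument can only be judged on its own terms, and on those terms it is essentially the standard construction and is correct in substance: the contraction preprocessing plus the injection of non-root nodes into $V(G)$ (mapping $i$ to a vertex of $B_i$ outside the parent bag, which is injective because the occurrences of each vertex form a subtree with a unique topmost node) is the right way to get the $O(|V|)$ bound, and the join-gadgets plus forget-then-introduce paths are the right way to enforce niceness while preserving properties (1)--(3) and the width.

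Two small points deserve one more line each in a careful write-up. First, your description of step (ii) has the inserted path ``transform $B_j$ into the desired bag $B$,'' i.e.\ its topmost new node carries bag $B=B_i$. That is exactly what you want below a (two-child) join node, but below a node $i$ with a \emph{single} child it leaves $i$ with a child whose bag equals $B_i$, and such a node is neither a leaf, join, introduce, nor forget node under the paper's definition; the fix is to let the path stop one step short so that $i$ itself plays the role of the last introduce/forget node (or, equivalently, to contract that one duplicate edge at the end), and to note that the preprocessing already guarantees $B_i\neq B_j$ so single-child nodes with identical bags cannot survive from the input. Second, ``repeatedly contract any edge with $B_i\subseteq B_j$'' is not literally linear time if done naively by rescanning; for fixed $k$ a single rooted pass (checking containment of each bag in its parent's bag in $O(k)$ time and contracting on the fly) does it, which is worth saying since the lemma claims a linear-time algorithm. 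Neither issue affects the correctness of the approach or the $O(|V|)$ node count.
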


\subsection{A Dynamic Program for Prize-Collecting Steiner
  Tree}\label{subsec:pcstTreewidth}

We solve the problem using dynamic programming on a nice tree
decomposition $(\script{T}, \script{B})$ of width $k - 1$. For each
node $i \in I$, let $V_i$ be the set of all vertices appearing in the
bags corresponding to the nodes of the subtree of $\script{T}$ rooted
at $i$. Let $G_i$ be the subgraph of $G$ induced by $V_i$. We start
with an informal overview of the algorithm. For the purposes of
exposition, we assume that there is only one optimal prize-collecting
Steiner tree $T^*$ (if there are several solutions, we fix one of
them).  Additionally, we assume without loss of generality that the
root vertex $r$ is in the bag corresponding to the root node of
$\script{T}$. Now fix a node $i$ of $\script{T}$, and consider the
graph $G_i$. Clearly, we would like to compute the subgraph $F$ of
$T^*$ that lies in $G_i$. In order to do so, we will specify some
information about this subgraph $F$. More precisely, we will specify
the subgraph $H$ of $T^*$ that lies in $G[B_i]$, and a partition
$\alpha$ of the vertices in $H$ induced by the connected components of
$F$, i.e., each part of $\alpha$ consists of all the vertices in $H$
that are in the same connected component of $F$.  (Intuitively,
$\alpha$ tells us that we need to connect each part using a tree of
$G_i$ and all of these trees are guaranteed to be connected to the
root outside $G_i$.) It follows from the optimality of $T^*$ that $F$
is a minimum cost subgraph of $G_i$ satisfying
	\begin{enumerate}[$(c_1)$]
		\item $F[B_i] = H$
		\item the partition of $V(H)$ induced by the connected
		components of $H$ is a refinement of $\alpha$ (if two vertices
		$u, v$ are in the same connected component of $H$, then $u$
		and $v$ are in the same part of $\alpha$)
		\item the partition of $V(H)$ induced by the connected
		components of $F$ is $\alpha$
	\end{enumerate}
Let $c(i, H, \alpha)$ be the minimum cost of a subgraph $F$ of $G_i$
satisfying $(c_1) - (c_3)$. We will compute $c(i, H, \alpha)$ for all
valid tuples $(i, H, \alpha)$ using dynamic programming. The cost of
the optimal prize-collecting Steiner tree is equal to $\min_{H,
\alpha} \; c(r', H, \alpha)$, where $r'$ is the root node of
$\script{T}$, and the minimum is over all pairs $(H, \alpha)$ such
that $H$ is a subgraph of $G[B_{r'}]$ containing $r$, and $\alpha$ has
a single part containing the vertices of $H$. To see why this is true,
consider a pair $(H, \alpha)$ such that $H$ is a subgraph of
$G[B_{r'}]$ containing $r$, and $\alpha$ consists of a single part
containing the vertices of $H$. If $T$ is a solution for the
subproblem $(r', H, \alpha)$ then $T$ is a tree that contains $r$, and
hence $T$ is a valid prize-collecting Steiner tree. Conversely, let
$T$ be a prize-collecting Steiner tree. Let $H = T[B_{r'}]$ and let
$\alpha$ be the partition of $V(H)$ induced by $T$. Since $T$ is a
tree containing $r$, $H$ contains $r$ and $\alpha$ consists of a
single part containing the vertices of $H$. Therefore $T$ is a valid
solution for the subproblem $(r', H, \alpha)$.

Let $i$ be a node of $\script{T}$. Then $i$ is a leaf node, a join
node, an introduce node, or a forget node, and we consider each of
these cases separately. Before we describe the recurrence for $c(i, H,
\alpha)$, we introduce some useful terminology (borrowed from
\cite{BateniHM}).

We can view a partition $\alpha$ as an equivalence relation over the
vertices, and we write $u \equiv_{\alpha} v$ if $u$ and $v$ are in the
same part of $\alpha$. Let $\alpha_1$ and $\alpha_2$ be two partitions
of the same vertex set. We say that $\alpha_1$ is \emph{finer} than
$\alpha_2$ --- or equivalently, that $\alpha_1$ is a refinement of
$\alpha_2$ --- if $u \equiv_{\alpha_1} v$ implies $u \equiv_{\alpha_2}
v$. If $\alpha_1$ is finer than $\alpha_2$, we say that $\alpha_2$ is
\emph{coarser} than $\alpha_1$. We use $\alpha_1 \vee \alpha_2$ to
denote the finest partition that is coarser than both $\alpha_1$ and
$\alpha_2$ (there is a unique such partition).

~\\
\textbf{Node $i$ is a leaf node.} Let $\beta$ be the partition of
$V(H)$ induced by the connected components of $H$. We have
	\begin{equation}\label{btw:eq-leaf}
		c(i, H, \alpha) =
			\begin{cases}
				\len(H) + \pen(B_i - V(H)) \qquad \text{if $\alpha =
				\beta$}\\
				\infty \qquad \text{otherwise}
			\end{cases}
	\end{equation}

\begin{proofof}{Equation~\ref{btw:eq-leaf}}
	Since $G_i = G[B_i]$, $H$ is the only subgraph satisfying $(c_1)$.
	If $\alpha \neq \beta$, there is no subgraph satisfying $(c_1) -
	(c_3)$. Otherwise, $H$ is the only subgraph satisfying $(c_1) -
	(c_3)$ and its cost is $\len(H) + \pen(H)$.
\end{proofof}

~\\
\textbf{Node $i$ is a join node.} Let $i_1$ and $i_2$ be the
children of $i$. We have
	\begin{equation}\label{btw:eq-join}
		c(i, H, \alpha) = \min_{\alpha_1, \alpha_2} \left(c(i_1, H,
		\alpha_1) + c(i_2, H, \alpha_2) - \len(H)\right)
	\end{equation}
where the minimum is taken over all partitions $\alpha_1$, $\alpha_2$
of $V(H)$ such that $\alpha = \alpha_1 \vee \alpha_2$.

~\\
The intuition behind Equation~\ref{btw:eq-join} is the following. Let
$F$, $F_1$, $F_2$ denote the restrictions of the optimal tree $T^*$ to
$G_i$, $G_{i_1}$, $G_{i_2}$ (respectively). Then $F$ is the union of
$F_1$ and $F_2$. Let $\alpha$, $\alpha_1$, $\alpha_2$ be the
partitions of $B_i \cap V(T^*)$ induced by the connected components of
$F, F_1, F_2$. Since $F_1$ and $F_2$ intersect only at $B_i \cap
V(T^*)$, $\alpha = \alpha_1 \vee \alpha_2$.

Formal proofs of correctness for this, and subsequent cases, can be found in
Appendix~\ref{app:dp-proofs}.

~\\
\textbf{Node $i$ is a forget node.} Let $j$ be the child of $i$,
and let $v$ be the vertex in $B_j - B_i$. Fix a subgraph $H$ of
$G[B_i]$, and a partition $\alpha$ of $V(H)$. Let $S$ be a subset of
the neighbors of $v$ that are in $B_i$.  Let $E(v, S)$ denote the
edges with an endpoint in $v$ and the other in $S$. Let $\alpha(v, S)$
be the partition of $V(H) \cup \{v\}$ obtained from $\alpha$ as
follows: we merge each part of $\alpha$ that contains a vertex in $S$
into a single part and add $v$ to it; we add all remaining parts of
$\alpha$ to $\alpha(v, S)$. We have

\begin{equation}\label{btw:eq-forget}
		c(i, H, \alpha) = \min\left(c(j, H, \alpha), \min_{S \subseteq
		V(H) \cap \Gamma(v)} \left(c(j, H \cup \{v\} \cup E(v, S),
		\alpha(v, S)\right)\right)
	\end{equation}
where the second minimum is taken over all sets $S \subseteq V(H) \cap
\Gamma(v)$ such that $S$ has at most one vertex in each part of
$\alpha$.

~\\
The intuition behind Equation~\ref{btw:eq-forget} is the following.
Let $F, F'$ denote the restriction of the optimal tree $T^*$ to $G_i,
G_j$ (respectively). If $T^*$ does not contain $v$, we have $F' = F$.
Therefore we may assume that $T^*$ contains $v$, and thus $F'$
consists of $F$ and the edges of $E(T^*) \cap E(G_j)$ that are
incident to $v$. The edges of $F'$ that are incident to $v$ have at
most one endpoint in each connected component of $F$. Thus each
connected component of $F'$ that does not contain $v$ is a connected
component of $F$, and the connected component of $F'$ containing $v$
consists of one or more connected components of $F$ that connect to
each other via the edges incident to $v$.

~\\
\textbf{Node $i$ is an introduce node.} Let $j$ be the child of $i$, and
let $v$ be the vertex in $B_i - B_j$. Let $S$ be the set of all
neighbors $u$ of $v$ such that the edge $uv$ is in $H$. For each
partition $\alpha'$ of $V(H) - v$, we let $\alpha'(v, S)$ be the
partition of $V(H)$ obtained from $\alpha'$ as follows: we merge each
part of $\alpha'$ that contains a vertex in $S$ into a single part and
add $v$ to it; we add all remaining parts of $\alpha'$ to $\alpha'(v,
S)$. We have
	\begin{equation}\label{btw:eq-introduce}
		c(i, H, \alpha) =
			\begin{cases}
				c(j, H, \alpha) + \pen(v)
				\qquad\qquad\qquad\qquad\qquad\qquad\qquad \text{if $v
				\notin V(H)$}\\
				\min_{\alpha'} \left(c(j, H - v, \alpha') + \sum_{uv
				\in H} \len(uv)\right) \qquad \;\; \text{otherwise}
			\end{cases}
	\end{equation}
where the minimum is taken over all partitions $\alpha'$ of
$V(H) - \{v\}$ satisfying
	\begin{enumerate}[$(i)$]
		\item $S$ has at most one vertex in each part of
		$\alpha'$
		\item $\alpha'(v, S) = \alpha$
	\end{enumerate}
(Note that there exists a partition $\alpha'$ that satisfies
the conditions above.)

~\\
The intuition behind Equation~\ref{btw:eq-introduce} is the following.
Let $F, F'$ denote the restriction of the optimal tree $T^*$ to $G_i,
G_j$ (respectively). If $T^*$ does not contain $v$, we have $F = F'$.
Therefore we may assume that $T^*$ contains $v$, and thus $F$ consists
of $F'$ together with the edges of $E(T^*) \cap E(G[B_i])$ that are
incident to $v$.  The edges of $F$ that are incident to $v$ have at
most one endpoint in each connected component of $F'$. Thus each
connected component of $F$ that does not contain $v$ is a connected
component of $F'$, and the connected component of $F$ containing $v$
consists of one or more connected components of $F'$ that connect to
each other via the edges incident to $v$.

\begin{proofof}{Theorem~\ref{thm:pcstBoundedTreewidth}}
  Let $b_k$ be the number of partitions of a $k$-element
  set\footnote{The Bell number $B_{k}$ is the number of partitions of
    a $k$-element set. To avoid confusion with the bags of the tree
    decomposition, we will use $b_k$ to refer to the $k$-th Bell
    number.}, and let $s_k$ be the number of subgraphs of a graph with
  $k$ vertices. Since each bag has at most $k$ vertices and
  $\script{T}$ has $O(|V|)$ nodes, there are $O(|V| \cdot b_k \cdot
  s_k)$ distinct subproblems.  Additionally, we can evaluate each
  subproblem in $O(b^2_k)$ time once we have a solution for each of
  the subproblems it depends on. (The most expensive evaluation
  corresponds to a join node.) Therefore we can find an optimal
  prize-collecting Steiner tree in $O(b^3_k \cdot s_k \cdot |V|)$
  time.
\end{proofof}

\paragraph{Acknowledgments:} The work on this paper evolved during a
reading group on this topic. We thank Kyle Fox, Sariel Har-Peled,
Nirman Kumar, Amir Nayyeri and Ben Raichel for various discussions,
and in particular Sariel for his ideas and intuition on this and
related topics. We thank MohammadHossein Bateni, MohammadTaghi
Hajiaghayi and Daniel Marx for sharing their results and manuscript
\cite{BateniHM-pcst} with us, and for pointing out related work.

\bibliographystyle{plain}
\bibliography{pcst-planar}

\appendix

\section{Constructing the Spanner: A Proof of
  Theorem~\ref{thm:spanner}}
\label{app:spanner} 
\noindent
We begin by duplicating the edges of $T$, and introducing multiple
copies of its non-leaf vertices in order to transform the Euler tour
corresponding to $T$ into a cycle. Let $G'$ by the resulting graph.
We then make this cycle the outer face $\Delta$ of $G'$.

\begin{definition}[Definition~{6.2}, Borradaile \etal \cite{BKM}]
  A path $P$ is $\eps$-short in $G'$ if for every pair of vertices
  $x$ and $y$ on $P$, the distance from $x$ to $y$ along $P$ is at
  most $(1 + \eps)$ times the distance from $x$ to $y$ in $G'$
  (i.e., $dist_P(x, y) \leq (1 + \eps) dist_{G'}(x, y)$).
\end{definition}

\textbf{Strips}.  Let $\Delta[x, y]$ denote the subpath of the outer
face $\Delta$ from $x$ to $y$. We find a pair of vertices $x, y$ on
$\Delta$ such that $\Delta[x, y]$ is a minimal subpath of $\Delta$
that is not $\eps$-short in $G'$. Let $N$ be a shortest path from $x$
to $y$ in $G'$. The subgraph enclosed by $\Delta[x, y] \cup N$ is a
{\em strip}. We recursively decompose the subgraph of $G'$ enclosed by
$N \cup (\Delta - \Delta[x, y])$ into strips, if the graph is
nontrivial.

\begin{lemma}[Lemma~{6.3}, Borradaile \etal \cite{BKM}]
  \label{lem:strips}
  The total length of all the boundary edges of all the strips is at
  most $(\eps^{-1} + 1) \cdot \len(\Delta)$.
\end{lemma}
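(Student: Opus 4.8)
The plan is to follow the recursive structure of the strip decomposition and charge the lengths of the internal cut paths $N$ against the length of the outer boundary, which shrinks at each step. Write $\Delta_0 = \Delta$ for the original outer cycle; at step $t \ge 0$ let $P_t = \Delta_t[x_t,y_t]$ be the minimal non-$\eps$-short subpath selected, $N_t$ the shortest $x_t$–$y_t$ path in $G'$ used to cut off the $t$-th strip, and $\Delta_{t+1} = N_t \cup (\Delta_t - P_t)$ the boundary of the region recursed upon. The boundary of the $t$-th strip is exactly $P_t \cup N_t$. Since each $P_t$ is a subpath of $\Delta_t$, and inductively $\Delta_t \subseteq \Delta_0 \cup \bigcup_{s<t} N_s$, every edge that ever lies on a strip boundary lies on $\Delta$ or on some $N_t$. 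Hence it suffices to prove $\sum_t \len(N_t) \le \eps^{-1}\len(\Delta)$, and then the total is at most $\len(\Delta) + \eps^{-1}\len(\Delta) = (\eps^{-1}+1)\len(\Delta)$.

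First I would establish the local inequality $\len(P_t) > (1+\eps)\len(N_t)$; this is the one place minimality is used. Since $P_t$ is not $\eps$-short, there are vertices $u,v$ on $P_t$ with $dist_{P_t}(u,v) > (1+\eps)\,dist_{G'}(u,v)$. The subpath $P_t[u,v]$ satisfies $dist_{P_t[u,v]}(u,v) = dist_{P_t}(u,v)$, so it is also not $\eps$-short; by minimality of $P_t$ among non-$\eps$-short subpaths of $\Delta_t$, it cannot be proper, so $\{u,v\} = \{x_t,y_t\}$. Therefore $\len(P_t) = dist_{P_t}(x_t,y_t) > (1+\eps)\,dist_{G'}(x_t,y_t) = (1+\eps)\len(N_t)$, using that $N_t$ is a shortest $x_t$–$y_t$ path in $G'$.

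Next I would telescope. Because $\Delta_{t+1}$ is obtained from $\Delta_t$ by deleting $P_t$ and adding $N_t$, we have $\len(\Delta_{t+1}) \le \len(\Delta_t) - \len(P_t) + \len(N_t) \le \len(\Delta_t) - \frac{\eps}{1+\eps}\len(P_t)$, the last step by the inequality just proved. Summing over all steps and using $\len(\Delta_{\mathrm{final}}) \ge 0$ gives $\frac{\eps}{1+\eps}\sum_t \len(P_t) \le \len(\Delta)$, hence $\sum_t \len(P_t) \le \frac{1+\eps}{\eps}\len(\Delta)$, and consequently $\sum_t \len(N_t) \le \frac{1}{1+\eps}\sum_t \len(P_t) \le \eps^{-1}\len(\Delta)$. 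Combined with the first paragraph, this proves the lemma.

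I expect the arithmetic above to be routine; the main obstacle is the bookkeeping around the recursion. One must check that each $\Delta_t$ really is a simple cycle (so the two arcs between $x_t$ and $y_t$, and hence $P_t$ and $\Delta_t - P_t$, are well defined and edge-disjoint), that $N_t$ can be taken internally disjoint from $\Delta_t - P_t$ (or, if overlaps occur, that the displayed identities still hold as the inequalities we actually use — this is why every relation above is stated as an inequality), and that the decomposition terminates, which follows since each step strictly decreases the size of the region (the chosen $P_t$ contains at least one edge that is not reinstated). These are exactly the planarity/degeneracy details handled in \cite{BKM}, and I would either invoke their treatment or dispatch them with the observations above.
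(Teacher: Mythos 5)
The paper itself offers no proof of this lemma -- it is quoted verbatim from Borradaile \etal \cite{BKM} -- but your reconstruction is correct and is essentially the standard argument behind it: minimality of the non-$\eps$-short subpath forces the violating pair to be its endpoints, giving $\len(P_t) > (1+\eps)\len(N_t)$, and the telescoping charge against the shrinking boundary yields $\sum_t \len(N_t) \le \eps^{-1}\len(\Delta)$. One remark on accounting: your bound counts each boundary edge once, i.e.\ you bound the length of the union $\Delta \cup \bigcup_t N_t$ containing all strip boundaries, which is the reading consistent with the stated constant and with how the lemma is used later (the mortar graph takes each such edge once); if edges of the cut paths $N_t$, each shared by two strips, were counted with multiplicity, the same argument would only give $(2\eps^{-1}+1)\len(\Delta)$.
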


\noindent
\textbf{Columns}.  Consider a strip, with north and south boundaries
$N$ and $S$ ($N$ is the shortest path we added when we created the
strip). We select vertices $s_0, s_1, \dots$ on $S$ and paths $C_0,
C_1, \dots$ inside the strip as follows. The vertex $s_0$ is the left
endpoint common to $S$ and $N$, and column $C_0$ is the (empty)
shortest path from $s_0$ to $N$.  Now suppose that we have selected
vertices $s_0, s_1, \dots, s_{i - 1}$ and columns $C_0, C_1, \dots,
C_{i - 1}$. The vertex $s_i$ is the first vertex on $S$ such that the
distance from $s_{i - 1}$ to $s_i$ on $S$ is greater than $\eps$ times
the distance from $s_i$ to $N$ in the strip, and the column $C_i$ is
the shortest path in the strip from $s_i$ to $N$.

\begin{lemma}[Lemma~{6.4}, Borradaile \etal \cite{BKM}]
  \label{lem:columns}
  The sum of the lengths of the columns in a strip is at most $\len(S)
  / \eps$, where $S$ is the south boundary of the strip.
\end{lemma}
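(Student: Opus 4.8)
The plan is to charge the length of each column to its own disjoint piece of the south boundary $S$, using the greedy rule by which the split vertices $s_0, s_1, \dots, s_t$ were chosen. Throughout, write $d_S(x,y)$ for the length of the subpath $S[x,y]$ of $S$.

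First I would record a per-column bound. Since $s_0$ is a vertex of $N$, the column $C_0$ is empty and $\len(C_0) = 0$. For each $i \ge 1$, the vertex $s_i$ is by definition the first vertex of $S$ to the right of $s_{i-1}$ with $d_S(s_{i-1}, s_i) > \eps \cdot (\text{distance from } s_i \text{ to } N \text{ in the strip})$; and since $C_i$ is a shortest path in the strip from $s_i$ to $N$, that distance is exactly $\len(C_i)$. Hence
\[
\len(C_i) \;<\; \frac{1}{\eps}\, d_S(s_{i-1}, s_i) \qquad \text{for every } i \ge 1 .
\]

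Now I would sum over all columns. The vertices $s_0, s_1, \dots, s_t$ lie on $S$ in this left-to-right order, so the subpaths $S[s_0,s_1], \dots, S[s_{t-1},s_t]$ are internally disjoint and their lengths sum to $d_S(s_0,s_t) \le \len(S)$. Together with the per-column bound and $\len(C_0)=0$ this gives
\[
\sum_{i=0}^{t} \len(C_i) \;=\; \sum_{i=1}^{t} \len(C_i) \;<\; \frac{1}{\eps}\sum_{i=1}^{t} d_S(s_{i-1},s_i) \;\le\; \frac{\len(S)}{\eps},
\]
which is Lemma~\ref{lem:columns}. The whole thing is a one-line greedy/telescoping estimate, so I do not expect a real obstacle; the only point needing a word of care is the right end of the strip. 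The selection halts once no remaining vertex of $S$ clears the threshold relative to $s_t$, so the leftover sliver of $S$ to the right of $s_t$ carries no column and nothing is double-charged; and even if one prefers to attach a (possibly empty) column at the strip's right corner, the same inequality applies to it, charged against that still-disjoint leftover piece.
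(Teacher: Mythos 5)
Your proof is correct, and it is the standard charging argument for this lemma (which the paper itself only cites from \cite{BKM} without reproducing a proof): each column $C_i$ with $i \ge 1$ has $\len(C_i) < d_S(s_{i-1},s_i)/\eps$ by the greedy selection rule, the subpaths $S[s_{i-1},s_i]$ are disjoint along $S$, and $C_0$ is empty, so the telescoping sum gives $\len(S)/\eps$. Your closing remark about the leftover piece of $S$ beyond the last selected vertex is exactly the right point of care, and it is handled correctly since no column is charged to it.
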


\noindent
\textbf{Supercolumns}.  Let
\[k = {1 \over \eps^2} \left( {1 \over \eps} + 1 \right)\] For each
strip, we select a subset of the columns $\{C_0, C_1, \dots\}$ of the
strip as follows. Let
$$\script{C}_i = \{C_j \; | \; j \equiv i \mod k\}$$
where $0 \leq i \leq k - 1$. Let $i^*$ be the index that minimizes
$\len(\script{C}_i)$. We designate the columns in $\script{C}_{i^*}$
as the supercolumns of the strip.

\begin{lemma}[Lemma~{6.5}, Borradaile \etal \cite{BKM}]
  \label{lem:supercolumns}
  The sum of the lengths of the supercolumns in a strip is at most $1
  / k$ times the sum of the lengths of the columns in the strip.
\end{lemma}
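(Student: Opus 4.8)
The plan is to prove this by a simple averaging (pigeonhole) argument over the $k$ residue classes used to define the supercolumns, so the proof will be essentially a one-line calculation once the right observation is made. First I would note that the sets $\script{C}_0, \script{C}_1, \dots, \script{C}_{k-1}$ \emph{partition} the full collection of columns $\{C_0, C_1, \dots\}$ of the strip: each column $C_j$ lies in exactly one class, namely $\script{C}_i$ with $i \equiv j \pmod{k}$. Summing column lengths over all classes therefore recovers the total column length,
\[
\sum_{i=0}^{k-1} \len(\script{C}_i) = \sum_{j} \len(C_j).
\]

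Next I would invoke the fact that the minimum of $k$ nonnegative quantities is at most their average. Since $i^*$ is chosen to minimize $\len(\script{C}_i)$, we obtain
\[
\len(\script{C}_{i^*}) = \min_{0 \le i \le k-1} \len(\script{C}_i) \le \frac{1}{k}\sum_{i=0}^{k-1} \len(\script{C}_i) = \frac{1}{k}\sum_{j} \len(C_j).
\]
Because the supercolumns of the strip are by definition exactly the columns in $\script{C}_{i^*}$, the left-hand side is the total length of the supercolumns and the right-hand side is $1/k$ times the total length of the columns, which is the claimed bound.

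I do not expect any real obstacle here; the only point requiring (trivial) care is verifying that the residue classes genuinely partition the column set — no column is double-counted and none is omitted — which is immediate from the definition $\script{C}_i = \{C_j \mid j \equiv i \bmod k\}$. Since all quantities are nonnegative lengths, the averaging step needs no further hypotheses, and the lemma follows. (This bound is then combined with Lemmas~\ref{lem:strips} and~\ref{lem:columns} to control the total length of all supercolumns across all strips in the spanner construction.)
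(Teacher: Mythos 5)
Your averaging argument is correct: the residue classes $\script{C}_0,\dots,\script{C}_{k-1}$ partition the columns, so the minimizing class $\script{C}_{i^*}$ has total length at most $\frac{1}{k}$ of the total column length, which is exactly the claim. This is the same (standard pigeonhole) argument underlying the lemma as cited from Borradaile \etal \cite{BKM}, whose proof the paper omits; nothing further is needed.
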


\begin{lemma} \label{lem:supercolumns2} 
  The sum over all strips of the length of all the supercolumns is at
  most $\eps \cdot \len(\Delta)$.
\end{lemma}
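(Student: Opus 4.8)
The plan is to obtain the bound by simply chaining the three preceding lemmas and then substituting the chosen value of $k$. First I would fix a single strip, with north boundary $N$ and south boundary $S$. By Lemma~\ref{lem:columns}, the sum of the lengths of the columns in this strip is at most $\len(S)/\eps$. By Lemma~\ref{lem:supercolumns}, the sum of the lengths of the supercolumns in this strip is at most $1/k$ times the sum of the lengths of its columns. Combining these two facts gives that the total length of the supercolumns of this strip is at most $\len(S)/(k\eps)$.

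Next I would sum this per-strip bound over all strips produced by the recursive decomposition, so that the total supercolumn length over all strips is at most $\frac{1}{k\eps}\sum_{\text{strips}}\len(S)$, where the sum ranges over the south boundaries of the strips. Each such south boundary $S$ consists of boundary edges of its strip, and (since a strip, once carved out, is removed from the region being further decomposed) the south boundaries of distinct strips are edge-disjoint; in any case $\sum_{\text{strips}}\len(S)$ is at most the total length of all boundary edges of all strips, which by Lemma~\ref{lem:strips} is at most $(\eps^{-1}+1)\cdot\len(\Delta)$. Hence the total supercolumn length over all strips is at most $\frac{1}{k\eps}(\eps^{-1}+1)\cdot\len(\Delta)$.

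Finally I would substitute $k = \frac{1}{\eps^2}\left(\frac{1}{\eps}+1\right)$, so that $\frac{1}{k\eps}(\eps^{-1}+1) = \frac{\eps}{\eps^{-1}+1}\cdot(\eps^{-1}+1) = \eps$, which yields the claimed bound of $\eps\cdot\len(\Delta)$. Since the argument is just a telescoping of already-established estimates, there is no real obstacle; the only point requiring a moment of care is the inequality $\sum_{\text{strips}}\len(S) \le (\eps^{-1}+1)\len(\Delta)$, i.e.\ recognizing that a south boundary may include edges of a previously added shortest path $N$ rather than only edges of the original cycle $\Delta$, which is precisely why one invokes Lemma~\ref{lem:strips} (bounding \emph{all} strip boundary edges) instead of bounding the south boundaries directly against $\len(\Delta)$.
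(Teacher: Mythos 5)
Your proposal is correct and follows essentially the same route as the paper: chaining Lemma~\ref{lem:strips}, Lemma~\ref{lem:columns}, and Lemma~\ref{lem:supercolumns} and then substituting $k = \frac{1}{\eps^2}\left(\frac{1}{\eps}+1\right)$, the only cosmetic difference being that you apply the $1/k$ factor per strip before summing while the paper first bounds the total column length over all strips. The arithmetic checks out, so nothing further is needed.
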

\begin{proof}
  By Lemma~\ref{lem:strips} and Lemma~\ref{lem:columns},
  the total length of the columns is at most
  \[{1 \over \eps} \left( {1 \over \eps} + 1 \right) \len(\Delta)\]
  By Lemma~\ref{lem:supercolumns}, the total length of the
  supercolumns is at most 
  $${1 \over k} \cdot \left( \text{total length of columns} \right)
  \leq  {1 \over k} \cdot {1 \over \eps} 
     \left( {1 \over \eps} + 1 \right) \len(\Delta)
   = \eps \cdot \len(\Delta)$$

\vspace{-11mm}
\end{proof}

\bigskip \noindent
\textbf{Mortar Graph}. The mortar graph $MG$ is a subgraph of
\emph{the original graph} $G$ consisting of the edges of the given
tree $T$ (that was doubled to form the outer face $\Delta$ of $G'$),
the edges of the shortest paths that define the strips, and the edges
of the supercolumns.

\begin{lemma} \label{lem:MG} 
  The length of the mortar graph $MG$ is at most $({3 \over \eps} +
  \eps) \cdot \len(\Delta)$.
\end{lemma}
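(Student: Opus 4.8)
The plan is to bound $\len(MG)$ by summing the contributions of its three constituent pieces --- the edges of the tree $T$, the ``north'' boundaries of the strips (the shortest paths added during the strip decomposition), and the supercolumns --- each of which is already controlled by a lemma established above.

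First I would account for the tree $T$. Since $\Delta$ is obtained by doubling the edges of $T$ to form an Euler-tour cycle, every edge of $T$ contributes exactly twice to $\len(\Delta)$, so $\len(T) = \tfrac12\len(\Delta) \le \len(\Delta)$. (Mapping the relevant subgraphs of $G'$ back into $G$ can only identify edges, never increase the total length, so summing with multiplicities in $G'$ gives a valid upper bound on $\len(MG)$.) Next, the edges of the shortest paths that define the strips are precisely the north boundaries of the strips, and these form a subset of all the boundary edges of all the strips; hence by Lemma~\ref{lem:strips} their total length is at most $(\eps^{-1}+1)\len(\Delta)$. Finally, by Lemma~\ref{lem:supercolumns2} the total length of all supercolumns over all strips is at most $\eps\cdot\len(\Delta)$.

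Adding these three bounds yields
\[
  \len(MG) \;\le\; \len(\Delta) + \left(\tfrac1\eps + 1\right)\len(\Delta) + \eps\,\len(\Delta)
  \;=\; \left(\tfrac1\eps + 2 + \eps\right)\len(\Delta).
\]
To finish, I would invoke the standing assumption $\eps \le 1$, which gives $2 \le 2/\eps$ and hence $\tfrac1\eps + 2 \le \tfrac3\eps$; therefore $\len(MG) \le (\tfrac3\eps + \eps)\len(\Delta)$, as claimed. (Using the sharper estimate $\len(T) = \tfrac12\len(\Delta)$ would give the slightly better constant $\tfrac32 + \tfrac1\eps + \eps$, which also lies below $\tfrac3\eps + \eps$ for $\eps \le 1$, but the cruder bound already suffices.)

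There is no real obstacle here: the statement is a routine bookkeeping combination of Lemmas~\ref{lem:strips} and \ref{lem:supercolumns2} with the definition of $\Delta$. The only points needing minor care are (i) recognizing that the shortest paths defining the strips are among the boundary edges counted in Lemma~\ref{lem:strips}, not a separate quantity; (ii) tracking the factor-two relationship between $\len(T)$ and $\len(\Delta)$ and noting that passing from $G'$ back to $G$ does not increase lengths; and (iii) checking that the final arithmetic step uses $\eps \le 1$.
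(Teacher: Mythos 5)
Your proof is correct and follows essentially the same route as the paper: bound the mortar graph by the tree $T$ (half of $\len(\Delta)$), the strip-defining paths via Lemma~\ref{lem:strips}, and the supercolumns via Lemma~\ref{lem:supercolumns2}, then use $\eps \le 1$ to absorb the additive constant into the $3/\eps$ term. Your cruder bound $\len(T) \le \len(\Delta)$ versus the paper's exact $\tfrac12\len(\Delta)$ is an immaterial difference.
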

\begin{proof}
  The total length of the strips is at most
  $\left( {1 \over \eps} + 1\right) \len(\Delta)$.
  The total length of the supercolumns is at most $\eps \cdot
  \len(\Delta)$. The length of $T$ is precisely half the length of
  $\Delta$, which consisted of two copies of each edge of $T$.  Thus,
  the total length of $MG$ is at most $({1 \over \eps} + 1.5 + \eps
  )\cdot \len(\Delta)$
\end{proof}

\medskip
\begin{proposition} \label{prop:MG2}
  The mortar graph $MG$ contains every vertex of $T$.
\end{proposition}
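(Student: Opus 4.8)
The plan is to derive the proposition directly from the definition of the mortar graph, leaving only one small bookkeeping point to verify. Recall that $MG$ is defined as the subgraph of the \emph{original} graph $G$ whose edge set is the union of (i) the edges of the tree $T$, (ii) the edges of the shortest paths that define the strips, and (iii) the edges of the supercolumns. In particular, $E(T) \subseteq E(MG)$.

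First I would observe that $T$ is connected, being a tree that spans all terminals of the instance. Hence, provided $T$ has at least one edge --- equivalently, the instance has at least two terminals, for otherwise the statement is vacuous --- every vertex of $T$ is an endpoint of some edge of $T$. Since every edge of $T$ lies in $MG$, every such endpoint, and therefore every vertex of $T$, lies in $V(MG)$. That is essentially the entire argument.

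The one thing to be careful about is the effect of the preprocessing: we doubled the edges of $T$ and spliced open an Euler tour of $T$ to form the outer face $\Delta$ of $G'$, which introduces duplicate copies of the non-leaf vertices of $T$. The point to check is that the phrase ``edges of $T$'' in the definition of $MG$ refers to the edges of the original graph $G$ (not their copies in $G'$), so that the endpoints of these edges are exactly the vertices of $T$ viewed as a subgraph of $G$; the same applies to the strip-defining paths and supercolumns, which are selected in $G'$ but whose images in $G$ are what enter $MG$. This is immediate from the way $MG$ is defined as a subgraph of $G$, so I do not anticipate a real obstacle --- there is essentially nothing difficult in this proposition. It is recorded only because the subsequent spanner construction needs the spanner $H \supseteq MG$ to contain all terminals, and $T$ spans them.
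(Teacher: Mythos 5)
Your proposal is correct and matches the intended argument: the paper states this proposition without proof precisely because it is immediate from the definition of $MG$, which by construction includes all edges of $T$ (as edges of the original graph $G$), so every vertex of the tree $T$, being an endpoint of such an edge, lies in $MG$. Your side remark about the degenerate edgeless case and about edges being taken in $G$ rather than $G'$ is fine but not needed.
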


\noindent
\textbf{Bricks}.  A brick consists of all edges of the original graph
$G$ that are (strictly) enclosed by the boundary of some face $f$ of
the mortar graph. (Note that if an edge $e$ on the outer face of $G$
is not a part of $MG$, it is ``enclosed'' by the outer face of $MG$.)
For each face $f$ of the mortar graph that encloses at least one edge,
there is a corresponding brick.

\begin{lemma}[Lemma~{6.10}, Borradaile \etal \cite{BKM}]
  \label{lem:bricks}
  The boundary $\partial B$ of a brick $B$, in counterclockwise order,
  is the concatenation of four paths $W_B$, $S_B$, $E_B$, $N_B$ such
  that
  \begin{enumerate}
  \item The set of edges of $B - \partial B$ is
    non-empty.
  \item Every vertex of $T$ that is in $B$ is on $S_B$
    or $N_B$.
  \item $N_B$ is $0$-short in $B$, and every proper
    subpath of $S_B$ is $\eps$-short in $B$.
  \item There exists a number $k' \leq k$ and vertices
    $s_0, s_1, \dots, s_{k'}$ ordered west to east on
    $S_B$ such that, for each $i$ and each vertex $x$ on
    $S_B[s_i, s_{i + 1})$, $dist_{S_B}(x, s_i) <
    \eps \cdot dist_B(x, N_B)$.
  \end{enumerate}
\end{lemma}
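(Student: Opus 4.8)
The plan is to prove the lemma by examining the faces of the mortar graph $MG$. Recall that $MG$ consists of the doubled tree edges making up $\Delta$, the shortest paths $N$ that delimit the strips, and the supercolumns. First I would check that every bounded face $f$ of $MG$ that encloses at least one edge of $G$ lies inside a single strip: the strips tile the interior of $\Delta$, all of their boundaries — the shortest paths $N$ and the south paths $S$ (each $S$ being a subpath of one of the successive cycles in the recursive decomposition) — belong to $MG$, and the only edges of $MG$ strictly interior to a strip are its supercolumns. Within a fixed strip with north boundary $N$ and south boundary $S$, the supercolumns are shortest paths from selected vertices of $S$ to $N$; adjoining the two trivial ``end'' columns, they cut the strip into faces, each bounded on the west and east by two consecutive supercolumns, on the north by a subpath of $N$, and on the south by a subpath of $S$. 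This is exactly the decomposition $\partial B = W_B \cup S_B \cup E_B \cup N_B$; note that the intermediate (non-super) columns of the strip have their south endpoints sitting on $S_B$.

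With this decomposition fixed, the four properties come from the construction. Property~(1) is the definition of a brick. For Property~(2): every vertex of $T$ lies on $\Delta$, and in the recursive strip decomposition each step moves the chosen subpath $\Delta[x,y]$ onto the south boundary of a strip, so no vertex of $T$ is ever interior to a strip; since a supercolumn is a path inside its strip from $S$ to $N$, it can meet $T$ only at its two endpoints, which are shared with $S_B$ or $N_B$; hence any vertex of $T$ in $B$ lies on $S_B$ or (if $N$ routes through it) on $N_B$. For Property~(3): $N_B$ is a subpath of the shortest path $N$ in $G'$, so $\mathrm{dist}_{N_B}(x,y) = \mathrm{dist}_{G'}(x,y) \le \mathrm{dist}_B(x,y)$ because $B$ is a subgraph, giving $0$-shortness; and $S$ is by construction a minimal subpath of its cycle that fails to be $\eps$-short, so every proper subpath of $S$ — in particular every proper subpath of $S_B$ — is $\eps$-short in $G'$ and hence in $B$ (and subpaths of $N$ that are inherited into a later strip's south boundary are themselves shortest, so this stays true at the seams). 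For Property~(4): the $s_0, \dots, s_{k'}$ are the south endpoints on $S_B$ of the columns of the strip (regular ones plus the two bounding supercolumns) that fall inside $B$; the column rule puts $s_{i+1}$ at the first vertex with $\mathrm{dist}_S(s_i, s_{i+1}) > \eps \cdot \mathrm{dist}_{\mathrm{strip}}(s_{i+1}, N)$, so every $x$ on $S_B[s_i, s_{i+1})$ satisfies $\mathrm{dist}_{S_B}(x, s_i) < \eps \cdot \mathrm{dist}_{\mathrm{strip}}(x, N)$; and because the supercolumns are the columns $C_j$ with $j \equiv i^* \bmod k$, two consecutive supercolumns are separated by fewer than $k$ columns, so $k' \le k$.

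The step I expect to need the most care is replacing the strip-global quantity $\mathrm{dist}_{\mathrm{strip}}(x, N)$ in the column rule by the brick-local quantity $\mathrm{dist}_B(x, N_B)$ demanded in Property~(4). The clean fix is to commit in advance to leftmost, and hence pairwise non-crossing, shortest paths for all the columns; then planarity forces a shortest $x$-to-$N$ path from a vertex $x$ in the interior of $B$ to remain inside the region bounded by $\partial B$ and to land on $N_B$, so $\mathrm{dist}_B(x, N_B) = \mathrm{dist}_{\mathrm{strip}}(x, N)$. The same leftmost choice gives the incidental facts that a supercolumn touches $N$ and $S$ only at its endpoints (otherwise it could be shortened), so $W_B, S_B, E_B, N_B$ really do concatenate into a simple closed boundary. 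These planarity-bookkeeping points are precisely the arguments carried out in \cite{BKM}; modulo them, the lemma follows.
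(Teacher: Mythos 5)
First, a point of comparison: the paper does not prove this statement at all --- it is imported verbatim as Lemma~6.10 of Borradaile \etal \cite{BKM} --- so there is no in-paper proof to measure you against; what you have written is a reconstruction of the argument from \cite{BKM}, and it does follow the intended route: bricks are faces of the mortar graph lying inside a single strip, bounded by two consecutive supercolumns ($W_B$, $E_B$), a subpath of the strip's shortest path $N$ ($N_B$), and a subpath of its south boundary ($S_B$), with the four properties read off from the strip/column/supercolumn construction. The easy parts of your sketch are right: (1) is just the definition of a brick; $0$-shortness of $N_B$ and $\eps$-shortness of proper subpaths of $S_B$ transfer from $G'$ to $B$ because $B$ is a subgraph, so distances only increase; and $k' \le k$ because consecutive supercolumns are $k$ columns apart.

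Two corrections, though. You placed the ``most care'' in the wrong spot: for property (4) one only needs $dist_{strip}(x, N) \le dist_B(x, N_B)$, which is immediate since $B$ is a subgraph of the strip and $N_B \subseteq N$ (any $x$-to-$N_B$ path in $B$ is an $x$-to-$N$ path in the strip); no equality, and hence no leftmost-path argument, is required there --- though note the column rule only yields the non-strict bound $dist_{S_B}(x,s_i) \le \eps \cdot dist_{strip}(x,N)$ for $x$ on $S_B[s_i, s_{i+1})$, so the strict inequality in (4) needs a word. Where the non-crossing choice of columns genuinely matters is in establishing the four-sided boundary structure itself and property (2), and there your one-line justification --- ``a supercolumn touches $N$ and $S$ only at its endpoints (otherwise it could be shortened)'' --- fails as stated: a shortest $s_i$-to-$N$ path may pass through further vertices of $S$, of $N$, or of other columns without contradicting shortestness, so these degenerate incidences must be eliminated by an explicit tie-breaking/uncrossing argument (and one must also track the passage between the spliced graph $G'$, where strips and columns live, and the original graph $G$, where $MG$ and the bricks live, since vertices of $T$ have several copies on $\Delta$). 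Those are exactly the steps you defer to \cite{BKM}, so as a standalone proof the proposal has a real gap at the structural and degeneracy claims, even though the overall approach is the correct one.
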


\begin{definition}[Definition~{10.3}, Borradaile \etal \cite{BKM}] 
  Let $H$ be a subgraph of $G$ such that $P$ is a path in $H$. A {\em
    joining vertex} of $H$ with $P$ is a vertex of $P$ that is the
  endpoint of an edge of $H - P$.
\end{definition}

\begin{lemma}[Theorem~{10.7}, Borradaile \etal \cite{BKM}]
  \label{lem:bricks2}
  Let $B$ be a plane graph with boundary $W \cup S \cup E \cup N$,
  satisfying the brick properties of {\em Lemma~\ref{lem:bricks}}. Let
  $F$ be a set of edges of $B$. There is a forest $\tilde{F}$ of $B$
  with the following properties:
  \begin{enumerate}
  \item If two vertices of $N \cup S$ are connected in
    $F$ then they are connected in $\tilde{F}$.
  \item The number of joining vertices of $F$ with
    both $N$ and $S$ is at most $\alpha(\eps)$,
    where $\alpha(\eps) = o(\eps^{-5.5})$.
  \item $\len(\tilde{F}) \leq (1 + c \eps)
    \len(F)$, for some fixed constant $c$.
  \end{enumerate}
\end{lemma}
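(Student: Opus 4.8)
The statement is the ``brick structure theorem'' of \cite{BKM}, and the plan is to reproduce their argument: transform $F$ by a bounded sequence of local surgeries, each of which increases the total length by at most a multiplicative $(1+O(\eps))$ factor while decreasing the number of places at which the forest leaves $N\cup S$ for the interior of $B$. Every surgery will be arranged so as to preserve the connectivity required by property~(1), so that along the way we only need to control length and the number of ``teeth'' (maximal interior excursions of the forest).

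First I would clean up $F$: discard every connected component meeting neither $N$ nor $S$ (this cannot increase $\len(F)$, and since such a component connects no pair of vertices of $N\cup S$ it cannot hurt property~(1)), and replace each surviving component by a spanning tree of its vertex set. Thus we may assume $F$ is a forest, each of whose trees meets $N\cup S$. Next I would ``comb'' each tree $T$ so that its interior part is a collection of teeth attached to the boundary. Two facts drive this. Since $N$ is $0$-short (a geodesic of $B$), any subpath of $T$ whose two endpoints lie on $N$ may be rerouted along $N$ at no extra cost. Since every proper subpath of $S$ is $\eps$-short, any subpath of $T$ with endpoints $a,b\in S$ may be replaced by $S[a,b]$ at cost at most $(1+\eps)\cdot dist_B(a,b)$, which we charge to the removed subpath. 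Iterating, $T$ becomes a ``skeleton'' lying inside $N\cup S$ together with teeth, each tooth a path that departs from and returns to $N\cup S$ (possibly joining $N$ to $S$).

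The main work is to show that only $\alpha(\eps)=o(\eps^{-5.5})$ teeth are needed over the entire forest. Here I would invoke the column-interval structure of the south boundary from Lemma~\ref{lem:bricks}: $S_B$ is partitioned into at most $k=\eps^{-2}(\eps^{-1}+1)=O(\eps^{-3})$ intervals $S_B[s_i,s_{i+1})$, and within each interval $dist_{S_B}(x,s_i)<\eps\cdot dist_B(x,N_B)$, so the interval's horizontal extent is small relative to its distance to $N$. The plan is to argue, one interval at a time, that all but $O(\mathrm{poly}(1/\eps))$ of the teeth whose $S$-endpoints fall in that interval are redundant: two such teeth whose $S$-endpoints are close but whose interior detour exceeds that small horizontal separation can be short-circuited through $S$ (again using $\eps$-shortness of $S$), while teeth whose detour is comparable to the local height of the interval can be charged against that height, of which only boundedly many can occur before their combined length would exceed $\len(F)$. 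A symmetric argument, using that $N$ is a geodesic, bounds the teeth attached to $N$. Summing the $O(\mathrm{poly}(1/\eps))$ surviving teeth over the $O(\eps^{-3})$ intervals, and absorbing lower-order factors, gives a bound of the form $o(\eps^{-5.5})$. Since the reroutings charge to essentially disjoint portions of $F$, the total length increases by at most a $(1+O(\eps))$ factor, which yields property~(3) after rescaling $\eps$ by a constant.

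The genuine obstacle is the argument of the previous paragraph: one must make the per-interval charging precise so that the number of retained teeth depends on $\eps$ alone, independent of the size of $B$ and of $F$, while never violating the connectivity of property~(1) during any of the reroutings. This is exactly the difficult technical core of \cite{BKM}; for the present paper it suffices to cite it, and we use Lemma~\ref{lem:bricks2} as a black box in the spanner construction that follows.
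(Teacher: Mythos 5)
The paper does not prove this lemma at all: it is imported verbatim as Theorem~10.7 of \cite{BKM} and used as a black box, which is exactly the conclusion you arrive at, so your treatment matches the paper's. Your surgery-and-charging sketch is a reasonable outline of the argument in \cite{BKM}, but, as you yourself note, the per-interval charging that makes the $\alpha(\eps)$ bound depend on $\eps$ alone is not carried out, so the sketch should be read only as motivation for the citation rather than as an independent proof.
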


\noindent
\textbf{Portals}.  Let $\theta = \theta(\eps)$ be a parameter that
depends polynomially on $1 / \eps$. For each brick $B$, we designate
some vertices of $\partial B$ as portals, evenly spaced around $B$ as
follows. Let $v_0 \in \partial B$ be the endpoint of an edge strictly
enclosed by $\partial B$; we designate $v_0$ as a portal. Now suppose
we have designated $v_0, v_1, \dots, v_{i - 1}$ as portals. Let $v_i$
be the first vertex on $\partial B$ such that $\len(\partial B[v_{i -
  1}, v_i]) > \len(\partial B) / \theta$. We designate $v_i$ as a
portal, unless $v_0 \in V(\partial B(v_{i - 1}, v_i])$, in which case
we stop.

\begin{lemma}[Lemma~{7.1}, Borradaile \etal \cite{BKM}]
  \label{lem:portalCoverage} 
  For any vertex $x$ on $\partial B$, there is a portal $y$ such that
  the $x$-to-$y$ subpath of $\partial B$ has length at most
  $\len(\partial B) / \theta$.
\end{lemma}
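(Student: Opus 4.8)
The plan is to argue directly from the greedy placement rule for portals. Fix the orientation of the cycle $\partial B$ used in the construction, and let $v_0, v_1, \dots, v_m$ be the portals in the order they were selected; they partition $\partial B$ into the consecutive arcs $\partial B[v_0,v_1], \dots, \partial B[v_{m-1},v_m], \partial B[v_m,v_0]$, and every vertex $x$ of $\partial B$ lies on at least one of them. I would first dispose of a vertex $x$ lying on a non-wraparound arc $\partial B[v_{j-1},v_j]$ with $1 \le j \le m$. By construction $v_j$ is the \emph{first} vertex after $v_{j-1}$ with $\len(\partial B[v_{j-1},v_j]) > \len(\partial B)/\theta$, so every vertex $z$ that strictly precedes $v_j$ on this arc satisfies $\len(\partial B[v_{j-1},z]) \le \len(\partial B)/\theta$ (otherwise $z$, not $v_j$, would have been selected). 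Hence if $x \neq v_j$ the portal $y = v_{j-1}$ works, and if $x = v_j$ then $x$ is itself a portal and $y = x$ works.

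It remains to treat a vertex $x$ on the wraparound arc $\partial B[v_m,v_0]$, and here I would invoke the stopping condition. The procedure halted after $v_m$ because for the candidate $v_{m+1}$ --- the first vertex after $v_m$ with $\len(\partial B[v_m,v_{m+1}]) > \len(\partial B)/\theta$ --- we had $v_0 \in V(\partial B(v_m,v_{m+1}])$; that is, $v_0$ lies on $\partial B[v_m,v_{m+1}]$ at or before $v_{m+1}$, so $\partial B[v_m,v_0]$ is a sub-arc of $\partial B[v_m,v_{m+1}]$. Now either $x = v_0$, a portal, and we are done; or $x$ strictly precedes $v_0$, hence also $v_{m+1}$, on $\partial B[v_m,v_{m+1}]$, and then, again because $v_{m+1}$ is the first vertex past the threshold, $\len(\partial B[v_m,x]) \le \len(\partial B)/\theta$, so $y = v_m$ works. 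The degenerate case $m=0$ (only $v_0$ is a portal, $v_{m+1}$ lying a full turn past $v_0$) is handled verbatim by this paragraph.

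I do not anticipate a real obstacle: the whole statement is bookkeeping about a cyclic greedy cover, and the only subtle point is the wraparound arc $\partial B[v_m,v_0]$, which is precisely what the stopping rule ``$v_0 \in V(\partial B(v_{i-1},v_i])$'' is there to control. Collecting the cases, for every vertex $x$ of $\partial B$ we have produced a portal $y$ together with a subpath of $\partial B$ from $x$ to $y$ of length at most $\len(\partial B)/\theta$, which is the assertion of the lemma.
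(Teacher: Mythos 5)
Your argument is correct, and it is essentially the intended one: the paper itself gives no proof of this lemma (it is quoted from Borradaile \etal \cite{BKM}), and the standard justification is exactly your greedy bookkeeping --- on each arc between consecutive portals every vertex strictly before the next portal is within the threshold of the previous one, and the stopping condition $v_0 \in V(\partial B(v_{i-1}, v_i])$ guarantees the same bound on the wraparound arc, with $v_0$ itself a portal. Your handling of the wraparound and degenerate $m=0$ cases is sound, so nothing is missing.
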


\begin{lemma}[Lemma~{7.2}, Borradaile \etal \cite{BKM}]
  \label{lem:portalCardinality} There are at most $\theta$
  portals on $\partial B$.
\end{lemma}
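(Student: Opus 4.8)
The plan is to prove the cardinality bound by a simple charging argument against the total length $\len(\partial B)$ of the boundary walk. Let $v_0, v_1, \dots, v_{m-1}$ be the portals, listed in the cyclic order in which the procedure designates them along $\partial B$, and write $L = \len(\partial B)$. The first thing to record is that, by construction, for each $1 \le i \le m-1$ the vertex $v_i$ is chosen to be the \emph{first} vertex past $v_{i-1}$ with $\len(\partial B[v_{i-1}, v_i]) > L/\theta$; in particular each of the $m-1$ consecutive arcs $\partial B[v_0,v_1], \partial B[v_1,v_2], \dots, \partial B[v_{m-2}, v_{m-1}]$ has length strictly greater than $L/\theta$.

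The second step is to observe that these $m-1$ arcs are internally disjoint sub-paths of $\partial B$: as we traverse $\partial B$ starting from $v_0$ they occur one after another, and the traversal has not yet returned to $v_0$, since returning to $v_0$ is precisely the termination condition of the procedure. Hence their lengths sum to at most $L$. Combining this with the per-arc lower bound from the first step yields $(m-1)\,L/\theta < L$, i.e.\ $m - 1 < \theta$. Since $\theta$ may be assumed to be a positive integer (it is a parameter polynomial in $1/\eps$) and $m$ is an integer, we obtain $m \le \theta$, which is the claimed bound.

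I do not expect a genuine obstacle here; the only point needing a little care is the accounting at the final step of the procedure. One must make sure that the ``wrap-around'' arc $\partial B[v_{m-1}, v_0]$ — the arc that reaches $v_0$ again and triggers termination — is simply \emph{not} among the arcs whose length we bound from below, so that we are summing exactly $m-1$ arcs rather than $m$. Once that is pinned down, the inequality $(m-1)/\theta < 1$ is immediate and the lemma follows.
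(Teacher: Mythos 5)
Your charging argument is correct and is essentially the standard proof of this fact: the lemma is quoted from Borradaile \etal (Lemma~7.2 of \cite{BKM}), whose argument is exactly this observation that each portal after $v_0$ accounts for a boundary arc of length strictly greater than $\len(\partial B)/\theta$, and these arcs are disjoint and do not wrap past $v_0$. Your care about excluding the wrap-around arc and about $\theta$ being (taken to be) an integer is exactly the right bookkeeping, so nothing further is needed.
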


\noindent
\textbf{Portal-connected graph}.  For any subgraph $G''$ of the mortar
graph $MG$, we construct a planar graph $\script{B}^+(G'')$ as
follows. For each face $f$ of $G'$ corresponding to a brick $B$, we
embed a copy of $B$ inside the face $f$, and, for each portal $v$ of
$B$, we connect the copy of $v$ in the brick with the copy of $v$ on
$f$ using a zero-length edge. We refer to these zero-length edges as
{\em portal edges}, and we refer to $\script{B}^+(MG)$ as the {\em
  portal-connected graph}.  Finally, any new vertex receives penalty
zero (these vertices are copies of vertices of the mortar graph).

\begin{theorem} \label{thm:structureThm} 
  Let $F^*$ be an optimal Prize-Collecting Steiner forest in $G$. There
  exists a constant $\theta = \theta(\eps)$ depending polynomially on
  $1 / \eps$ such that, for any choice of portals satisfying the
  condition in Lemma~\ref{lem:portalCoverage}, the corresponding
  portal-connected graph $\script{B}^+(MG)$ contains a forest
  $\widehat{F}$ with the following properties:
  \begin{enumerate}
  \item $\len(\widehat{F}) \leq (1 + c_1\eps)\len(F^*) + c_2\eps \cdot
    \len(\Delta)$, where $c_1, c_2$ are absolute constants
  \item Any two vertices of $MG$ in the same component of $F^*$ are
    connected by $\widehat{F}$.
  \end{enumerate}
\end{theorem}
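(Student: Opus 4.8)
The plan is to follow the proof of the structure theorem of Borradaile, Klein and Mathieu~\cite{BKM}, adapted to forests. Fix the portal parameter $\theta = \theta(\eps)$, to be chosen polynomially large in $1/\eps$ at the end; the portals are given and satisfy the condition of Lemma~\ref{lem:portalCoverage}. Every edge of $G$ is either an edge of $MG$ or lies strictly inside exactly one brick, so $F^*$ decomposes into $F^*\cap MG$ together with, for each brick $B$, the subgraph $F^*_B$ of $F^*$ strictly inside $B$ (any boundary edge of $B$ used by $F^*$ lies in $MG$, hence is accounted for in $F^*\cap MG$). The forest $\widehat F$ I would build contains: (a) $F^*\cap MG$, which already lives in the mortar part of $\script{B}^+(MG)$; (b) all supercolumns, whose total length is at most $\eps\cdot\len(\Delta)$ by Lemma~\ref{lem:supercolumns2}; and (c) for each brick $B$, a cleaned-up copy of $F^*_B$ placed inside the embedded copy of $B$ in $\script{B}^+(MG)$, together with a few short connector paths routing it to portals.

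For part (c), I would apply Lemma~\ref{lem:bricks2} to $F := F^*_B$ in each brick $B$ (whose boundary $W_B\cup S_B\cup E_B\cup N_B$ has the form guaranteed by Lemma~\ref{lem:bricks}), obtaining a forest $\tilde F_B$ with: (i) any two vertices of $N_B\cup S_B$ connected in $F^*_B$ are connected in $\tilde F_B$; (ii) at most $\alpha(\eps)=o(\eps^{-5.5})$ joining vertices of $\tilde F_B$ with $N_B$ and $S_B$; and (iii) $\len(\tilde F_B)\le(1+c\eps)\len(F^*_B)$. Placing $\tilde F_B$ in the copy of $B$ and summing (iii) over bricks, then adding (a) and (b), gives $\len(\widehat F)\le (1+c\eps)\len(F^*) + \eps\cdot\len(\Delta) + (\text{connector cost})$. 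To bound the connector cost: $\tilde F_B$ meets $\partial B$ only at its $O(\alpha(\eps))$ joining vertices on $N_B\cup S_B$ (attachments to $W_B,E_B$ are handled via part (b), since each West/East boundary is a supercolumn that is already wholly in $\widehat F$ and whose two endpoints lie on $N_B\cup S_B$); for each such vertex $x$, Lemma~\ref{lem:portalCoverage} supplies a portal $y$ of $B$ with $\len(\partial B[x,y])\le\len(\partial B)/\theta$, and I add this subpath inside the copy of $B$ plus the zero-length portal edge at $y$. Since each edge of $MG$ bounds at most two bricks, $\sum_B\len(\partial B)\le 2\len(MG)=O(\len(\Delta)/\eps)$ by Lemma~\ref{lem:MG}, so the total connector cost is $O(\alpha(\eps)\len(\Delta)/(\eps\theta))$. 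Taking $\theta=\Theta(\alpha(\eps)/\eps^2)$, which is still polynomial in $1/\eps$ since $\alpha(\eps)=o(\eps^{-5.5})$, makes this at most $\eps\cdot\len(\Delta)$, yielding property~1 with $c_1=c$ and $c_2$ an absolute constant.

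For property~2, I would take two vertices $u,v$ of $MG$ lying in one component of $F^*$, walk the $u$--$v$ path in $F^*$, and cut it at the mortar vertices it visits. Each maximal piece either lies in $MG$, hence in $\widehat F$ by (a), or lies strictly inside a brick $B$ and joins two vertices $a,b\in\partial B$; if $a,b\in N_B\cup S_B$ then property~(i) above together with the portal rerouting connects them in $\widehat F$ through the copy of $B$ and the portal edges, and if $a$ or $b$ lies on a West or East supercolumn then that supercolumn — entirely in $\widehat F$ by (b) — carries the connection to $N_B\cup S_B$, where the previous case applies. Concatenating the pieces, $u$ and $v$ are connected in $\widehat F$.

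The main obstacle is the middle part of the second paragraph: ensuring that every way in which $F^*$ can attach to a brick boundary can be rerouted to a portal at global cost $O(\eps)\len(\Delta)$. This rests on Lemma~\ref{lem:bricks2} to cap the number of attachment points on $N_B$ and $S_B$, and on a clean treatment of attachments on the $W_B,E_B$ supercolumns — the point where the forest case departs slightly from the single-tree case of~\cite{BKM}, since distinct components of $F^*$ may touch the mortar graph in more places than one Steiner tree would. Checking that the resulting choice of $\theta$, which must simultaneously absorb $\alpha(\eps)$ and the factor $1/\eps$ coming from $\sum_B\len(\partial B)$, can still be kept polynomial in $1/\eps$ is the remaining bookkeeping, and is precisely where the bounds $\alpha(\eps)=o(\eps^{-5.5})$ (Lemma~\ref{lem:bricks2}) and $\len(MG)=O(\len(\Delta)/\eps)$ (Lemma~\ref{lem:MG}) are used.
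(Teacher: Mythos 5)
Your construction follows essentially the same route as the paper's proof: include the supercolumns (the east/west brick boundaries) at cost $\eps\cdot\len(\Delta)$ via Lemma~\ref{lem:supercolumns2}, replace the portion of $F^*$ strictly inside each brick by the forest of Lemma~\ref{lem:bricks2}, and reroute the at most $\alpha(\eps)$ joining vertices per brick to portals using Lemma~\ref{lem:portalCoverage}, bounding the total via $\sum_B \len(\partial B)\le 2\len(MG)=O(\len(\Delta)/\eps)$ (Lemma~\ref{lem:MG}); your choice $\theta=\Theta(\alpha(\eps)\eps^{-2})$ is exactly the paper's $\theta=16\eps^{-2}\alpha(\eps)$, and your treatment of attachments to $W_B,E_B$ by including all supercolumns is also what the paper does.

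There is, however, a concrete gap in the splicing step for property~2. In $\script{B}^+(MG)$ every boundary vertex $v$ of a brick $B$ has two copies, one on the mortar face and one in the embedded copy of $B$, and these are joined only through portal edges. Your connectors attach the brick-side copy of a joining vertex $v$ to the \emph{mortar copy of its nearest portal} $p_v$ (subpath of $\partial B$ inside the brick copy plus the zero-length portal edge). But when you cut the $u$--$v$ path of $F^*$ at mortar vertices, the adjacent pieces lying in $MG$ end at the \emph{mortar copy of $v$ itself}, not at the mortar copy of $p_v$, and nothing in your $\widehat F$ connects these two mortar vertices: the stretch of $\partial B$ between $v$ and $p_v$ need not belong to $F^*\cap MG$ nor to any supercolumn. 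So the concatenation breaks at every mortar/brick transition. The paper's proof adds, for each joining vertex $v$, \emph{both} the brick-side path $P_v$ and the corresponding mortar path $P'_v$ from $p_v$ back to $v$ (plus the portal edge), and the chain mortar-$v$, $P'_v$, portal edge, $P_v$, brick-$v$ is what makes the pieces splice. The repair costs only a factor of $2$ in the connector bound, absorbed by the same polynomial choice of $\theta$, so property~1 survives with slightly larger constants; but as written your argument for property~2 does not go through.
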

\begin{proof}
  Let $F^*$ be an optimal tree to the Prize-Collecting Steiner forest
  problem in $G$, and let $S^*$ be the set of all vertices of the
  mortar graph $MG$ that are in $F^*$.  We will follow the proof of
  Theorem~{3.2} in Borradaile \etal \cite{BKM}; there are two main
  steps. First, we transform $F^*$ into a solution in $G$ that only
  has a few joining vertices in each brick. To convert this into the
  desired forest $\widehat{F}$ in the portal-connected graph
  $\script{B}^+(MG)$, we simply add edges connecting the joining
  vertices in each brick to the nearest portals. As there are not many
  joining vertices, we can connect them to the portals without
  significantly increasing the cost. We describe the process
  completely below.

  \bigskip
  First, we add the east and west boundaries of each brick; let $F_1$
  be the union of $F^*$ with the east and west boundaries ($E_B$ and
  $W_B$) for each brick $B$. By Lemma~\ref{lem:supercolumns2},
  $\len(F_1) \leq \len(F^*) + \eps \cdot \len(\Delta)$.
  Clearly, $F_1$ connects all vertices of $S^*$ connected by $F^*$.

  Next, we reduce the number of joining vertices on the north and
  south boundaries of each brick. Let $F_{1|B}$ be the subgraph of $F_1$
  that is strictly embedded in a brick $B$ of $G$. We replace $F_{1|B}$ with
  the forest $F_{2|B}$ that is guaranteed by Lemma~\ref{lem:bricks2}. We
  have
  $$\len(F_{2|B}) \leq (1 + c_1\eps) \len(F_{1|B})$$
  Let $N$ and $S$ denote the north and south boundaries of the
  brick. Since any two vertices of $N \cup S$ that are connected in
  $F_{1|B}$ are also connected in $F_{2|B}$, it follows that $F_{2|B}$
  connects all vertices of $S^*$ connected by $F_{1|B}$.

  We apply this procedure for each brick in order to get a subgraph
  $F_2$. Since the bricks are disjoint,
  \begin{eqnarray*}
    \len(F_2) &\leq& (1 + c_1\eps) \len(F_1)\\
    &\leq& (1 + c_1 \eps) \left(\len(F^*) + \eps \cdot \len(\Delta)\right)\\
    &=& (1+c_1\eps) \len(F^*) + (\eps + c_1 \eps^2) \len(\Delta)\\
  \end{eqnarray*}

  \vspace{-8mm}
  Moreover, $F_2$ connects all vertices of $S^*$ connected by $F_1$.

  \bigskip Now we convert the forest $F_2 \subseteq G$ to a subgraph
  of $\script{B}^+(MG)$.  Note that every edge of $G$ has at least one
  corresponding edge in $\script{B}^+(MG)$ (an edge $e$ of $MG$ has
  three copies: one mortar edge, and one inside each of the bricks
  corresponding to the two faces of $G'$ incident to $e$). For each
  edge $e$ of $F_2$, we select a corresponding edge of
  $\script{B}^+(MG)$ as follows. If $e$ is an edge of $MG$, we select
  the corresponding mortar edge of $\script{B}^+(MG)$.
  Otherwise, we select the unique edge corresponding to $e$ in
  $\script{B}^+(MG)$. Let $F_3$ denote the resulting subgraph of
  $\script{B}^+(MG)$. We have:
  $$\len(F_3) = \len(F_2) 
  \leq (1+c_1\eps) \len(F^*) + (\eps + c_1 \eps^2) \len(\Delta).$$
  
  Since $F_3$ does not connect the connected components of $F_2$, we
  connect it using portal edges and mortar edges as follows. Consider
  a brick $B$, and let $V_B$ denote the set of joining vertices of
  $F_3$ with $N_B \cup S_B$. For each vertex $v \in V_B$, let $p_v$ be
  the portal vertex that is closest to $v$, let $P_v$ be the shortest
  $v$-to-$p_v$ path along $\partial B$, and let $P'_v$ be the
  corresponding path of mortar edges. Let $e_v$ be the portal edge
  corresponding to $p_v$. We add $P_v$, $P'_v$, and $e_v$ to $F_3$. We
  apply this procedure for each brick in order to get the subgraph
  $\widehat{F}$. First, we bound the length of $\widehat{F}$.
  \begin{eqnarray*}
    \len(\widehat{F}) &\leq& \len(F_3) + 
      \sum_{B} \sum_{v \in V_B} (\len(P_v) + \len(e_v) + \len(P'_v))\\
    &=& \len(F_3) + 2 \sum_{B} \sum_{v \in V_B} \len(P_v) 
      \qquad\qquad \mbox{[$\len(e) = 0$, $\len(P'_v) = \len(P_v)$]}\\
    &\leq& \len(F_3) + 
      2 \sum_{B} \sum_{v \in V_B} \len(\partial B) / \theta(\eps) 
      \qquad \qquad \mbox{[Lemma~\ref{lem:portalCoverage}]}\\
    &\leq& \len(F_3) + 2 \sum_{B} \alpha(\eps) \len(\partial B) /
      \theta(\eps) \qquad \qquad \mbox{[Lemma~\ref{lem:bricks2}]}\\
    &=& \len(F_3) + {2 \alpha(\eps) \over \theta(\eps)} 
      \cdot \sum_{B} \len(\delta(B))\\
    &\leq& \len(F_3) + {4 \alpha(\eps) \over \theta(\eps)} \cdot \len(MG)\\
    &\leq& \len(F_3) + {16 \alpha(\eps) \over \eps \theta(\eps)} 
      \cdot \len(\Delta) \qquad \qquad \mbox{[Lemma~\ref{lem:MG}]}
  \end{eqnarray*}
  Setting $\theta(\eps) = 16 \eps^{-2} \alpha(\eps)$ gives us
  \begin{eqnarray*}
    \len(\widehat{F}) &\leq& \len(F_3) + \eps \cdot \len(\Delta)\\
    &\leq& (1+c_1\eps)\len(F^*) + (2 + c_1\eps) \eps \cdot \len(\Delta)\\
  \end{eqnarray*}

  \vspace{-6mm}
  It remains only to show that $\widehat{F}$ connects any two vertices
  of $S^*$ connected by $F_2$, and hence by $F^*$.  Let $x$ and $y$ be
  two vertices of $S^*$ connected by $F_2$ via an $x - y$ path $P$ in
  $F_2$. The definition of $F_3$ breaks $P$ into disjoint paths.
  Consider one such path $P_i$ that is not a subpath of $MG$. By
  construction, the endpoints of $P_i$ are joining vertices. When we
  construct $\widehat{F}$ from $F_3$, we connect the endpoints of
  $P_i$ to their corresponding vertices on $MG$ via portal edges.
  Therefore there is an $x-y$ path in $\widehat{F}$.
\end{proof}

\bigskip \noindent \textbf{Spanner}.  For each brick $B$ and for each
subset $X$ of the portals of $B$, we find an optimal Steiner Tree for
$B$ and $X$. The {\em spanner} $H$ consists of all edges of these
Steiner Trees together with the edges of the mortar graph $MG$.

\begin{lemma} \label{lem:spanner1}
  The total length of the spanner $H$ is at most $(1 + 2^{1 +
    \theta}) (\frac{3}{\eps} + \eps) \len(\Delta)$.
\end{lemma}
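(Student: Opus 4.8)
The plan is to bound $\len(H)$ by decomposing the spanner into its two constituent parts: the mortar graph $MG$, and the union over all bricks $B$ and all portal-subsets $X$ of an optimal Steiner tree connecting $X$ inside $B$. The mortar part is immediately controlled by Lemma~\ref{lem:MG}, which gives $\len(MG) \le (\tfrac{3}{\eps} + \eps)\len(\Delta)$, so the real work is bounding the total length contributed by the brick Steiner trees, and the goal is to show this is at most $2^{1+\theta}\len(MG)$.

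First I would bound the contribution of a single brick $B$. By Lemma~\ref{lem:portalCardinality}, $B$ has at most $\theta$ portals, so there are at most $2^{\theta}$ subsets $X$ of portals, and hence at most $2^{\theta}$ Steiner trees associated with $B$. Next, for each such subset $X$, the optimal Steiner tree for $B$ and $X$ has length at most $\len(\partial B)$: all portals lie on the boundary cycle $\partial B$, so the boundary itself (minus one edge) is a tree of $B$ connecting every portal, and an optimal Steiner tree can only be shorter. Therefore the total length of all Steiner trees associated with $B$ is at most $2^{\theta}\len(\partial B)$.

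Then I would sum over all bricks. Each brick corresponds to a face of the mortar graph $MG$, and each edge of $MG$ lies on the boundary of at most two faces, so $\sum_{B}\len(\partial B) \le 2\len(MG)$. Combining, the total length of all brick Steiner trees is at most $2^{\theta}\cdot 2\len(MG) = 2^{1+\theta}\len(MG)$. Adding the mortar edges themselves gives $\len(H) \le \len(MG) + 2^{1+\theta}\len(MG) = (1 + 2^{1+\theta})\len(MG)$, and applying Lemma~\ref{lem:MG} yields the claimed bound $(1 + 2^{1+\theta})(\tfrac{3}{\eps} + \eps)\len(\Delta)$.

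The main thing to get right — though it is not hard — is the double-counting step $\sum_{B}\len(\partial B)\le 2\len(MG)$, i.e.\ being careful that bricks correspond to distinct faces of $MG$ so that each mortar edge is charged at most twice; the bound $\len(\text{opt Steiner tree for }B,X)\le\len(\partial B)$ likewise just needs the observation that the portals sit on the brick boundary. No delicate estimate is required, so I would expect this lemma to be routine once these two observations are stated.
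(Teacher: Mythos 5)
Your proof is correct and takes essentially the same route as the paper: the paper simply cites Lemma~4.1 of \cite{BKM} for the bound of $2^{1+\theta}\len(MG)$ on the total length of the brick Steiner trees, and your per-brick argument (at most $\theta$ portals, hence at most $2^{\theta}$ subsets, each tree of length at most $\len(\partial B)$ since the portals lie on the brick boundary, and $\sum_B \len(\partial B) \le 2\len(MG)$ by double counting) is exactly the content of that cited lemma. Adding $\len(MG)$ itself and applying Lemma~\ref{lem:MG} is the same final step as in the paper.
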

\begin{proof}
  As shown in Lemma~{4.1} of \cite{BKM}, the total length of all
  Steiner trees is at most $2^{1 + \theta} \cdot \len(MG)$. Thus the
  length of $H$ is at most $(1 + 2^{1 + \theta}) \cdot
  \len(MG)$. Lemma~\ref{lem:MG} completes the proof.
\end{proof}

\begin{lemma} \label{lem:spanner2} The spanner $H$ contains a
  prize-collecting Steiner forest $F'$ of cost at most $(1 + c_1\eps)
  \len(F^*) + c_2 \eps \cdot \len(\Delta) + \pen(F^*)$, for some
  absolute constants $c_1, c_2$.
\end{lemma}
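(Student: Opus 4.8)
The plan is to start from the forest $\widehat{F} \subseteq \script{B}^+(MG)$ guaranteed by Theorem~\ref{thm:structureThm}, and to rewrite the portion of $\widehat{F}$ lying inside each brick copy using the optimal Steiner trees that were pre-computed to form the spanner. The key structural observation is that $\widehat{F}$ can enter the copy of a brick $B$ in $\script{B}^+(MG)$ only through portal edges; hence, writing $\widehat{F}_B$ for the restriction of $\widehat{F}$ to the copy of $B$, every sub-path of $\widehat{F}$ that passes through the interior of $B$ lies within $\widehat{F}_B$ and has both endpoints at portals of $B$ (the mortar-side endpoints of portal edges selected in $\widehat{F}$). For each connected component $C$ of $\widehat{F}_B$, I would let $X_C$ denote the set of portals of $B$ contained in $C$ whose portal edge belongs to $\widehat{F}$.

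Next I would build a subgraph by keeping all mortar and portal edges of $\widehat{F}$ and, inside each brick copy $B$, replacing $\widehat{F}_B$ by the union, over those components $C$ with $X_C \neq \emptyset$, of the optimal Steiner tree of $B$ with terminal set $X_C$; by construction all of these trees lie in $H$. Since $C$ is itself a connected subgraph of $B$ containing $X_C$, the optimal Steiner tree for $(B, X_C)$ has length at most $\len(C)$, and summing over all components of all bricks gives a subgraph of length at most $\len(\widehat{F})$. Projecting this subgraph back into $G$ — contracting the zero-length portal edges, which merely identify the two copies of each portal — yields a subgraph $F'$ of $H$ whose length has not increased and whose connectivity has not decreased; if cycles appear in the process, I would pass to a spanning forest, so that $F'$ is genuinely a forest with $\len(F') \le \len(\widehat{F})$.

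It then remains to compare $F'$ with $F^*$. By Proposition~\ref{prop:MG2} every terminal lies in $MG$, and by Theorem~\ref{thm:structureThm}(2) any two vertices of $MG$ in a common component of $F^*$ are connected by $\widehat{F}$. A path in $\widehat{F}$ between two such vertices decomposes into mortar/portal pieces, which survive in $F'$, and brick-interior pieces, each of which runs between two portals of some $X_C$ and hence stays connected in the replacement Steiner tree for $(B, X_C)$. Therefore $F'$ connects every terminal pair connected by $F^*$, so the set of pairs $F'$ leaves unconnected is contained in the corresponding set for $F^*$, giving $\pen(F') \le \pen(F^*)$. Combining this with $\len(F') \le \len(\widehat{F}) \le (1+c_1\eps)\len(F^*) + c_2\eps\cdot\len(\Delta)$ from Theorem~\ref{thm:structureThm}(1) yields $\cost(F') = \len(F') + \pen(F') \le (1+c_1\eps)\len(F^*) + c_2\eps\cdot\len(\Delta) + \pen(F^*)$, as required, with the same constants $c_1,c_2$ as in Theorem~\ref{thm:structureThm}.

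The main obstacle — really the only place care is needed — is the brick-rewriting step: one must verify that swapping $\widehat{F}_B$ for spanner Steiner trees never severs a connection the forest depends on. This goes through precisely because $\widehat{F}$ meets a brick interior only at portals, so the sole role of a brick copy is to join certain subsets of its portals, and those subsets are exactly the index sets $X$ for which the spanner stores an optimal Steiner tree of $B$. The length bookkeeping and the penalty comparison are then routine consequences of Theorem~\ref{thm:structureThm} and Proposition~\ref{prop:MG2}.
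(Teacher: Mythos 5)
Your proposal is correct and follows essentially the same route as the paper: take the forest $\widehat{F}$ from Theorem~\ref{thm:structureThm}, replace each connected component of its intersection with a brick by the optimal Steiner tree on the relevant portal set (which is stored in the spanner and no longer than the component it replaces), and then observe that connectivity among mortar vertices---hence among terminals, by Proposition~\ref{prop:MG2}---is preserved, giving $\pen \le \pen(F^*)$ and the stated length bound. Your extra care about projecting back to $G$, contracting portal edges, and pruning cycles only makes explicit steps the paper leaves implicit.
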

\begin{proof}
  We will follow the proof of Lemma~{4.2} in Borradaile \etal
  \cite{BKM}. Let $F^*$ be an optimal forest in $G$ and let
  $\widehat{F}$ be the forest guaranteed by
  Theorem~\ref{thm:structureThm}.  For each brick $B$ and for each
  connected component $K$ of the intersection of $\widehat{F}$ with
  $B$, let $X$ be the set of portals of $B$ belonging to $K$; we
  replace $K$ with the optimal Steiner Tree for $B$ and $X$ contained
  in the spanner.  Let $\tilde{F}$ be the subgraph resulting from all
  these replacements. We have
  \[\len(\tilde{F}) \leq \len(\widehat{F}) \leq
  (1 + c_1 \eps) \len(F^*) + c_2\eps \cdot \len(\Delta)\]

  Moreover, since $\widehat{F}$ connects all vertices of $MG$
  connected by $F^*$, and all terminals are vertices of $MG$ as they
  are connected by $T$, it follows that $\tilde{F}$ also connects all
  terminals connected by $F^*$. Hence, $\pen(\tilde{F}) \le
  \pen(F^*)$; as $\cost(\tilde{F}) = \len(\tilde{F}) +
  \pen(\tilde{F})$, we obtain the lemma.
\end{proof}

Theorem~\ref{thm:spanner} now follows almost directly from
Lemmas~\ref{lem:spanner1} and \ref{lem:spanner2}, as $\len(\Delta) = 2
\cdot \len(T)$; simply construct the spanner $H$ using a modified parameter
$\eps' = {\eps \over \max\{c_1, 2c_2\}}$.

\section{Omitted Proofs from Section~\ref{subsec:pcstTreewidth}}
\label{app:dp-proofs}
\begin{proofof}{Equation~\ref{btw:eq-join}}
  Let $(\alpha_1, \alpha_2)$ be a pair that minimizes the right hand
  side of Equation~\ref{btw:eq-join}.  Let $F_{\ell}$ be an optimal
  solution for the subproblem $(i_{\ell}, H, \alpha_{\ell})$, where
  $\ell = 1, 2$. Let $F = F_1 \cup F_2$. Now we claim that $F$ is a
  solution for the subproblem $(i, H, \alpha)$, i.e., it satisfies the
  conditions $(c_1) - (c_3)$.  Clearly, $F$ satisfies $(c_1)$ and
  $(c_2)$. Now let $u$ and $v$ be two vertices in the same part of
  $\alpha$. Since $\alpha = \alpha_1 \vee \alpha_2$, $u$ and $v$ are
  in the same part of $\alpha_{\ell}$ for some $\ell$, and thus $u$
  and $v$ are in the same connected component of $F_{\ell}$.  Thus the
  partition of $V(H)$ induced by $F$ is coarser than $\alpha$. Since
  $\alpha$ is coarser than $\alpha_1$ and $\alpha_2$, it follows that
  $\alpha$ is coarser than the partition of $V(H)$ induced by $F$ as
  well.  Therefore $F$ satisfies $(c_3)$. Since $E(H) \subseteq E(F_1)
  \cap E(F_2)$,
  $$\len(F) \leq \len(F_1) + \len(F_2) - \len(H)$$
  Since $V(F) = V(F_1) \cup V(F_2)$ and $V(G_{i}) = V(G_{i_1}) \cup
  V(G_{i_2})$, we have
  $$\pen(V(G_i) - V(F)) \leq \pen(V(G_{i_1}) - V(F_1)) +
  \pen(V(G_{i_2}) - V(F_2))$$
  Thus
  $$c(i, H, \alpha) \leq \cost(F) \leq \cost(F_1) + \cost(F_2) -
  \len(H) = c(i_1, H, \alpha_1) + c(i_2, H, \alpha_2) - \len(H)$$
  Conversely, let $F$ be an optimal solution for the subproblem $(i,
  H, \alpha)$. Let $F_{\ell}$ be the restriction of $F$ to
  $G_{i_{\ell}}$, where $\ell = 1, 2$. Let $\alpha_{\ell}$ be the
  partition induced by the connected components of $F_{\ell}$. Now we
  claim that $F_{\ell}$ is a solution for the subproblem $(i_{\ell},
  H, \alpha_{\ell})$: by construction, $F_{\ell}$ satisfies $(c_1) -
  (c_3)$. Since $F_1$ and $F_2$ intersect only at $V(H)$, $\alpha =
  \alpha_1 \vee \alpha_2$. Therefore the right hand side of the
  equation is at most
  $$\cost(F_1) + \cost(F_2)  - \len(H) \leq \cost(F) = c(i, H,
  \alpha)$$
  which completes the proof.
\end{proofof}

~\\
\begin{proofof}{Equation~\ref{btw:eq-forget}}
  Suppose the minimum of the right hand side of the equation is
  achieved by an optimal solution $F'$ for the subproblem $(j, H,
  \alpha)$. Since $v$ is not in $H$, $F'$ does not contain $v$. Thus
  $F'$ is a solution for the subproblem $(i, H, \alpha)$, and
  therefore $c(i, H, \alpha)$ is at most the right hand side of
  Equation~\ref{btw:eq-forget}.  Therefore we may assume that the
  minimum of the right hand side is achieved by an optimal solution
  $F'_S$ for the subproblem $(j, H \cup \{v\} \cup E(v, S), \alpha(v,
  S))$. Let $F = F'_S - E(v, S) - \{v\}$.  Now we claim that $F$ is a
  solution for the subproblem $(i, H, \alpha)$. By construction, $F$
  satisfies $(c_1)$ and $(c_2)$. Therefore it suffices to show that
  $F$ satisfies $(c_3)$.

  Note that we may assume without loss of generality that $F'_S$ is a
  forest. Now suppose that $F'_S$ has an edge $e$ whose endpoints are
  in different parts of $\alpha$. Since $e$ is not incident to $v$, it
  follows that $e$ is in $H$. But the partition of $V(H)$ induced by
  the connected components of $H$ is a refinement of $\alpha$, which
  is a contradiction.

  Let $u$ and $w$ be two vertices in the same connected component of
  $F$. It follows that the unique path of $F'_S$ between $u$ and $w$
  does not pass through $v$, and hence $u$ and $w$ are in the same
  part of $\alpha$ (since otherwise the path between $u$ and $w$ has
  an edge with both endpoints in different parts of $\alpha$).
  Therefore the partition of $V(H)$ induced by the connected
  components of $F$ is a refinement of $\alpha$. Conversely, let $u$
  and $w$ be two vertices contained in the same part of $\alpha$.
  Since $\alpha(v, S)$ is coarser than $\alpha$, $u$ and $w$ are in
  the same connected component of $F'_S$. Let $P$ be the unique path
  in $F'_S$ between $u$ and $w$. If $P - v$ is a path, $u$ and $w$ are
  connected in $F$. Therefore we may assume that $v$ is an internal
  vertex of $P$. Let $u', w'$ be the two neighbors of $v$ on $P$,
  where $u'$ is on the subpath of $P$ from $u$ to $v$. Since there is
  a path between $u$ and $u'$ in $F$ (namely, the subpath of $P$ from
  $u$ to $u'$), it follows from the previous argument that $u$ and
  $u'$ are in the same part of $\alpha$. Similarly, $w$ and $w'$ are
  in the same part of $\alpha$. Therefore $S$ has two vertices in the
  same part of $\alpha$, which is a contradiction.  Thus $\alpha$ is a
  refinement of the partition of $V(H)$ induced by the connected
  components of $F$. It follows that $F$ satisfies $(c_3)$ as well,
  and hence $c(i, H, \alpha)$ is at most the right hand side of the
  equation.

  Conversely, let $F$ be an optimal solution for the subproblem $(i,
  H, \alpha)$. Since $F$ is also a solution for the subproblem $(j, H,
  \alpha)$, it follows that $c(i, H, \alpha)$ is at least the right
  hand side of the equation.
\end{proofof}

~\\
\begin{proofof}{Equation~\ref{btw:eq-introduce}}
  Suppose that $v$ is not in $H$. Let $F$ be an optimal solution for
  the subproblem $(i, H, \alpha)$. Since $v$ is not in $H$, $F$ is a
  solution for the subproblem $(j, H, \alpha)$ as well, of cost
  \begin{eqnarray*}
    \len(F) + \pen(V(G_j) - V(F)) &=& \len(F) + \pen(V(G_i) -
    V(F)) - \pen(v)\\
    &=&  c(i, H, \alpha) - \pen(v)
  \end{eqnarray*}
  Thus
  $$c(i, H, \alpha) \geq c(j, H, \alpha) + \pen(v)$$
  Conversely, let $F$ be an optimal solution for the subproblem $(j,
  H, \alpha)$. Then $F$ is a solution for $(i, H, \alpha)$ of cost
  \begin{eqnarray*}
    \len(F) + \pen(V(G_i) - V(F)) &=& \len(F) + \pen(V(G_j) -
    V(F)) + \pen(v)\\
    &=& c(j, H, \alpha) + \pen(v)
  \end{eqnarray*}
  Thus
  $$c(i, H, \alpha) \leq c(j, H, \alpha) + \pen(v)$$
  Therefore we may assume that $v$ is in $H$. Let $\alpha'$ be a
  partition of $V(H) - \{v\}$ satisfying the conditions above, and let
  $F'$ be an optimal solution for the subproblem $(j, H - v,
  \alpha')$. Let $F = F' \cup E(v, S) \cup \{v\}$, where $E(v, S)$ is
  the set of all edges of $H$ that are incident to $v$. Now we claim
  that $F$ is a solution for the subproblem $(i, H, \alpha)$.  By
  construction, $F$ satisfies $(c_1)$ and $(c_2)$. Therefore it
  suffices to verify that $F$ satisfies $(c_3)$.

  Let $u$ and $w$ be two vertices in the same connected component of
  $F$. Suppose that $u$ and $w$ are connected in $F'$. Then $u$ and
  $w$ are in the same part of $\alpha'$ and, since $\alpha$ is coarser
  than $\alpha'$, $u$ and $w$ are in the same part of
  $\alpha$. Therefore we may assume that $u$ and $w$ are not connected
  in $F'$. Thus $u$ and $w$ are in different parts of $\alpha'$, each
  of which contains a vertex in $S$. It follows that the two parts
  have merged into a single part of $\alpha'(v, S) = \alpha$, and
  hence $u$ and $w$ are in the same part of $\alpha$.  Conversely, let
  $u$ and $w$ be two vertices in the same part of $\alpha$. If $u$ and
  $w$ are in the same part of $\alpha'$, it follows that $u$ and $w$
  are connected in $F'$. Therefore we may assume that $u$ and $w$ are
  in different parts $P_1$ and $P_2$ of $\alpha'$, each of which
  contains a vertex of $S$. Let $u'$ and $w'$ be the two vertices of
  $P_1 \cap S$, $P_2 \cap S$. Then there exists a path in $F'$ from
  $u$ to $u'$, and a path from $w$ to $w'$. These two paths together
  with the edges $u'v$, $vw'$ form a connected subgraph of $F$. It
  follows that $u$ and $w$ are connected in $F$, and hence $F$
  satisfies $(c_3)$.

  We have
  $$\len(F) = \len(F') + \sum_{uv \in G} \len(uv)$$
  Since $V(F) = V(F') \cup \{v\}$ and $V(G_i) = V(G_j) \cup \{v\}$,
  $$\pen(V(G_i) - V(F)) = \pen(V(G_j) - V(F'))$$
  Thus
  $$c(i, H, \alpha) \leq c(j, H - v, \alpha') + \sum_{uv \in H}
  \len(uv)$$
  Conversely, let $F$ be an optimal solution for the subproblem $(i,
  H, \alpha)$. Without loss of generality, $F$ is a forest. Let $F' =
  F - v$, and let $\alpha'$ be the partition of $V(H) - \{v\}$ induced
  by $F'$. Since $F$ is a forest, $v$ has at most one neighbor in each
  part of $\alpha'$. Now we claim that $\alpha'(v, S) = \alpha$. Let
  $T$ be any connected component of $F$ that does not contain $v$. It
  follows that $T$ is a connected component of $F'$ as well. Since the
  partition of $V(H)$ induced by the connected components of $F$ is
  equal to $\alpha$, $\alpha'$ contains each part of $\alpha$ that
  does not intersect $S$. Now consider the connected component $T$ of
  $F$ that contains $v$, and let $T_1, \cdots, T_{\ell}$ be the
  connected components of $T - v$. Since each $T_j$ is a connected
  component of $\alpha'$, it follows that the part of $\alpha'(v, S)$
  containing $v$ can be obtained by merging the parts of $\alpha'$
  induced by $T_1,\cdots, T_{\ell}$ into a single part, and adding $v$
  to it. Thus $\alpha'(v, S) = \alpha$.

  Now we claim that $F'$ is a solution for the subproblem $(j, H - v,
  \alpha')$. By construction, $F'$ satisfies $(c_1)$ and $(c_2)$.
  Additionally, it follows from the definition of $\alpha'$ that $F'$
  satisfies $(c_3)$. We have
  $$\len(F) = \len(F') + \sum_{uv \in H} \len(uv)$$
  As before,
  $$\pen(V(G_i) - V(F)) = \pen(V(G_j) - V(F'))$$
  Thus
  $$c(i, H, \alpha) \geq c(j, H - v, \alpha') + \sum_{uv \in H} \len(uv)$$
\end{proofof}

\end{document}